\newtheorem{theorem}{Theorem}[section]
\newtheorem{lemma}[theorem]{Lemma}
\newtheorem{open}[theorem]{Open Problem}
\newtheorem{observation}[theorem]{Observation}
\newtheorem{proposition}[theorem]{Proposition}
\newtheorem{corollary}[theorem]{Corollary}
\newcommand{\Gmed}{\ensuremath{G_\mathrm{med}}\xspace}
\newcommand{\K}{\ensuremath{\mathcal{K}}}
\newcommand{\xydist}[3]{d_{#1}(#2,#3)}
\definecolor{darkgreen}{rgb}{0,0.53,0.24}
\definecolor{plum}{RGB}{221,160,221}
\definecolor{lgray}{RGB}{152, 175, 199}
\definecolor{cyan}{RGB}{0,224,224}
\definecolor{lbl}{RGB}{48, 144, 199}
\definecolor{yl}{RGB}{255,204,0}
\definecolor{sr}{rgb}{0.55,0.2,0.25}
\definecolor{ci}{RGB}{46,189,78}
\definecolor{co}{RGB}{11,83,69}
\definecolor{lbl}{RGB}{130,202,255}
\definecolor{lgray}{RGB}{152, 175, 199}
\begin{document}


\doi{}
\Issue{0}{0}{0}{0}{0} 
\HeadingAuthor{Evans et al.} 
\HeadingTitle{Representing Graphs and Hypergraphs  by Polygons in 3D}
\title{Representing Graphs and Hypergraphs by~Touching Polygons in 3D}

\Ack{A preliminary version of this work appeared in Proc.\ 27th Int.\
  Symp.\ Graph Drawing \& Network Vis.\ (GD 2019) 
  \cite{erssw-rghtp-gd2019}.  W.E.~and N.S.~were funded by an NSERC
  Discovery grant and in part by the Institute for Computing,
  Information and Cognitive Systems (ICICS) at UBC.  P.Rz.~was
  supported by the ERC starting grant CUTACOMBS (no.~714704).
  A.W.~was funded by the German Research Foundation (DFG) under grant
  406987503 (WO 758/10-1).  \mbox{C.-S.Sh.} was supported by the
  National Research Foundation of Korea (NRF) grant funded by the
  Korea government (MSIT) (no.~2019R1F1A1058963).}

\author[first]{William Evans}{will@cs.ubc.ca}
\author[second,third]{Pawe\l~Rz\k{a}\.zewski}{p.rzazewski@mini.pw.edu.pl}
\author[first]{Noushin~Saeedi}{noushins@cs.ubc.ca}
\author[forth]{Chan-Su~Shin}{cssin@hufs.ac.kr}
\author[fifth]{Alexander Wolff}{https://orcid.org/0000-0001-5872-718X}

\affiliation[first]{University of British Columbia, Vancouver, Canada}
\affiliation[second]{Warsaw University of Technology, \\ Faculty of
  Mathematics and Information Science, Warszawa, Poland}
\affiliation[third]{Institute of Informatics, University of Warsaw,
  Warszawa, Poland}
\affiliation[forth]{Hankuk University of Foreign Studies, Yongin,
  Republic of Korea}
\affiliation[fifth]{Universit\"at W\"urzburg, W\"urzburg, Germany}


\submitted{March 2020}%
\reviewed{}%
\revised{}%
\reviewed{}%
\revised{}%
\accepted{}%
\final{}%
\published{}%
\type{Regular paper}%
\editor{}%

\maketitle

\vspace*{-2ex}

\centerline{\textit{Dedicated to Honza Kratochv\'il on his 60th birthday.}}

\vspace*{2ex}


\begin{abstract}
  Contact representations of graphs have a long history.
  Most research has focused on problems in 2D, but 3D contact
  representations have also been investigated, mostly concerning
  fully-dimensional geometric objects such as spheres or cubes.  In
  this paper we study contact representations with convex polygons
  in~3D.  We show that every graph admits such a representation.
  Since our representations use super-polynomial coordinates, we also
  construct representations on grids of polynomial size for specific
  graph classes (bipartite, subcubic).  For hypergraphs, we represent
  their duals, that is, each vertex is represented by a point and each
  edge by a polygon.  We show that even regular and quite small
  hypergraphs do not admit such representations.  On the other hand,
  the two smallest Steiner triple systems can be represented.
\end{abstract}

\Body 

\section{Introduction}

Representing graphs as the contact of geometric objects has been an
area of active research for many years (see Hlin\v{e}n\'{y} and
Kratochv\'{i}l's survey~\cite{HK01} and Alam's thesis~\cite{AlamThesis15}).
Most of this work concerns representations in 2D, though
there has been some interest in three-dimensional representation as well
\cite{Thomassen86,befhkllrw-rgtc-GD12,aekptu-crg3d-WADS15,FF11,akk-cgcpb-SOFSEM16}.
Representations in 3D typically use 3D geometric objects
that touch properly, i.e., their intersection is a positive area 2D
face. 
In contrast, our main focus is on contact representation of graphs and
hypergraphs using non-intersecting (open, ``filled'') planar polygons
in 3D.
Two polygons are in \emph{contact} if they share a corner point.
Note that two triangles that share two corner points do not intersect and
a triangle and rectangle that share two corners, even diagonally
opposite ones, also do not intersect.
However, no polygon contains a corner of another except at its own corner.
A \emph{contact representation of a graph in 3D} is a set of
non-intersecting polygons in 3D that represent vertices. Two polygons share a 
corner point if and only if they represent adjacent vertices and
each corner point corresponds to a distinct edge.
We can see a contact representation of a graph $G = (V,E)$ as a certain drawing of its 
\emph{dual hypergraph} $H_G=(E,\{E(v)\mid v \in V\})$ which has a vertex for every edge
of~$G$, and a hyperedge for every vertex~$v$ of~$G$, namely the
set~$E(v)$ of edges incident to~$v$. 
We extend this idea to arbitrary hypergraphs: A \emph{non-crossing drawing of a hypergraph in 3D}
is a set of non-intersecting polygons in 3D that represent \emph{edges}. Two
polygons share a corner point if and only if they represent edges that
contain the same vertex and each corner point corresponds to a
distinct vertex.
It is straightforward to observe that the set of contact representations of a graph $G$ is the same as
the set of non-crossing drawings of  $H_G$.

Many people have studied ways to represent hypergraphs
geometrically~\cite{jp-hpcdvd-JGT87,bkmsv-psh-JGAA11,bcps-pbsh-JDA12},
perhaps starting with Zykov~\cite{z-h-UMN74}.  A natural motivation of
this line of research was to find a nice way to represent
combinatorial configurations~\cite{g-dc-GD95} such as Steiner systems
(for an example, see Fig.~\ref{fig:fano}).  The main focus in
representing hypergraphs, however, was on drawings in the plane.
By using polygons to represent hyperedges in 3D, we gain some
additional flexibility though still not all hypergraphs can be
realized.  
Our work is related to Carmesin's work~\cite{Carmesin19} on
a Kuratowski-type characterization of 2D simplicial complexes (sets
composed of points, line segments, and triangles)
that have an embedding in 3-space.
Our representations are sets of planar polygons (not just triangles)
that arise from hypergraphs.
Thus they are less expressive than Carmesin's topological 2D
simplicial complexes and are more restricted.
In particular, if two hyperedges share three vertices, the hyperedges
must be coplanar in our representation.

Our work is also related to that of Ossona de
Mendez~\cite{o-rp-JGAA02}.  He showed that a hypergraph whose
vertex--hyperedge inclusion order has poset dimension~$d$ can be
embedded into~$\mathbb{R}^{d-1}$ such that every vertex corresponds to
a unique point in~$\mathbb{R}^{d-1}$ and every hyperedge corresponds
to the convex hull of its vertices.  The embedding ensures that the
image of a hyperedge does not contain the image of a vertex and, for
any two hyperedges~$e$ and~$e'$, the convex hulls of $e\setminus e'$
and of $e' \setminus e$ don't intersect.  In particular, the images of
disjoint hyperedges are disjoint.  Note that both Ossona de Mendez and
we use triangles to represent hyperedges of size~3, but for larger
hyperedges, he uses higher-dimensional convex subspaces.
Note also that the method of Ossona de Mendez may insist on a higher
dimension than actually needed.  For example, every graph (seen as a
2-uniform hypergraph) can be drawn with non-intersecting 
straight-line segments in 3D, but the
vertex--hyperedge inclusion order of~$K_{13}$ has poset
dimension~5~\cite{HOSTEN99}, so the method of Ossona de Mendez
needs~4D for a straight-line drawing of~$K_{13}$.

\paragraph{Our contribution.}
All of our representations in this paper use convex polygons while our
proofs of non-representability hold even permitting non-convex polygons.
We first show that recognizing segment graphs in 3D is
$\exists\mathbb{R}$-complete.

We show that every graph on $n$ vertices with minimum vertex-degree 3
has a contact representation by convex polygons in 3D, though the
volume of the drawing using integer coordinates is at least
exponential in~$n$; see Section~\ref{sec:graphs}.

For some graph classes, we give 3D drawing algorithms which require
polynomial volume.  Table~\ref{tab:gv} summarizes our results.
When we specify the
volume of the drawing, we take the product of the number of grid lines in
each dimension (rather than the volume of a bounding box), so that a
drawing in the xy-plane has non-zero volume.
Some graphs, such as the squares of even cycles, have
particularly nice representations using only unit squares; see
Fig.~\ref{fig:c2}\subref{fig:c2_67}.

\begin{table}[tb]
  \centering
  \caption{Required grid volume and running times of our algorithms for
    drawing $n$-vertex graphs of certain graph classes in 3D}
  \label{tab:gv}
  \smallskip
  \begin{tabular}{l@{\qquad}c@{\quad}c@{\quad}c@{\quad}c@{\quad}c}
    \toprule
    Graph & general & bipartite & 1-plane & 2-edge-conn. & subcubic \\
    class &         &	      &	cubic   & cubic & \\
    \midrule
    Volume    & super-poly & $O(n^4)$ & $O(n^2)$ & $O(n^2)$ & $O(n^3)$ \\
    Runtime   & $O(n^2)$ & linear & linear & $O(n \log^2 n)$ & $O(n \log^2 n)$ \\
    Reference & Thm.~\ref{thm:general} & Thm.~\ref{thm:bipartite}
    & Thm.~\ref{thm:1-planar} & Lem.~\ref{lem:two-connected-cubic} & Thm.~\ref{thm:cubic} \\
    \bottomrule
  \end{tabular}
\end{table}
 
For hypergraphs our results are more preliminary.
There are examples as simple as the hypergraph on six
vertices with all triples of vertices as hyperedges that cannot be
drawn using non-intersecting triangles; see Section~\ref{sec:hypergraphs}.
We show that
hypergraphs with too many edges of cardinality~4 such as
Steiner quadruple systems do not admit non-crossing drawings using convex
quadrilaterals (they in fact do not admit non-crossing drawings with any quadrilaterals if the number of vertices is sufficiently large).  On the other hand, we show that the two smallest
Steiner triple systems can be drawn using triangles.  (We define these
two classes of hypergraphs in Section~\ref{sec:hypergraphs}.)

\section{Graphs}
\label{sec:graphs}

It is easy to draw graphs in 3D using points as vertices and
non-crossing line segments as edges -- any set of points in general
position (no three collinear and no four coplanar) will support any set
of edge segments without crossings.
A more difficult problem is to represent a graph in 3D using polygons
as vertices where two polygons intersect to indicate an edge (note
that here we do not insist on a {\emph contact} representation, i.e.,
polygons are allowed to intersect arbitrarily).
Intersection graphs of convex polygons in 2D have been studied
extensively \cite{ll-cpig-GD10}.
Recognition is
$\exists\mathbb{R}$-complete~\cite{s-cgtp-09} (and thus in PSPACE
since $\exists \mathbb{R} \subseteq$
PSPACE~\cite{DBLP:conf/stoc/Canny88})
even for segments (polygons with only two vertices).

Every complete graph trivially admits an intersection representation
by line segments in~2D.  Not every graph, however, can be represented
in this way, see e.g., Kratochv\'il and
Matou\v{s}ek~\cite{DBLP:journals/jct/KratochvilM94}.
Moreover, they show that recognizing
intersection graphs of line segments in the plane, called \emph{segment graphs}, is
$\exists\mathbb{R}$-complete.
It turns out that a similar hardness result holds for recognizing
intersection graphs of straight-line segments in 3D (and actually in
any dimension).
The proof modifies the corresponding proof for~2D by
Schaefer~\cite{s-cgtp-09}. See also the excellent exposition of the
proof by Matou\v{s}ek~\cite{DBLP:journals/corr/Matousek14}.

\begin{theorem}
  \label{thm:recognition}
  Recognizing segment graphs in 3D is $\exists\mathbb{R}$-complete.
\end{theorem}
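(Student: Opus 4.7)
The completeness claim splits into two directions, which I would handle separately: $\exists\mathbb{R}$-membership by a direct first-order encoding, and $\exists\mathbb{R}$-hardness by adapting the 2D reduction with an added \emph{flattening gadget}.

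For membership, I would parameterize each of the $n$ segments by the six real coordinates of its two endpoints in $\mathbb{R}^3$. That two segments with endpoints $p,q$ and $r,s$ intersect is expressible as the existence of parameters $\lambda,\mu\in[0,1]$ with $\lambda p+(1-\lambda)q=\mu r+(1-\mu)s$, which is a Boolean combination of polynomial (in)equalities over $\mathbb{R}$. Asserting the prescribed intersection/non-intersection pattern over every pair of segments yields a sentence in the existential theory of the reals, so recognition of 3D segment graphs lies in $\exists\mathbb{R}$.

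For hardness, I would reduce from the $\exists\mathbb{R}$-complete 2D segment-graph recognition problem. Given an instance graph $G$, I would construct $G'=G\cup F$ by attaching the flattening gadget $F$ and ask whether $G'$ is a 3D segment graph. The gadget should ensure that (i) every 2D segment representation of $G$ lifts to a 3D segment representation of $G'$ by realizing $F$ in the same plane as $G$, and (ii) in every 3D segment representation of $G'$, the segments for the vertices of $G$ are forced to lie in a common plane. Given (ii), projecting that plane to $\mathbb{R}^2$ yields a 2D segment representation of $G$, establishing the equivalence of the two instances.

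The central geometric tool is that a segment meeting a plane in two distinct points lies entirely in the plane, together with the classical observation that three pairwise-intersecting lines in $\mathbb{R}^3$ are either coplanar or concurrent. I would first build a small frame subgraph whose segments must be coplanar, ruling out the concurrent configuration via auxiliary incidences, and then force every vertex of $G$ to meet two frame segments at distinct points. The main obstacle will be guaranteeing these distinctness conditions purely graph-theoretically -- in the spirit of Matou\v{s}ek's exposition of Schaefer's proof -- so that the extra degree of freedom in 3D cannot be exploited to have a segment ``pierce'' the intended plane while still satisfying all prescribed incidences.
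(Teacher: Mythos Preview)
Your high-level strategy matches the paper's: membership by a direct first-order encoding, hardness by starting from a 2D construction and adding extra vertices that force coplanarity. The paper, however, does not reduce from \emph{arbitrary} 2D segment-graph instances but directly from \textsc{Stretchability} via Schaefer's specific construction, which already contains a built-in triangular frame of three pairwise-intersecting segments and, crucially, guarantees that every ``original'' segment meets at least two frame segments. This sidesteps the problem you encounter when you try to attach a generic flattening gadget $F$ to an arbitrary $G$: if $F$ adds adjacencies between $G$-vertices and frame vertices, you may destroy realizability of $G$ (breaking direction~(i)), while if it does not, it cannot constrain the $G$-segments (breaking direction~(ii)).

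The concrete device you are missing --- and the one that resolves the ``distinctness'' obstacle you correctly flag --- is the paper's \emph{twin} trick: for each original segment $s$ in Schaefer's graph, add a new vertex $s'$ with exactly the same neighborhood as $s$. Any 2D representation extends trivially (draw $s'$ parallel to $s$), so direction~(i) is free. For direction~(ii), the frame segments span a plane, the original segments (and hence their twins) lie in it because each meets two frame sides, and then every order segment that touched some $s$ now touches both $s$ and its twin $s'$ --- two distinct coplanar segments --- so it meets the plane in two distinct points and is forced into it. The remaining order segments are pinned down by cascading. Your proposal to force each $G$-segment to meet two frame segments ``at distinct points'' is the right instinct, but without the Schaefer scaffold and the twin duplication you do not have a purely combinatorial way to guarantee those distinctness conditions.
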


\begin{proof}
  Clearly the problem is in $\exists\mathbb{R}$, so we immediately turn
  to hardness.  The proof is a reduction from \textsc{Stretchability},
  where we are given a combinatorial description of a collection of
  pseudolines, and we ask whether there is a collection of straight
  lines with the same description. 

  We start with a brief description of the original reduction, in
  the 2-dimensional
  case~\cite{s-cgtp-09,DBLP:journals/corr/Matousek14}. Following
  Schaefer and Matou\v{s}ek, we will describe the construction geometrically.
  {This is a convenient way to describe how to obtain a
  graph $G$ from the combinatorial description of the collection of
  pseudolines so that $G$ is a segment intersection graph if and
  only if the collection of pseudolines is stretchable.}
  More specifically, we will assume that the input combinatorial description
  can be arranged by straight lines, and will describe a corresponding 
  arrangement of straight-line segments, which forms an intersection representation
  of the constructed graph $G$. Formally, the input of the recognition problem is the
  purely combinatorial description of the graph $G$, not the representation.
  The construction ensures that if $G$ is a segment
  intersection graph, then every intersection representation by segments must
  be equivalent to the intended one.
  
  In his reduction,
  Schaeffer~\cite{s-cgtp-09,DBLP:journals/corr/Matousek14}
  constructs an arrangement of segments with the desired combinatorial
  description.  We call the segments in this arrangement \emph{original
    segments}. He introduces three new, pairwise intersecting segments
  $a$, $b$, and $c$, called \emph{frame segments}. They are placed in
  such a way that every original segment intersects at least two frame
  segments, and all intersections of original segments take place
  inside the triangle bounded by $a$, $b$, and $c$;
  see Fig.~\ref{fig:frame}.
  Next, for every original segment, he adds many new segments, called
  \emph{order segments}. Their purpose is to ensure that every
  representation of the constructed graph $G$ with intersecting
  segments has the desired ordering of crossings of original segments;
  see~Fig.~\ref{fig:coplanar}~(left).

 \begin{figure}[h]
    \centering
    \includegraphics[scale=1,page=1]{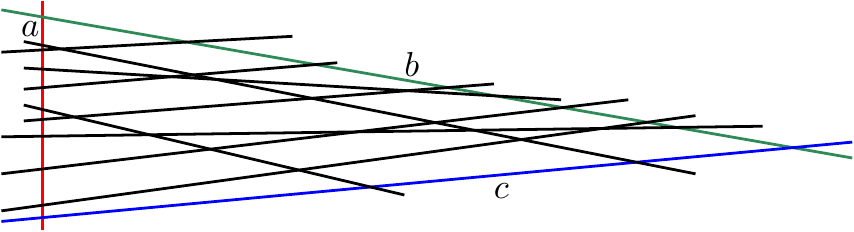}
    \caption{Original segments and frame segments.}
    \label{fig:frame}
  \end{figure}
 
  \begin{figure}[tb]
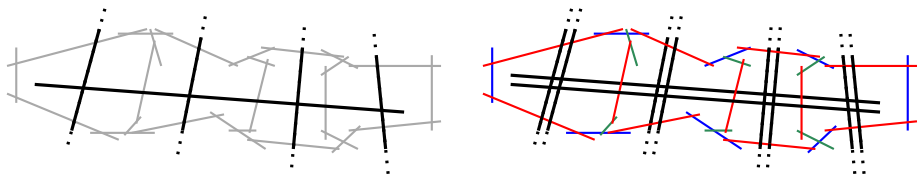

      \includegraphics[page=2]{recognition}
      \hfill
      \includegraphics[page=3]{recognition} 
      \caption{Left: Placement of order segments (thin lines). Original
        segments and frame segments are drawn with thick lines. Right:
        Twins force all segments to be coplanar. Each segment
        drawn red intersects two original or twin segments. Each segment drawn blue intersects two red
        segments. Finally, each green segment intersects a blue and a
        red segment.
        }
      \label{fig:coplanar}
  \end{figure}

  In order to show recognition hardness in 3D, we introduce some new
  segments (new vertices to $G$), obtaining a new graph
  $G'$.  For each original segment $s$, we introduce its \emph{twin}
  $s'$, i.e., a parallel non-overlapping segment with exactly the same
  neighbors as $s$. This completes the construction of $G'$.

  Now we argue that in every representation of $G'$, all segments
  from the representation are coplanar.  First, note that the frame
  segments define a plane, let us call it the \emph{base
    plane}. Moreover, recall that each original segment intersects at
  least two frame segments, so it also lies in the base plane. By the
  same argument, also twins of original segments lie in the base
  plane. Next, note that each order segment that intersects an
  original segment of $G$ now intersects an original segment and its
  twin, which forces it to lie in the base plane. It is
  straightforward to verify that all other order segments are forced
  to lie in the base plane too; see Fig.~\ref{fig:coplanar}~(right).

  It is easy to verify (see, e.g.,
  \cite{DBLP:journals/jgaa/CardinalFMTV18} for a similar argument) that
  $G'$ can be represented by intersecting segments in 3D if and only if
  $G'$ (and also $G$) can be represented by intersecting segments in 2D, and consequently,
  if and only if the initial instance of \textsc{Stretchability} is a
  yes-instance.
\end{proof}

We consider \emph{contact representations} of graphs in 3D where no
polygons are allowed to intersect except at their corners,
and two polygons share a corner if and only if they represent adjacent vertices.
We start by describing how to construct a contact representation for
any graph using convex polygons, which requires at least exponential
volume, and then describe constructions for graph families that use only
polynomial volume.

\subsection{General Graphs}
\label{sub:general}

\begin{lemma}
  \label{lem:grid}
  For every positive integer $n \ge 3$, there exists an arrangement of
  $n$ lines $\ell_1,\ell_2,\ldots,\ell_n$
  with the following two properties:
  \begin{enumerate}[({A}1)]
  \item \label{itm:intersect} line $\ell_i$ intersects lines $\ell_1, \ell_2, \dots, \ell_{i-1}, \ell_{i+1}, \dots \ell_n$ in
    this order, and
  \item \label{itm:decrease} distances between the intersection points on line~$\ell_i$ decrease exponentially, i.e., for every $i$ it holds that
  \begin{align}
  \xydist{i}{j+2}{j+1} &\le \xydist{i}{j+1}{j}/2 
  && \text{for } j \in \{1,\dots,i-3\}\label{ineq:small}\\
  \xydist{i}{i+1}{i-1} &\le \xydist{i}{i-1}{i-2}/2\label{ineq:minus1}\\
  \xydist{i}{i+2}{i+1} &\le \xydist{i}{i+1}{i-1}/2 \\
  \xydist{i}{j+2}{j+1} &\le \xydist{i}{j+1}{j}/2 
  && \text{for } j \in \{i+1,\dots,n-2\},
  \end{align}
  where $\xydist{i}{j}{k}$ is the xy-plane distance between $p_{i,j}$
  and $p_{i,k}$ and $p_{i,j} = p_{j,i}$ is the intersection point of
  $\ell_i$ and $\ell_j$.
  \end{enumerate}
\end{lemma}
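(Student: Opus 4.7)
I would prove the lemma by induction on $n$. The base case $n=3$ is vacuous: all four index ranges in (A2) are empty, so any three pairwise-intersecting lines in general position satisfy (A1) and (A2).

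For the inductive step, suppose $\ell_1,\dots,\ell_{n-1}$ already realize the lemma. The idea is to obtain $\ell_n$ as a tiny rotation of $\ell_{n-1}$ about a point placed just past the last intersection on $\ell_{n-1}$. Concretely, fix $\epsilon>0$, let $P$ be the point of $\ell_{n-1}$ at distance $\epsilon$ from $p_{n-1,n-2}$ on the side opposite $p_{n-1,n-3}$, and let $\ell_n$ be the line obtained from $\ell_{n-1}$ by rotating it about $P$ through a small angle $\theta>0$. Then $p_{n,n-1}=P$, and because the finitely many existing intersection points lie in a bounded neighborhood of $P$, every new intersection depends continuously on $\theta$: $p_{n,i}\to p_{n-1,i}$ for $i<n-1$, and $p_{i,n}\to p_{i,n-1}$ for every $i<n-1$ as $\theta\to 0$. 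No old intersection point moves, so every inequality guaranteed by the inductive hypothesis remains valid.

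It then remains to verify the conditions created by adding $\ell_n$. For (A1) along $\ell_n$, continuity places $p_{n,1},\dots,p_{n,n-2}$ in the correct order for small $\theta$, and $p_{n,n-1}=P$ lies beyond $p_{n,n-2}$ by construction. For the gap decay on $\ell_n$, the first $n-3$ consecutive gaps are perturbations of strict inequalities on $\ell_{n-1}$, hence they persist; the last gap $d_n(p_{n,n-2},p_{n,n-1})$ equals $\epsilon$ up to an error tending to $0$ with $\theta$, while the preceding gap tends to $d_{n-1}(p_{n-1,n-3},p_{n-1,n-2})$, so first picking $\epsilon \le \tfrac{1}{4}\,d_{n-1}(p_{n-1,n-3},p_{n-1,n-2})$ and then $\theta$ sufficiently small suffices. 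The new Case-B inequality on $\ell_{n-1}$ has left side exactly $\epsilon$ and holds by the same choice. Finally, on every $\ell_i$ with $i\le n-2$ the only new inequality is of the form $d_i(p_{i,n-1},p_{i,n})\le\tfrac{1}{2}\,d_i(\,\cdot\,,p_{i,n-1})$, whose left side tends to $0$ with $\theta$ (with $p_{i,n-3}$ in place of $p_{i,n-2}$ when $i=n-2$).

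The main obstacle is bookkeeping: there are $O(n)$ new inequalities, and a single $\theta>0$ must satisfy all of them while avoiding $\ell_n\parallel\ell_i$ for any $i$. Both are routine, since each inequality reduces to ``a continuous function of $\theta$ that vanishes at $\theta=0$ must sit below a fixed positive bound,'' and parallelism is a codimension-one condition in the rotation parameter; as only finitely many conditions are involved, their common tolerance is strictly positive, which closes the induction.
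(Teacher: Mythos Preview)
Your approach is essentially the paper's: build the arrangement incrementally, obtaining $\ell_n$ by rotating a copy of $\ell_{n-1}$ about a pivot placed on $\ell_{n-1}$ just beyond its last intersection point. The paper fixes the pivot so that inequality~(\ref{ineq:minus1}) holds with equality and then rotates clockwise until one of the inequalities~(\ref{ineq:small}) becomes tight; you instead choose the pivot distance~$\epsilon$ and the rotation angle~$\theta$ small enough and appeal to continuity. These are the same idea with different bookkeeping.

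Two places where your argument needs tightening. First, you write that ``the first $n-3$ consecutive gaps are perturbations of \emph{strict} inequalities on $\ell_{n-1}$, hence they persist,'' but the lemma (and hence your inductive hypothesis) only asserts $\le$; non-strict inequalities need not survive perturbation. You should strengthen the induction hypothesis to strict inequalities, which your choice $\epsilon\le\tfrac14\,d_{n-1}(p_{n-1,n-3},p_{n-1,n-2})$ and sufficiently small $\theta$ do preserve. Second, you verify (A1) along $\ell_n$ but not along the old lines: for each $i\le n-2$ you need $p_{i,n}$ to lie on the far side of $p_{i,n-1}$ from $p_{i,n-2}$, and your continuity claim only gives $p_{i,n}\to p_{i,n-1}$, not its direction. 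This is exactly why the paper specifies the \emph{clockwise} sense of rotation; one of the two rotation directions works uniformly for all~$i$ (since all $p_{n-1,i}$ lie on the same side of the pivot), and you should say which and why.
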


\begin{proof}
  We construct the grid incrementally.  We start with the x-axis as
  $\ell_1$, the y-axis as $\ell_2$, and the line
  through $(1,0)$ and $(0,-1)$ as $\ell_3$; see Fig.~\ref{fig:grid}.
  Now suppose that $i > 3$, we have constructed lines $\ell_1,\ell_2,\ldots,\ell_{i-1}$,
  and we want to construct~$\ell_i$.  
  We fix $p_{i-1,i}$ to satisfy $\xydist{i-1}{i}{i-2} =
  \xydist{i-1}{i-2}{i-3}/2$ then rotate a copy of line~$\ell_{i-1}$
  clockwise around $p_{i-1,i}$ until it (as $\ell_i$) satisfies
  another of the inequalities in
  \eqref{ineq:small} with equality.
  Note that during this rotation, all inequalities in~(A\ref{itm:decrease}) 
  are satisfied and we do not move any previously constructed lines,
  so the claim of the lemma follows.
\end{proof}

\begin{figure}[tb]
  \begin{minipage}[b]{.44\linewidth}
    \centering
    \includegraphics{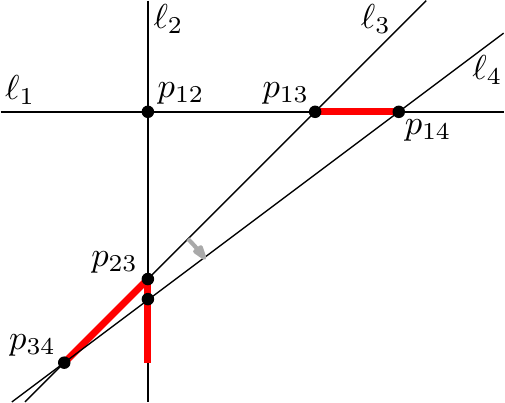}
    \caption{Construction of $\ell_4$ in the proof of Lemma~\ref{lem:grid}.}
    \label{fig:grid}
  \end{minipage}
  \hfill
  \begin{minipage}[b]{.51\linewidth}
    \centering
    \includegraphics{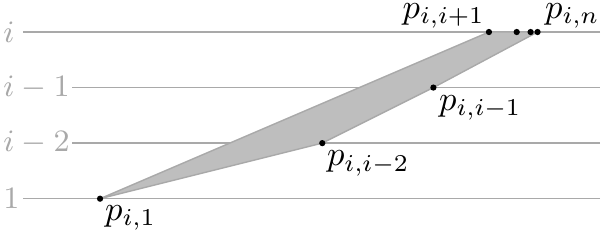}
    \caption{The polygon~$P_i$ that represents vertex~$i$ of~$K_n$.}
    \label{fig:polygon-for-Kn}
  \end{minipage}
\end{figure}

\begin{theorem}
  \label{thm:general}
  For every $n \ge 3$, the complete graph $K_n$ admits a contact representation by
  non-degenerate convex polygons in~3D, each with at most $n-1$
  vertices.  Such a representation can be computed in $O(n^2)$ time
  (assuming unit cost for arithmetic operations on coordinates).
\end{theorem}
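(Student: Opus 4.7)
The plan is to apply Lemma~\ref{lem:grid} to obtain lines $\ell_1,\ldots,\ell_n$ in the xy-plane, and then to lift each intersection point $p_{i,j}$ to a 3D point $\tilde p_{i,j}=(p_{i,j},z_{i,j})$ using a symmetric height function $z_{i,j}=z_{j,i}$. The polygon $P_i$ is defined as the convex hull of the $n-1$ lifted points $\{\tilde p_{i,j}:j\neq i\}$. Since the xy-projections of these points all lie on $\ell_i$, the polygon $P_i$ automatically lies in the vertical plane $\pi_i$ through $\ell_i$, so it is a planar convex polygon that we can analyze in the 2D coordinates (signed distance along $\ell_i$, $z$) of $\pi_i$.

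Two facts must be established: (i) each $P_i$ is non-degenerate and has all $n-1$ lifted points as corners, i.e., the points are in strictly convex position in $\pi_i$; and (ii) for any pair $i\neq j$, $P_i\cap P_j=\{\tilde p_{i,j}\}$. For (ii), the intersection is contained in $\pi_i\cap\pi_j$, the vertical line $L_{ij}$ through $p_{i,j}$, so it suffices to arrange that the vertical chord $P_i\cap L_{ij}$ and the vertical chord $P_j\cap L_{ij}$ meet only at their common endpoint $\tilde p_{i,j}$. The idea is to design the lifts so that $\tilde p_{i,j}$ lies on the upper boundary of $P_i$ whenever $j<i$ and on the lower boundary of $P_i$ whenever $j>i$. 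Then, for any pair $i<j$, the chord of $P_i$ extends upward from $\tilde p_{i,j}$ (its lower-boundary corner) while the chord of $P_j$ extends downward from the same point (its upper-boundary corner), so the two chords meet only at the shared corner.

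The heights $z_{i,j}$ should be chosen so that, in the 2D coordinates of $\pi_i$, the upper chain of $P_i$ consists exactly of the corners with $j\leq i-1$ together with the rightmost corner at $j=n$, and the lower chain consists of the corners with $j\geq i+1$ together with the leftmost corner at $j=1$; see Fig.~\ref{fig:polygon-for-Kn} for the intended shape. The geometric decrease of gaps along each $\ell_i$ from property~(A\ref{itm:decrease}) is exactly the slack needed to make this work: because consecutive gaps on $\ell_i$ shrink by a factor of at least $2$, an appropriate scale of heights forces each interior corner to stick out on the designated side of the polygon, and consecutive slopes in the 2D picture of $\pi_i$ change monotonically, giving strict convex position.

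The main obstacle is exhibiting an explicit symmetric height function that simultaneously places the corners in strictly convex position in every plane $\pi_i$ and realizes the prescribed upper/lower assignment at every shared corner; verifying this reduces to a set of slope inequalities that follow from~(A\ref{itm:decrease}) after appropriate rescaling. Once the heights are specified, the $O(n^2)$ running time follows immediately: the grid is built in $O(n^2)$ time by Lemma~\ref{lem:grid}, and each of the $n$ polygons is described by $O(n)$ corners, so all polygons together are output in $O(n^2)$ time.
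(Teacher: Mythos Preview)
Your outline follows essentially the same route as the paper: build the line arrangement from Lemma~\ref{lem:grid}, lift each crossing $p_{i,j}$ vertically, and let $P_i$ be the convex hull of its $n-1$ lifted crossings inside the vertical plane~$\pi_i$; then argue that at the shared corner~$\tilde p_{i,j}$ the two polygons lie on opposite vertical sides of the common point. That is exactly the paper's strategy.

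The genuine gap is that you never exhibit the height function. You correctly identify this as ``the main obstacle,'' describe the combinatorial shape you want $P_i$ to have, and assert that a suitable choice exists and can be verified via slope inequalities coming from~(A\ref{itm:decrease})---but you stop there. This is the heart of the proof, not a routine detail: an arbitrary symmetric height function will not give convex position simultaneously in every~$\pi_i$, and nothing you wrote pins one down. The paper's choice is strikingly simple: set $z_{i,j}=\min\{i,j\}$. With this choice, for each~$P_i$ the corners with $j>i$ all sit at height~$i$ (a horizontal edge), while the corners with $j<i$ sit at heights $1,2,\dots,i-1$ and, thanks to the geometrically decreasing gaps from~(A\ref{itm:decrease}), form a convex staircase. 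The non-intersection argument is then one line: for $i<j$ the polygon~$P_i$ lies entirely at height $\le i$, so at the shared corner it can only extend downward, while $p_{i,j}$ is a lower-boundary vertex of~$P_j$, so $P_j$ extends upward there.

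One smaller point: you aim for \emph{strictly} convex position of all $n-1$ corners, but the paper's construction deliberately puts the corners with $j>i$ collinear (and handles $P_1$ separately to avoid degeneracy). The theorem only asks for non-degenerate convex polygons, so collinear boundary vertices are permitted; insisting on strict convexity is an extra burden you do not need to carry. If you do want it, the paper notes that a tiny perturbation of the $z$-coordinates suffices.
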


\begin{proof}
  Take a grid according to Lemma~\ref{lem:grid} (thus, each line $\ell_i$ is in the xy-plane). 
  We lift each intersection point $p_{i,j}$ so that the z-coordinate
  of $p_{i,j}$ becomes $\min \{i,j\}$. We represent vertex~$i$ by
  polygon~$P_i$, which we define to be the convex hull of 
  $\{p_{i,1}, p_{i,2}, \dots, p_{i,i-1}, p_{i,i+1}, \dots, p_{i,n}\}$.
  Note that~$P_i$ is contained in the vertical plane that
  contains line~$\ell_i$; see Fig.~\ref{fig:polygon-for-Kn}.
  To avoid that~$P_1$ is degenerate, we reduce the
  z-coordinate of $p_{1,2}$ slightly.

We claim that the counterclockwise order of vertices around~$P_i$, for
$i=2,\dots,n-1$ is
\[p_{i,1}, p_{i,2}, \dots, p_{i,i-1}, p_{i,n}, p_{i,n-1}, \dots,
  p_{i,i+1}, p_{i,1}.\]
Similarly, we claim that the counterclockwise order of vertices
around~$P_1$ is
$p_{1,2},p_{1,n}, \dots, p_{1,3}, p_{1,2}$,
and the order around~$P_n$ is
$p_{n,1}, p_{n,2}, \dots, p_{n,n-1}, p_{n,1}$.
Note that a polygon with such an ordering is simple (i.e., it does not self-intersect).
We prove these claims by showing that the angle formed by any three
consecutive points in these orders is bounded by~$\pi$.
We can easily verify this for~$P_1$ and~$P_n$. In the following we
assume that $i \in \{2,\dots,n-1\}$.
  Clearly the angles $\angle p_{i,i+1} p_{i,1} p_{i,2}$ and $\angle p_{i,i-1} p_{i,n} p_{i,n-1}$ are at most~$\pi$.
  For $j=2,\dots,i-2$, we have
  $\angle p_{i,j-1}p_{i,j}p_{i,j+1}<\pi$, which is due to the fact that
  the z-coordinates increase in each step by~1, while the distances
  decrease (property~(A\ref{itm:decrease})).  Note that $\angle
  p_{i,i+1},p_{i,i+2},p_{i,i+3}= \dots = \angle
  p_{i,n-2},p_{i,n-1},p_{i,n} = \pi$.  Finally, we claim that
  $\angle p_{i,i-2},p_{i,i-1},p_{i,n} < \pi$.  Clearly,
  $z(p_{i,i-1})-z(p_{i,i-2})=1=z(p_{i,n})-z(p_{i,i-1})$, where $z(p)$ denotes the z-coordinate of point $p$.  The claim
  follows by observing that, due to property~(A\ref{itm:decrease}) and
  the geometric series formed by the distances,
  \[\xydist{i}{i-1}{n} = \xydist{i}{i-1}{i+1} + 
  \sum_{k=i+1}^{n-1} \xydist{i}{k}{k+1} <
  2\xydist{i}{i-1}{i+1} \le \xydist{i}{i-2}{i-1}.\] 
  It remains to show that, for $1 \le i < j \le n$, polygons~$P_i$
  and~$P_j$ do not intersect other than in~$p_{i,j}$.  This is simply
  due to the fact that~$P_j$ is above~$P_i$ in $p_{i,j}$, and
  lines~$\ell_i$ and~$\ell_j$ only intersect in (the projection of) this point.
\end{proof}

\begin{corollary}
  \label{cor:general}
  Every graph with minimum vertex-degree~3 admits a contact
  representation by convex polygons in~3D.
\end{corollary}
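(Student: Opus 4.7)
The plan is to inherit the contact representation of $K_n$ from Theorem~\ref{thm:general} and discard the corners that do not correspond to edges of~$G$. Let $n=|V(G)|$. Apply Theorem~\ref{thm:general} to $K_n$ to obtain polygons $P_1^{K_n},\dots,P_n^{K_n}$ with designated corners $p_{i,j}$ for $j\ne i$. For each vertex~$i$ of~$G$, let
\[
P_i = \mathrm{conv}\{\,p_{i,j} : \{i,j\}\in E(G)\,\},
\]
designating this same subset as the corners of~$P_i$. By the minimum-degree-$3$ hypothesis each $P_i$ has at least three designated corners. Since $P_i\subseteq P_i^{K_n}$ for every~$i$ and Theorem~\ref{thm:general} gives $P_i^{K_n}\cap P_j^{K_n}=\{p_{i,j}\}$, we have $P_i\cap P_j\subseteq\{p_{i,j}\}$ for all $i\ne j$, and this common point is a designated corner of both polygons exactly when $\{i,j\}\in E(G)$; this is precisely the contact condition required.

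What remains is to verify that each $P_i$ is actually a 2-dimensional convex polygon and not a line segment. Handling each connected component of~$G$ separately in a disjoint region of space, I may assume $G$ is connected, and then label its vertices $1,\dots,n$ by a BFS traversal starting at an arbitrary vertex so that every $i\ge 2$ has some neighbour $j<i$ in~$G$. Because $p_{i,j}$ has z-coordinate $\min(i,j)$, the presence of such a neighbour places one chosen corner of $P_i$ strictly below height~$i$, while the remaining chosen corners lie at heights at most~$i$ at pairwise distinct xy-positions on~$\ell_i$; hence the chosen corners are not collinear and $P_i$ is 2-dimensional for every $i\ge 2$.

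The exceptional case is $i=1$, all of whose chosen corners lie at height~$1$ on a horizontal line in the plane of~$\ell_1$. This is precisely the degeneracy that the proof of Theorem~\ref{thm:general} already resolves for $P_1^{K_n}$ itself, by slightly lowering $z(p_{1,2})$. Since BFS from vertex~$1$ ensures $\{1,2\}\in E(G)$, the perturbed $p_{1,2}$ is one of $P_1$'s chosen corners, so the same perturbation makes $P_1$ two-dimensional as well without affecting the intersection bound above. This non-degeneracy check is the main obstacle in the argument; beyond that, the corollary follows from Theorem~\ref{thm:general} by the monotonicity of convex hulls under subset selection.
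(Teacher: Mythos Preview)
Your approach is the paper's: take the $K_n$ representation of Theorem~\ref{thm:general} and keep only the corners~$p_{i,j}$ with $\{i,j\}\in E(G)$. The one difference is how you handle degeneracy. The paper, for each vertex~$i$ all of whose neighbours have larger index, lowers the z-coordinate of $p_{i,k}$ where $k$ is $i$'s smallest neighbour. You instead relabel via BFS so that the only such vertex is~$1$, and then the perturbation of $p_{1,2}$ already present in Theorem~\ref{thm:general} does the job (BFS guarantees $\{1,2\}\in E(G)$). Both work; yours is a nice simplification at the cost of first fixing a labeling.

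One gap to patch: your sentence ``the remaining chosen corners lie at heights at most~$i$ at pairwise distinct xy-positions on~$\ell_i$; hence the chosen corners are not collinear'' does not follow as stated---distinct xy-positions do not preclude collinearity in the vertical plane through~$\ell_i$. What you need (and what Theorem~\ref{thm:general} actually supplies) is that the corners of $P_i^{K_n}$ form a convex polygon whose only collinear vertices are those on the flat top at height~$i$; since your BFS labeling forces at least one chosen corner strictly below height~$i$, the chosen corners cannot all sit on that flat edge, so they are not collinear.
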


\begin{proof}
  Let $n$ be the number of vertices of the given graph~$G=(V,E)$.  We
  use the contact representation of~$K_n$ and modify it as follows.
  For every pair $\{i,j\} \not\in E$, just remove the point~$p_{i,j}$
  before defining the convex hulls.

  In the above construction for $K_n$, we had to make sure that 
  polygon~$P_1$ is not degenerate.  For general graphs, we have the
  same problem for any polygon~$P_i$ with the property that every
  polygon~$P_j$ that is adjacent to~$P_i$ has index $j>i$.  Let $k$
  be the smallest index such that $P_k$ is incident to~$P_i$.  
  Now if $k > i$ then we slightly reduce the z-coordinate of~$p_{i,k}$.
  (In the construction for $K_n$, we did this only to~$p_{1,2}$.)
\end{proof}

We can make the convex polygons of our construction strictly convex if
we slightly change the z-coordinates.
For example, decrease the z-coordinate of $p_{i,j}$ by
$\delta / \xydist{\min\{i,j\}}{1}{\max\{i,j\}}$, where $\delta>0$ is
such that moving every point by at most~$\delta$ does not change the
orientation of any three non-collinear points.

Let us point out that Erickson and Kim~\cite{ek-alnfc-03} describe
a construction of pairwise face-touching 3-polytopes in 3D that may
provide the basis for a different representation in our model of a complete graph.

While we have shown that all graphs admit a 3D contact
representation, these representations may be very non-symmetric and
can have very large coordinates.  This motivates the following
question and specialized 3D drawing algorithms for certain classes of
(non-planar) graphs; see the following subsections.

\begin{open}
  Is there a polynomial $p$ such that any $n$-vertex graph has a 3D
  contact representation with convex polygons on a grid of size
  $p(n)$?
\end{open}

\subsection{Bipartite Graphs}
\label{sub:bipartite}

\begin{theorem}
  \label{thm:bipartite}
  Every bipartite graph $G=(A \cup B,E)$ admits a contact
  representation by convex polygons whose vertices are restricted to
\begin{enumerate}[(a)]
\item    a toroidal grid of size $|B| \times (2|A|-2)$ or
\item a 3D integer grid of
  size $|A| \times 2\left\lceil \frac{|B|}{4} \right\rceil \times
  (\left\lceil \frac{|A|}{2} \right\rceil^2 + \left\lceil \frac{|B|}{4} \right\rceil^2)$.
\end{enumerate}  
  Such representations can be computed in $O(|E|)$ time.
\end{theorem}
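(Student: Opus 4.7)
The plan is to give two constructions, one for each of (a) and (b), both following the same template: place the $A$-polygons in one family of parallel planes and the $B$-polygons in a transverse family of parallel planes, so that polygons within the same part of the bipartition automatically never meet; for every edge $a_ib_j$ introduce a single shared corner lying in both planes; and lift these corners onto a paraboloidal surface so that the corners of each polygon end up on a strictly convex parabola inside its plane. This forces every polygon to be convex and, whenever it has at least three corners, non-degenerate.

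Concretely for (b), I would order $A=\{a_1,\ldots,a_m\}$ and place $a_i$ in the plane $x=i$. To realise the $2\lceil|B|/4\rceil$ factor in the $y$-direction I plan to split $B$ into (up to) four groups of roughly $|B|/4$ vertices each and place the plane of $b_j$ at $y=\pm y_j$ with $|y_j|$ drawn from $\lceil|B|/4\rceil$ distinct values. The $z$-coordinate of the corner for $a_ib_j$ is then $(i-c_x)^2+(y_j-c_y)^2$ after centring the $x$- and $y$-coordinates around zero, whose maximum is exactly $\lceil m/2\rceil^2+\lceil|B|/4\rceil^2$, matching the claimed $z$-bound. The construction for (a) is then obtained by wrapping the (b) construction onto a cylinder of circumference $|B|$: the $B$-polygons become vertical strips at $|B|$ equally spaced angular positions, and each interior $A$-polygon has corners on two nearby horizontal circles so that the vertical $B$-polygons remain non-degenerate. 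The two ``boundary'' $A$-polygons each need only one circle, which accounts for the $2|A|-2$ height levels in the toroidal grid.

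The main obstacle, and the one that drives the specific coordinate choices and the four-sided placement of $B$, will be to show that for every non-adjacent pair $a_i\not\sim b_j$ the polygons do not meet. The planes of $a_i$ and $b_j$ intersect in a single vertical line, and each polygon meets this line in a $z$-interval, so it suffices to prove that the two intervals are disjoint. The tool I plan to use is the standard paraboloid-lift property that a chord of a convex polygon whose vertices lie on a paraboloid touches the paraboloid only at the vertices. It follows that at any non-corner point of the line both polygons lie \emph{strictly} above the paraboloid value; a short calculation in terms of the gaps to the next neighbours on each side of $a_i$ and $b_j$ then quantifies this strict gap, and the quadrant-based placement of $B$ in (b) (respectively its cylindrical analogue in (a)) is arranged so that the two gaps cannot overlap in~$z$.

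Everything else---convexity of each polygon, correctness of the contacts, disjointness within each bipartition class, and the fit into the claimed grids---follows by direct calculation. The overall running time is $O(|E|)$ because each edge contributes one corner whose coordinates can be written down in $O(1)$, and the two convex hulls it affects are maintained by inserting the new corner in sorted order of the appropriate coordinate along the carrier parabola.
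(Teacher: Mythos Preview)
Your construction has a genuine gap at the most basic level: with corners on the elliptic paraboloid $z=(x-c_x)^2+(y-c_y)^2$, the polygons of an \emph{adjacent} pair $a_i\sim b_j$ do not merely touch, they overlap. In the plane $x=i$ the corners of $a_i$ lie on the upward parabola $z=(i-c_x)^2+(y-c_y)^2$, so the convex hull sits \emph{above} this parabola; the same is true for $b_j$ in the plane $y=y_j$. On the common line $x=i,\;y=y_j$ both filled polygons therefore contain a non-degenerate segment starting at the corner $(i,y_j,(i-c_x)^2+(y_j-c_y)^2)$ and going upward to their respective long chords. The open polygons intersect in an open segment. Your ``main obstacle'' paragraph worries only about non-adjacent pairs, but it is the adjacent pairs that break the construction. (A hyperbolic paraboloid $z=x^2-y^2$ would fix this, since then one family lies below and the other above the saddle at each corner; but that is not what you wrote, and it would force $|B|$ distinct $y$-planes, so you would not recover the $2\lceil |B|/4\rceil$ factor claimed in~(b).)

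There is a second problem with the $y$-count: ``four groups of roughly $|B|/4$'' placed at $y=\pm y_j$ with $|y_j|$ from $\lceil|B|/4\rceil$ values gives only about $|B|/2$ planes, so two $B$-polygons share a plane and, under your lift, sit on the same parabola---their convex hulls overlap.

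The paper's route is different and sidesteps both issues. It makes the $A$-polygons \emph{horizontal} (one per $z$-level) and the $B$-polygons \emph{vertical} and uni-monotone, with the ``mountain'' side facing the $z$-axis and the base edge facing outward. At any shared corner the $A$-polygon (a horizontal $|B|$-gon centred on the $z$-axis) has that corner as an extreme vertex, while the $B$-polygon extends from the corner \emph{away} from the axis; hence they meet only at the point. The toroidal version rotates the lead $B$-polygon about the $z$-axis; the integer-grid version translates (and mirrors) it along a convex core $|B|$-gon in the $xy$-plane, which is where the $2\lceil|B|/4\rceil$ arises. Finally, the paper handles arbitrary bipartite $G$ by building the representation of $K_{|A|,|B|}$ and deleting the contact points of non-edges; since each polygon is the convex hull of its contacts, the shrunken polygons remain inside the originals and disjointness is inherited for free---no separate analysis of non-adjacent pairs is needed.
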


\begin{proof}
  We first prove the result for complete bipartite graphs
  $K_{|A|,|B|}$.  As in the other costructions in this paper, our
  representation for $K_{|A|,|B|}$ is such that each polygon
  representing a vertex~$v$ of the graph is the convex hull of the
  touching points with adjacent polygons.  (The touching points
  represent the edges of the graph.)  If the given bipartite graph~$G$
  is not complete, we simply remove from our representation of
  $K_{|A|,|B|}$ the touching points that correspond to the non-edges
  in~$G$.

  In our construction for $K_{|A|,|B|}$, the polygons representing the
  vertices in $A$ (called \emph{$A$-polygons}) are all horizontal
  $|B|$-gons and the polygons representing the vertices in $B$ (called
  \emph{$B$-polygons}) are all vertical $|A|$-gons; see
  Fig.~\ref{fig:bipartite-tg} for an example with $|A|=|B|=8$.

  We start with a \emph{uni-monotone} convex polygon with respect to
  the z-axis (that is, a z-monotone convex polygon with a single
  segment as one of its two chains) as our \emph{lead} for generating
  a realization. For a uni-monotone polygon, we call the single edge
  monotone chain the \emph{base edge} and the other chain the
  \emph{mountain chain}. Our lead polygon represents a $B$-polygon
  (hence, is an $|A|$-gon) and has the following two properties:
  ($\expandafter{\romannumeral 1}$)~it is coplanar with the z-axis,
  and ($\expandafter{\romannumeral 2}$)~its mountain chain lies
  between the base segment and the z-axis. The remaining $B$-polygons
  are all rotated copies of the lead polygon around the z-axis (each
  with a distinct rotation angle). The $A$-polygons are horizontal and
  each at a different height, and hence they are
  interior-disjoint. $B$-polygons are also trivially
  interior-disjoint. The $A$- and $B$-polygons are interior-disjoint
  due to property ($\expandafter{\romannumeral 2}$) of our lead
  polygon. Using evenly spaced vertices on a half circle as our lead,
  we get a representation on a toroidal grid\footnote{A \emph{toroidal
      grid} of size $m \times n$ is the Cartesian product of two cycle
    graphs $C_m$ and $C_n$.} of size $|B| \times (2|A|-2)$ (note that
  our representation uses the inner half of the grid points of a
  toroidal grid because the lead polygon needs to be uni-monotone).

\begin{figure}[tb]
      \begin{subfigure}[b]{.46\linewidth} 
        \centering
        \tdplotsetmaincoords{0}{0}
\begin{tikzpicture}[tdplot_main_coords,scale=.7,blend group=normal,tdplot_main_coords]

\def \s {1.5};

\def \r {180/7};

\foreach \i in {0,1,...,7}{
	\coordinate (P0\i) at ({-cos(-90+\r*\i )}, 0, {sin(-90+\r*\i )});
	\coordinate(O\i) at (-2,0,{sin(-90+\r*\i )});
}

\foreach \i in {1,2,...,7}
	\foreach \j in {0,1,...,7}
		\coordinate (P\i\j) at ($(O\j)!1!45*\i:(P0\j)$);

\foreach \i in {0,1,6,7}{
	\draw[fill,darkgreen,opacity=.2] (P0\i) -- (P1\i) -- (P2\i) -- (P3\i) -- (P4\i) -- (P5\i) -- (P6\i) -- (P7\i) -- cycle;
	}
\foreach \i in {2,5}{
	\draw[fill,darkgreen,opacity=.3] (P0\i) -- (P1\i) -- (P2\i) -- (P3\i) -- (P4\i) -- (P5\i) -- (P6\i) -- (P7\i) --  cycle;
	}
\foreach \i in {3,4}{
	\draw[fill,darkgreen,opacity=.7] (P0\i) -- (P1\i) -- (P2\i) -- (P3\i) -- (P4\i) -- (P5\i) -- (P6\i) -- (P7\i) --  cycle;
	}

\foreach \i in{0,1,2,3,4,5,6,7}
	\draw[fill,blue,opacity=.5,very thick] (P\i0) --(P\i1) -- (P\i2) -- (P\i3) -- (P\i4) -- (P\i5) --(P\i6) -- (P\i7) --cycle;

\foreach \i in {0,1,...,7}{
	\foreach \j in {0,1,...,7}
		\draw[fill,opacity=.8] (P\i\j) circle (\s pt);
}

\end{tikzpicture}
        \caption{projection on the xy-plane}
      \end{subfigure}  
      \hfill
      \begin{subfigure}[b]{.46\linewidth}
        \centering
        \tdplotsetmaincoords{82}{55}
\begin{tikzpicture}[tdplot_main_coords,scale=.75,blend group=normal, xscale=2]

\def \s {1};

\def \r {180/7};

\foreach \i in {0,1,...,7}{
	\coordinate (P0\i) at ({-cos(-90+\r*\i )}, 0, {sin(-90+\r*\i )});
	\coordinate(O\i) at (-2,0,{sin(-90+\r*\i )});
}

\foreach \i in {1,2,...,7}
	\foreach \j in {0,1,...,7}
		\coordinate (P\i\j) at ($(O\j)!1!45*\i:(P0\j)$);

\foreach \i in {0,1,2,3}{
	\draw[fill,darkgreen,opacity=.3,draw=darkgreen!70!black,very thick] (P0\i) -- (P1\i) -- (P2\i) -- (P3\i) -- (P4\i) -- (P5\i) -- (P6\i) -- (P7\i) -- cycle;
	}

\foreach \i in{1,2,3,4}
	\draw[fill,blue,opacity=.5,very thick] (P\i0) --(P\i1) -- (P\i2) -- (P\i3) -- (P\i4) -- (P\i5) --(P\i6) -- (P\i7) --cycle;

\foreach \i in {0,1,...,7}{
	\foreach \j in {0,1,...,7}
		\node[circle,fill,opacity=.8,minimum size=1mm,inner sep=1pt] at (P\i\j) {};
}
 
	\foreach \i in {4,5,6}{
	\draw[fill,darkgreen,opacity=.5,draw=darkgreen!70!black,very thick] (P0\i) -- (P1\i) -- (P2\i) -- (P3\i) -- (P4\i) -- (P5\i) -- (P6\i) -- (P7\i) --  cycle;
	}

\foreach \i in{0,5,6,7}
	\draw[fill,blue,opacity=.65,very thick] (P\i0) --(P\i1) -- (P\i2) -- (P\i3) -- (P\i4) -- (P\i5) --(P\i6) -- (P\i7) --cycle;

\foreach \i in{7} {	
	\draw[fill,darkgreen,opacity=.65,draw=darkgreen!70!black,very thick] (P0\i) -- (P1\i) -- (P2\i) -- (P3\i) -- (P4\i) -- (P5\i) -- (P6\i) -- (P7\i) --  cycle;
	}

\end{tikzpicture}
        \caption{side view}
      \end{subfigure}
      \caption{A contact representation of $K_{8,8}$ using a toroidal grid.}
      \label{fig:bipartite-tg}
\end{figure}
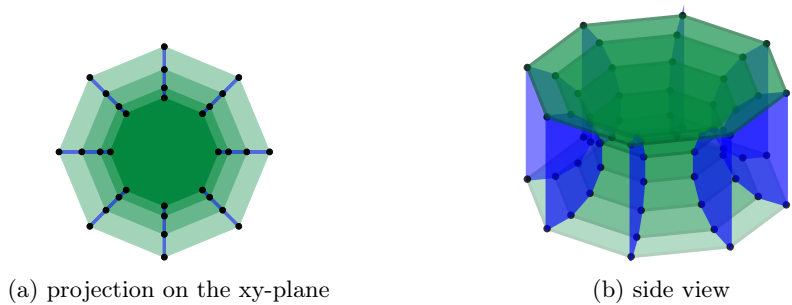

For representations on integer grids, we distort the representation above to some degree.
A \emph{core} $A$-polygon (innermost polygon in Fig.~\ref{fig:bi-ip}) is a
convex $|B|$-gon on the xy-plane using a grid of size
\[2\left\lceil \frac{|B|}{4} \right\rceil \times \left(\left\lfloor \frac{\left\lceil\frac{|B|}{2}\right\rceil}{2} \right\rfloor \left(\left\lceil \frac{|B|}{4} \right\rceil -1\right) +2 \right).\]
Recall that the grid size is the product of the number of grid lines in each dimension.
Here, we use both a core $A$-polygon, and a lead $B$-polygon (with additional properties), to generate a realization.

Our lead $B$-polygon has the following properties.
It lies on the xz-plane and is to the right of the z-axis. It has an axis of symmetry parallel to the x-axis (this helps getting a more compact representation). The z-coordinates of consecutive vertices on its mountain chain are all one unit apart. The distance between the x-coordinates of consecutive pairs of vertices along the boundary and in the direction towards the base segment increments by $1, 2, 3, \ldots, \lceil |A|/2\rceil-1$. These properties guarantee that the lead polygon is uni-monotone, convex, and has integer coordinates.
The lead $B$-polygon is incident to the leftmost vertex of the core $A$-polygon at its vertex with the minimum x-coordinate (i.e., closest to the z-axis).

The remaining $B$-polygons are again congruent, however, this time their planes are all parallel to the lead B-polygon (rather than being placed around the z-axis). This helps us maintain integer coordinates for all vertices. Figure~\ref{fig:bi-ip} shows the projection of such representations on the xy-plane. Note that the core $A$-polygon
can be split into two y-monotone chains of about the same size. The $B$-polygons that are incident to the vertices on the same y-monotone chain of the core are identical (only translated by a vector in the xy-plane). They are mirrored if they are incident to vertices on different monotone chains of the core. See Fig.~\ref{fig:bi-i} for a 3D view of an example. Our construction requires a grid of size
\[|A| \times 2\left\lceil \frac{|B|}{4} \right\rceil \times \left( \left\lceil \frac{|A|}{2} \right\rceil \left(\left\lceil \frac{|A|}{2} \right\rceil -1\right) + \left\lfloor \dfrac{\left\lceil\frac{|B|}{2}\right\rceil}{2} \right\rfloor \left(\left\lceil \frac{|B|}{4} \right\rceil -1\right) +2\right).\]
\end{proof}

\begin{figure}[tb]
        \centering
\tdplotsetmaincoords{0}{0}
\begin{tikzpicture}[tdplot_main_coords,scale=.35,blend group=normal]

\def \s {2};
\def \t {4};
\edef \c {0};

\coordinate (P0\t) at (0, 0, \t);
\coordinate (P1\t) at (1, 0, \t);
\coordinate (P2\t) at (4, 1, \t);
\coordinate (P3\t) at (6, 2, \t);
\coordinate (P4\t) at (7,3, \t);
\coordinate (P5\t) at (7,4, \t);
\coordinate (P6\t) at (6,5, \t);
\coordinate (P7\t) at (4, 6, \t);
\coordinate (P8\t) at (1, 7, \t);
\coordinate (P9\t) at (0, 7, \t);
\coordinate (P10\t) at (-3, 6, \t);
\coordinate (P11\t) at (-5, 5, \t);
\coordinate (P12\t) at (-6, 4, \t);
\coordinate (P13\t) at (-6, 3, \t);
\coordinate (P14\t) at (-5, 2, \t);
\coordinate (P15\t) at (-3, 1, \t);

\foreach \i in {\t,...,2} {
	\xdef \c {\c+1};
	\pgfmathparse {\i-1};
	\pgfmathdiv{\pgfmathresult}{1};
	\xdef \k {\pgfmathresult};

	\foreach \j in {1,...,8}
		\coordinate (P\j\k) at ($(P\j\i) + (\c,0,-1)$);

	\foreach \j in {9,...,15,0}
		\coordinate (P\j\k) at ($(P\j\i) - (\c,0,1)$);
	
}

\foreach \i in {5,6,7,8}{
	\pgfmathparse{9-\i}
	\pgfmathdiv{\pgfmathresult}{1};
	\xdef \k {\pgfmathresult};

	\foreach \j in {0,...,15}{
		\coordinate (P\j\i) at ($(P\j\k) + (0,0,\i-\k)$);
	 }

}

\foreach \i in {1,2,7,8}{
	\draw[fill,darkgreen,opacity=.2] (P0\i) -- (P1\i) -- (P2\i) -- (P3\i) -- (P4\i) -- (P5\i) -- (P6\i) -- (P7\i) -- (P8\i) -- (P9\i) -- (P10\i) -- (P11\i) --(P12\i) -- (P13\i) -- (P14\i) -- (P15\i) -- cycle;
}
\foreach \i in {3,6}{
	\draw[fill,darkgreen,opacity=.4] (P0\i) -- (P1\i) -- (P2\i) -- (P3\i) -- (P4\i) -- (P5\i) -- (P6\i) -- (P7\i) -- (P8\i) -- (P9\i) -- (P10\i) -- (P11\i) -- (P12\i) -- (P13\i) -- (P14\i) -- (P15\i) -- cycle;
}
\foreach \i in {4,5}{
	\draw[fill,darkgreen,opacity=.6] (P0\i) -- (P1\i) -- (P2\i) -- (P3\i) -- (P4\i) -- (P5\i) -- (P6\i) -- (P7\i) --  (P8\i) -- (P9\i) -- (P10\i) -- (P11\i) --(P12\i) -- (P13\i) -- (P14\i) -- (P15\i) --  cycle;
}

\foreach \i in {0,...,15}
	\draw[fill,blue,very thick] (P\i1) --(P\i2) -- (P\i3) -- (P\i4) -- (P\i5) -- (P\i6) --(P\i7) -- (P\i8) --cycle;

\foreach \i in {1,...,8}
	\foreach \j in {0,1,...,15}
		\draw[fill,opacity=.8] (P\j\i) circle (\s pt);

\draw[dotted,darkgreen] ($(P44)-(0,5,0)$) -- ($(P54)+(0,5,0)$);
\draw[dotted,blue] ($(P134)-(0,5,0)$) -- ($(P124)+(0,5,0)$);
\draw[dotted,blue] ($(P133)-(0,5,0)$) -- ($(P123)+(0,5,0)$);
\draw[dotted,blue] ($(P132)-(0,5,0)$) -- ($(P122)+(0,5,0)$);
\draw[dotted,blue] ($(P131)-(0,5,0)$) -- ($(P121)+(0,5,0)$);

\draw[dotted,darkgreen] ($(P34)-(0,3,0)$) -- ($(P64)+(0,3,0)$);
\draw[dotted,darkgreen] ($(P24)-(0,2,0)$) -- ($(P74)+(0,2,0)$);
\draw[dotted,darkgreen] ($(P14)-(0,1,0)$) -- ($(P84)+(0,1,0)$);

\draw[dotted,darkgreen] ($(P04)-(0,1,0)$) -- ($(P94)+(0,1,0)$);

\draw[dotted,darkgreen] ($(P154)-(0,2,0)$) -- ($(P104)+(0,2,0)$);
\draw[dotted,darkgreen] ($(P144)-(0,3,0)$) -- ($(P114)+(0,3,0)$);
\draw[dotted,darkgreen] ($(P134)-(0,4,0)$) -- ($(P124)+(0,4,0)$);

\draw[<->,blue] ($(P121)-(0,-5.5,0)$) -- node[above]{\scriptsize{$3$}}  ($(P122)-(0,-5.5,0)$);
\draw[<->,blue] ($(P122)-(0,-5.5,0)$) -- node[above]{\scriptsize{$2$}}  ($(P123)-(0,-5.5,0)$);
\draw[<->,blue] ($(P123)-(0,-5.5,0)$) -- node[above]{\scriptsize{$1$}}  ($(P124)-(0,-5.5,0)$);
\draw[<->] ($(P131)-(0,5.5,0)$) -- node[below] {\scriptsize{$\left\lceil \frac{|A|}{2} \right\rceil -1$}} ($(P132)-(0,5.5,0)$);

\draw[<->,darkgreen] ($(P124)-(0,-4.5,0)$) -- node[above]{\scriptsize{$1$}}  ($(P114)-(0,-3.5,0)$);
\draw[<->,darkgreen] ($(P114)-(0,-3.5,0)$) -- node[above]{\scriptsize{$2$}}  ($(P104)-(0,-2.5,0)$);
\draw[<->,darkgreen] ($(P104)-(0,-2.5,0)$) -- node[above]{\scriptsize{$3$}}  ($(P94)-(0,-1.5,0)$);
\draw[<->] ($(P154)-(0,2,0)$) -- node[below]{\scriptsize{$\left\lceil \frac{|B|}{4}\right\rceil-1$}}  ($(P04)-(0,1,0)$);

\draw[<->,darkgreen] ($(P94)-(0,-1.5,0)$) -- node[above]{\scriptsize{$1$}}  ($(P84)-(0,-1.5,0)$);

\draw[<->,darkgreen] ($(P84)-(0,-1.5,0)$) -- node[above]{\scriptsize{$3$}}  ($(P74)-(0,-2.5,0)$);
\draw[<->] ($(P24)-(0,3,0)$) -- node[below]{\scriptsize{$\left\lfloor \dfrac{\left\lceil \frac{|B|}{2}\right\rceil}{2} \right\rfloor-1$}}  ($(P14)-(0,2,0)$);

\draw[<->,darkgreen] ($(P74)-(0,-2.5,0)$) -- node[above]{\scriptsize{$2$}}  ($(P64)-(0,-3.5,0)$);
\draw[<->,darkgreen] ($(P64)-(0,-3.5,0)$) -- node[above]{\scriptsize{$1$}}  ($(P54)-(0,-4.5,0)$);

\draw[<->] ($(P124)+(0,6.5,0)$) -- node[above]{\scriptsize{$\left\lfloor \dfrac{\left\lceil\frac{|B|}{2}\right\rceil}{2} \right\rfloor \left(\left\lceil \frac{|B|}{4} \right\rceil -1\right) +1$}}  ($(P54)+(0,6.5,0)$);

\draw[<->] ($(P131) + (-1,-3,0)$) -- node[above,rotate=90] {\scriptsize{$2\left\lceil \frac{|B|}{4} \right\rceil -1$}} ($(P121) + (-1,3,0)$);

\node[below left, inner sep=-.5mm,xshift=-1mm,yshift=-1mm] at (0,3) {$z$};
\draw[->] (0,3) -++(14,0) node[right] {$x$};
\draw[->] (0,3) -++(0,5) node[left,yshift=-1mm] {$y$};

\end{tikzpicture}
        \caption{Top view of a contact representation of $K_{8,16}$ on
          the integer grid.}
        \label{fig:bi-ip}
\end{figure}
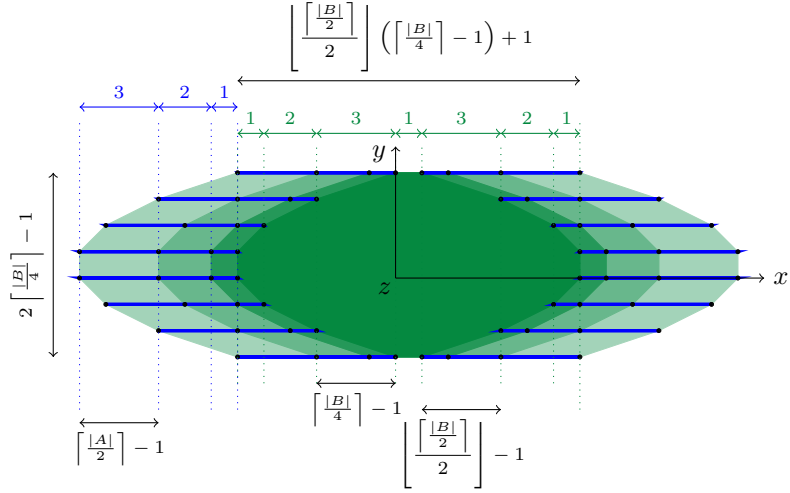

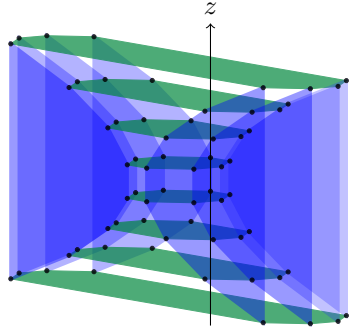
\begin{figure}[tb]
        \centering
        \tdplotsetmaincoords{83}{55}
\begin{tikzpicture}[tdplot_main_coords,scale=.45,blend group=normal]

\def \s {1.7};
\def \t {4};
\edef \c {0};

\coordinate (P0\t) at (0, 0, \t);
\coordinate (P1\t) at (1, 0, \t);
\coordinate (P2\t) at (2, 1, \t);
\coordinate (P3\t) at (2, 2, \t);
\coordinate (P4\t) at (1,3, \t);
\coordinate (P5\t) at (0,3, \t);
\coordinate (P6\t) at (-1, 2, \t);
\coordinate (P7\t) at (-1, 1, \t);

\foreach \i in {\t,...,2} {
	\xdef \c {\c+1};
	\pgfmathparse {\i-1};
	\pgfmathdiv{\pgfmathresult}{1};
	\xdef \k {\pgfmathresult};

	\foreach \j in {1,...,\t}
		\coordinate (P\j\k) at ($(P\j\i) + (\c,0,-1)$);

	\foreach \j in {5,6,7,0}
		\coordinate (P\j\k) at ($(P\j\i) - (\c,0,1)$);
	
}

\foreach \i in {5,6,7,8}{
	\pgfmathparse{9-\i}
	\pgfmathdiv{\pgfmathresult}{1};
	\xdef \k {\pgfmathresult};

	\foreach \j in {0,...,7}{
		\coordinate (P\j\i) at ($(P\j\k) + (0,0,\i-\k)$);
	 }

}

\foreach \i in {5,6,7}
	\draw[fill,blue,opacity=.3,very thick] (P\i1) --(P\i2) -- (P\i3) -- (P\i4) -- (P\i5) -- (P\i6) --(P\i7) -- (P\i8) --cycle;

\foreach \i in {1,...,8}{
	\draw[fill,darkgreen,opacity=.7] (P0\i) -- (P1\i) -- (P2\i) -- (P3\i) -- (P4\i) -- (P5\i) -- (P6\i) -- (P7\i) -- cycle;
}

\foreach \i in {4,3}
	\draw[fill,blue!80,opacity=.3,very thick] (P\i1) --(P\i2) -- (P\i3) -- (P\i4) -- (P\i5) -- (P\i6) --(P\i7) -- (P\i8) --cycle;
\foreach \i in {2,1}
	\draw[fill,blue,opacity=.4,very thick] (P\i1) --(P\i2) -- (P\i3) -- (P\i4) -- (P\i5) -- (P\i6) --(P\i7) -- (P\i8) --cycle;

\foreach \i in {0}
	\draw[fill,blue,opacity=.4,very thick] (P\i1) --(P\i2) -- (P\i3) -- (P\i4) -- (P\i5) -- (P\i6) --(P\i7) -- (P\i8) --cycle;

\foreach \i in {1,...,8}
	\foreach \j in {0,1,...,7}
		\draw[fill,opacity=.8] (P\j\i) circle (\s pt);

\draw[->] (0,3,0) -++(0,0,9) node[above] {$z$};
\end{tikzpicture}
        \caption{Side view of a contact representation of $K_{8,8}$ on
          the integer grid.}
        \label{fig:bi-i}
\end{figure}

\begin{proposition}
  The graph $K_{3,3}$ admits a contact representation in 3D using unit
  equilateral triangles.
\end{proposition}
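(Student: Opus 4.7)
The plan is to give an explicit symmetric construction: three horizontal unit equilateral triangles $A_1, A_2, A_3$ stacked at different heights will represent one side of $K_{3,3}$, and three triangles $B_1, B_2, B_3$ related by a $120^{\circ}$ rotation about the vertical axis will represent the other side, in the spirit of the complete-bipartite construction used in the proof of Theorem~\ref{thm:bipartite}.

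Concretely, inscribe $A_1$ and $A_3$ in horizontal circles of radius $r = 1/\sqrt{3}$ at heights $z = 0$ and $z = 1$, both with vertices at angles $0^{\circ}, 120^{\circ}, 240^{\circ}$, so that corresponding vertices lie directly above each other and each $A_i$ has side length $1$. Place $A_2$ congruently at $z = 1/2$, but with vertices rotated to angles $\phi, \phi + 120^{\circ}, \phi + 240^{\circ}$ for an angle $\phi$ to be fixed. Define $B_j$ (for $j = 1, 2, 3$) as the triangle whose three corners are the angle-$(j-1)\cdot 120^{\circ}$ vertex of $A_1$, the angle-$(\phi+(j-1)\cdot 120^{\circ})$ vertex of $A_2$, and the angle-$(j-1)\cdot 120^{\circ}$ vertex of $A_3$. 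By construction, each $A_i$ and each $B_j$ share one common corner, which is the intended contact point, and the three $B_j$'s are $120^{\circ}$-rotations of $B_1$ about the $z$-axis.

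The first quantitative step is to choose $\phi$ so that each $B_j$ is also unit equilateral. One edge of $B_j$ is the vertical segment between the $A_1$- and $A_3$-vertices at angle $(j-1)\cdot 120^{\circ}$, which automatically has length $1$. The other two edges each have squared length $2r^{2}(1-\cos\phi) + 1/4$, which equals $1$ precisely when $\cos\phi = -1/8$. So I fix $\phi = \arccos(-1/8)$.

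The main obstacle is then to verify that no two of the six triangles meet outside of their intended common corner. The $A$-triangles lie in three distinct horizontal planes and are pairwise disjoint. The $xy$-projection of each $B_j$ is a short line segment between two distinct points on the circle of radius $r$ at angular positions $(j-1)\cdot 120^{\circ}$ and $\phi + (j-1)\cdot 120^{\circ}$; a direct computation (using $\phi = \arccos(-1/8) \approx 97^{\circ}$) shows that these segments for $j = 1$ and $j = 2$---and hence for all pairs by rotational symmetry---are disjoint, so the $B_j$'s are pairwise disjoint. For an $A_i$--$B_j$ pair, slicing $B_j$ by the horizontal plane of $A_i$ gives either just the shared vertex (when $i \in \{1, 3\}$, since only one vertex of $B_j$ lies at that height) or a chord from the shared vertex to the midpoint of the vertical edge of $B_j$ (when $i = 2$); in the latter case an angle comparison at the shared vertex shows that the chord immediately exits $A_2$, and because the chord's $y$-coordinates do not reach the opposite edge of $A_2$ it never re-enters. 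This completes the verification.
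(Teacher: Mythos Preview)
Your construction is correct and is essentially the same as the paper's: three horizontal unit equilateral triangles at heights $0,\,1/2,\,1$ centred on the $z$-axis, with the middle one rotated, together with three ``vertical'' triangles obtained by the $120^\circ$ rotational symmetry; the governing equation $\cos\phi=-1/8$ (equivalently $\alpha=\arccos(-1/8)$ in the paper, with the paper's rotation $\beta=120^\circ-\alpha\approx22.82^\circ$) is the same in both. Your disjointness verification is actually more explicit than the paper's, which essentially just asserts the non-intersection properties after setting up the configuration.
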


\begin{proof}
  Our contact representation consists of three horizontal and three
  vertical unit equilateral triangles; see Fig.~\ref{fig:K33}(a).
  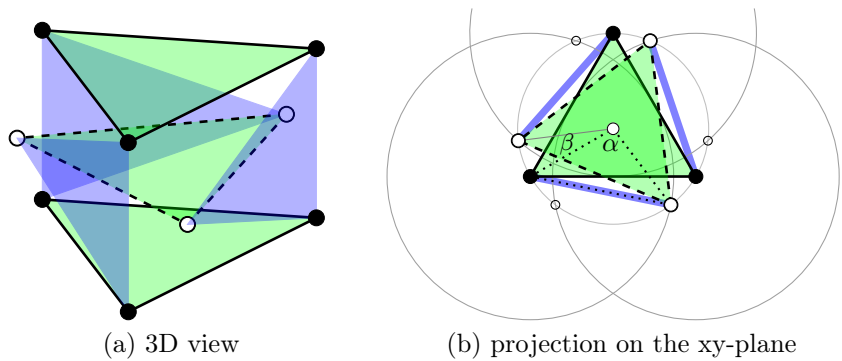
\begin{figure}[tb]
    \centering
    \begin{tabular}{c@{\qquad}c}
       \begin{tikzpicture}[x  = {(5.5mm,2.5mm)},
                    y={(-1.1cm,0.4cm)},x={(1cm,0.5cm)},
                    z={(0mm,9mm)},
                   scale=2.5]

\coordinate (P1) at (1,0,0);
\coordinate (P2) at (.5,0.87,0);
\coordinate (P3) at (0,0,0);
\coordinate (P4) at (1.05, 0.19, 0.5);
\coordinate (P5) at (0.3,0.81,0.5);
\coordinate (P6) at (0.15,-0.15,0.5);
\coordinate (P7) at (1,0,1);
\coordinate (P8) at (.5,0.87,1);
\coordinate (P9) at (0,0,1);

\draw[fill,blue,opacity=.3] (P2) -- (P4) --(P8) -- (P2);

\draw[fill,green,opacity=.3] (P1) -- (P2) --(P3) -- (P1);
\draw[line width=1pt] (P1) -- (P2) --(P3) -- (P1);
\draw[fill,green,opacity=0.3] (P4) -- (P5) --(P6) -- (P4);
\draw[dashed,line width=1pt] (P4) -- (P5) --(P6) -- (P4);
\draw[fill,green,opacity=0.3,] (P7) -- (P8) --(P9) -- (P7);
\draw[line width=1pt] (P7) -- (P8) --(P9) -- (P7);

\foreach \i in {4,5,6} {
  \draw[fill] (P\i) circle (1.2pt);
  \draw[fill,white] (P\i) circle (.9pt);
}

\draw[fill,blue,opacity=.3] (P1) -- (P6) --(P7) -- (P1);
\draw[fill,blue,opacity=.3] (P3) -- (P5) --(P9) -- (P3);

\foreach \i in {1,2,3,7,8,9}
  \draw[fill] (P\i) circle (1.2pt);

\end{tikzpicture} &
      \begin{tikzpicture}[scale=2.2]
 
	\pgfmathsetmacro \r {sin(60)};
	\pgfmathsetmacro \R {tan(30)};
	\pgfmathsetmacro \a {acos(-1/8)};
	\pgfmathsetmacro \b {-acos(-1/8)};

	\coordinate (P1) at (0,0);
	\coordinate (P2) at (1,0);
	\coordinate (X) at ($(P1)!.5!(P2)$);
	\coordinate (P3) at ($(X)!sin(60)*2!90:(P2)$);
	\coordinate (O) at ($(X)!1/3!(P3)$);

	\draw[gray!50] (O) circle (\R);

	\foreach \i in {1,...,3}{
          \coordinate (G\i) at 	($(O)!1!\a:(P\i)$);	
          \coordinate (H\i) at 	($(O)!1!\b:(P\i)$);	
          \draw[line width=2.75 pt,cap=round,blue,opacity=.5] (P\i) -- (G\i);
        }
        \draw[gray!75] (P1) circle (\r);
        \draw[gray!75] (P2) circle (\r);
        \draw[gray!75] ([shift=(170:\r)]P3) arc (170:370:\r);

        \draw[fill,green,opacity=.3] (P1) -- (P2) --(P3) -- (P1);
	\draw[line width=1pt] (P1) -- (P2) --(P3) --(P1);

	\draw[fill,green,opacity=.3] (G1) -- (G2) -- (G3) -- (G1);	
	\draw[dashed,line width=1pt] (G1) -- (G2) -- (G3) -- (G1);

	\draw[black,thick,dotted] (P1) -- (O) -- (G1) -- (P1);
        
        \draw[gray] (O) -- (G3);
        \coordinate[label=below:{\small{$\beta$}}] (B) at ($(O)!.4!(G3) +(-.05,.055)$); 
	\coordinate[label=below:{$\alpha$}] (A) at ($(X)!1/3!(P3)-(.015,.01)$);

        \draw[fill,white] (O) circle  (1pt);
        \draw[] (O) circle  (1pt);

	\foreach \i in {1,...,3}{
          \draw[] (H\i) circle  (.75pt);
          \draw[fill] (P\i) circle  (1.2pt);
          \draw[fill] (G\i) circle  (1.2pt);
          \draw[white,fill] (G\i) circle  (0.9pt);
	}
\end{tikzpicture} \label{fig:k}\\
      (a) 3D view &
      (b) projection on the xy-plane\\
    \end{tabular}
    \caption{A contact representation of $K_{3,3}$ with unit equilateral
    triangles.}
    \label{fig:K33}
  \end{figure}
  The three horizontal triangles have z-coordinates~0, $1/2$, 1, and
  are centered at the z-axis.  The topmost triangle is right above the
  bottommost one, whereas the middle triangle is rotated by an
  angle~$\beta$.  In the projection on the xy-plane, all their
  vertices lie on a circle of of radius $\tan(30^\circ)$; see the
  small gray circle in Fig.~\ref{fig:K33}(b).  The figure also shows
  three big gray circles of radius $\sin(60^\circ)$ (which is the
  height of a unit equilateral triangle) centered on the vertices of
  the top- and bottommost triangles.  Each big circle intersects the
  small circle in two distinct points; in Fig.~\ref{fig:K33}(b), the
  left one is marked with a small circle, the right one with a bigger
  circle.  Connecting the right intersection points (bigger circles)
  yields the vertices of the middle horizontal triangle.
  The side lengths of the
  black dotted triangle are $\tan(30^\circ)$, $\tan(30^\circ)$, and
  $\sin(60^\circ)$.  By the law of cosines,
  $\alpha=120^\circ-\beta=\arccos(-1/8)$.  Hence, $\beta \approx 22.82^\circ$.
\end{proof}

\subsection{1-Planar Cubic Graphs}
\label{sub:1-planar}

A simple consequence of the circle-packing theorem~\cite{Koebe36} is
that every planar graph (of minimum degree 3) is the contact graph of 
convex polygons in the plane.
In this section, we consider a generalization of planar graphs called
\emph{1-planar graphs} that have a drawing in 2D in which every
edge (Jordan curve) is crossed at most once.

Our approach to realizing these graphs will use the \emph{medial
graph}~\Gmed associated with a plane graph~$G$ (or, to be more
general, with any graph that has an edge ordering).  The vertices
of~\Gmed are the edges of~$G$, and two
vertices of~\Gmed are adjacent if the corresponding edges of~$G$ are
incident to the same vertex of~$G$ and consecutive in the circular
ordering around that vertex.  The medial graph is always 4-regular.
If $G$ has no degree-1 vertices, \Gmed has no loops.  If~$G$ has
minimum degree~3, \Gmed is simple.
Also note that \Gmed is connected if and only if~$G$ is connected.

\begin{theorem}
  \label{thm:1-planar}
  Every 1-plane cubic graph with $n$ vertices can be realized as a 
  contact graph of triangles with vertices on a grid of size
  $(3n/2-1) \times (3n/2-1) \times 3$.
  Given a 1-planar embedding of the graph, it takes
  linear time to construct such a realization.
\end{theorem}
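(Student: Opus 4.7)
The plan is to associate one corner point $c_e \in \mathbb{R}^3$ to each edge $e \in E(G)$ and define the triangle $P_v$ representing a vertex $v$ as the convex hull of $\{c_e : e \ni v\}$. This automatically gives $c_{uv} \in P_u \cap P_v$ whenever $uv \in E(G)$; the real work is to place the $3n/2$ corner points so that each $P_v$ is a nondegenerate triangle, no two distinct triangles meet away from a required shared corner, and all coordinates lie on the prescribed grid.

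For the xy-coordinates I would draw an auxiliary planar graph on an orthogonal grid. Form the planarization $G^\times$ of the given 1-plane embedding by replacing each crossing with a dummy degree-4 vertex, and then build the medial-type planar graph $M$ whose vertex set is $E(G)$ and whose triangular faces correspond to the cyclic orderings of the three edges around each original vertex of $G$. Since $|V(M)|=3n/2$ and $M$ is planar, a standard straight-line grid-drawing algorithm (de~Fraysseix--Pach--Pollack or Schnyder) gives a drawing of $M$ on a grid of size $(3n/2-1) \times (3n/2-1)$ in which every vertex-triangle is nondegenerate and in convex position. This determines $(x_e, y_e)$ for every $c_e$.

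I would then lift the corners to three z-levels in $\{0,1,2\}$ to resolve the crossings of $G$. Because each edge of a 1-plane drawing lies in at most one crossing, one can set $z(c_e)=1$ for every uncrossed edge and $\{z(c_e), z(c_{e'})\} = \{0,2\}$ for each crossing $\{e,e'\}$. Three points always determine a plane, so each $P_v$ is then a planar, nondegenerate triangle with integer coordinates inside the claimed bounding box.

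The main obstacle is establishing the non-intersection property. For triangles whose xy-projections are disjoint it is immediate from the planar drawing of $M$, so the critical configuration is around a crossing of $e=ab$ and $e'=cd$: the xy-projections of $P_a \cup P_b$ and $P_c \cup P_d$ overlap a neighborhood of the crossing point, and one must argue that above this region the z-coordinate of $P_a, P_b$ stays in $[1,2]$ while that of $P_c, P_d$ stays in $[0,1]$, so that the two families pass strictly over and under each other. Making this clean requires selecting the local $\{0,2\}$-assignments consistently, for instance so that at every vertex all lifted corners of $P_v$ sit on the same side of $z=1$; the planar structure of $G^\times$ should permit such a choice. Once this is secured, a short case analysis on how many incident edges of a given vertex are crossed completes the verification, and because every ingredient is linear, the overall algorithm runs in $O(n)$ time.
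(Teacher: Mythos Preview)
Your plan has a genuine gap: the auxiliary graph $M$ you describe is not planar in general, so you cannot feed it to Schnyder or dFPP.  For a cubic graph $G$, the graph on vertex set $E(G)$ in which you insert a triangle on the three edges around every vertex is exactly the line graph $L(G)$ (in degree~$3$ every pair of incident edges is ``consecutive'' regardless of rotation).  But $L(G)$ need not be planar when $G$ is not: already $L(K_{3,3})$, the $3\times 3$ rook's graph, contains a $K_{3,3}$-subdivision.  More conceptually, if your $M$ were planar with the $n$ vertex-triangles realised as pairwise interior-disjoint faces, that drawing would already be a corner-contact triangle representation of $G$ in the plane, with exactly two triangles meeting at each corner---which forces $G$ to be planar.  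So for any non-planar $1$-plane cubic $G$ your xy-step cannot succeed, and the later $z$-separation has nothing consistent to work from.  (Your own paragraph about overlapping projections near a crossing is in tension with the earlier claim that $M$ is planar; one of the two has to give.)

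The idea you are missing is the one the paper uses: do \emph{not} keep two distinct medial points for a crossing pair $\{e,f\}$.  Instead, collapse them to a single vertex $v_{ef}$ of degree~$8$ in a modified medial graph $\Gmed'$; this graph \emph{is} planar (it is essentially the medial graph of the planarisation $G^\times$ with the four half-edge vertices at each crossing identified), and Schnyder draws it on a $(3n/2-1)\times(3n/2-1)$ grid.  In that drawing the four triangles for the endpoints $a,b,c,d$ of $e,f$ all meet at the single point $v_{ef}$; one then lifts the shared corner for the two triangles of one edge, say $e=ab$, to $z=1$ (or to $z=-1$ if the crossing sits on the outer face), which severs the unwanted $a$--$c$, $a$--$d$, $b$--$c$, $b$--$d$ contacts while keeping the intended ones.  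Because triangles overlap in projection only along edges of $\Gmed'$ incident to a crossing vertex, this local lift suffices; no global parity or $2$-colouring argument is needed.
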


\begin{proof}
  Let $G$ be the given 1-plane graph.
  Let~$\Gmed'$ be the medial graph of~$G$ with
  the slight modification that, for each
  pair~$\{e,f\}$ of crossing edges, $\Gmed'$ has only 
  one vertex~$v_{ef}$, which is incident to all (up to eight) edges that
  immediately precede or succeed~$e$ and~$f$ in the circular order
  around their endpoints; see Fig.~\ref{fig:1-planar}\subref{fig:medial}.
  \begin{figure}[tb]
    \begin{subfigure}[b]{.38\textwidth}
      \centering
      \includegraphics[page=1]{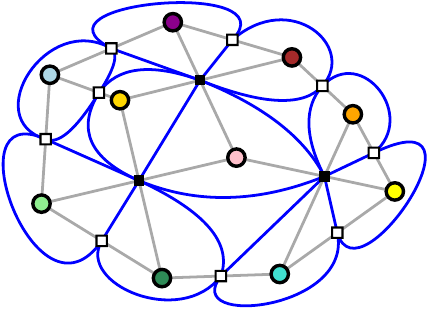}
    \end{subfigure}
    \hfill
    \begin{subfigure}[b]{.46\textwidth}
      \centering
      \includegraphics[page=2]{1-planar}
    \end{subfigure}

    \begin{subfigure}[t]{.38\textwidth}
      \caption{a 1-plane cubic graph $G$ and its (modified) medial graph
        $\Gmed'$}
      \label{fig:medial}
    \end{subfigure}
    \hfill
    \begin{subfigure}[t]{.55\textwidth}
      \caption{representation of~$G$ with triangles; the numbers
        indicate the z-coordinates of the triangle corners (unlabeled
        vertices lie in the xy-plane)}
      \label{fig:triangles}
    \end{subfigure}
    
    \caption{1-plane cubic graphs admit compact triangle
      contact representations.}
    \label{fig:1-planar}
  \end{figure}
  The order of the edges around~$v_{ef}$ is the obvious one.
  Using Schnyder's linear-time algorithm~\cite{s-epgg-SODA90} for
  drawing 3-connected graphs\footnote{%
    If $\Gmed'$ is not 3-connected, we add dummy edges to fully
    triangulate it and then remove these edges to obtain a drawing of
    $\Gmed'$.} straight-line,
  we draw~$\Gmed'$ on a planar grid of size $(3n/2-1) \times (3n/2-1)$.
  Note that this is nearly a contact representation of~$G$ except that, 
  in each crossing point, \emph{all} triangles of the respective four
  vertices touch.  Figure~\ref{fig:1-planar}\subref{fig:triangles} is a sketch of the
  resulting drawing (without using Schnyder's algorithm)
  for the graph in Fig.~\ref{fig:1-planar}\subref{fig:medial}.  

  We
  add, for each crossing $\{e,f\}$, a copy~$v'_{ef}$ of the crossing
  point~$v_{ef}$ one unit above.  Then we select an arbitrary one of
  the two
  edges, say~$e=uv$.  Finally we make the two triangles corresponding 
  to~$u$ and~$v$ incident to~$v'_{ef}$ without modifying the
  coordinates of their other vertices.  The labels in
  Fig.~\ref{fig:1-planar}\subref{fig:triangles} are the resulting z-coordinates for our
  example; all unlabeled triangle vertices lie in the xy-plane.

  If a crossing is on the outer face of~$G$, it can happen that a
  vertex of~$G$ incident to the crossing becomes the outer face
  of~$\Gmed'$; see Fig.~\ref{fig:B-configuration} where this vertex is
  called~$a$ and the crossing edges are $ac$ and $bd$.  Consider the
  triangle~$\Delta_a$ that represents~$a$ in~$\Gmed'$.  It covers the
  whole drawing of $\Gmed'$.  To avoid intersections with triangles
  that participate in other crossings, we put the vertex of~$\Delta_a$
  that represents the crossing to $z=-1$, together with the vertex of
  the triangle~$\Delta_c$ that represents~$c$.

  \begin{figure}[tb]
    \centering
    \includegraphics{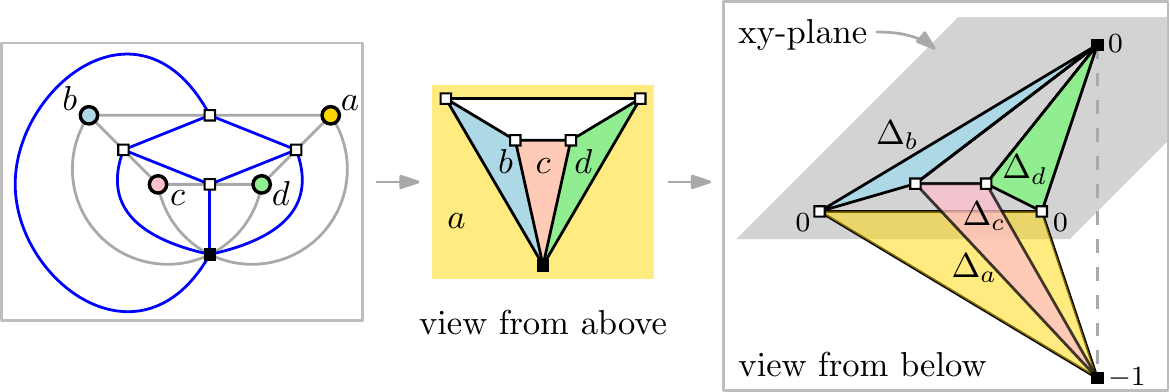}
    \caption{left: graphs $G$ (with a crossing on the outer face) and
      $\Gmed'$; center: straight-line drawing of $\Gmed'$; right:
      resulting 3D representation of~$G$ (numbers are z-coordinates).}
    \label{fig:B-configuration}      
  \end{figure}
  
  Our 3D drawing projects vertically back to the planar drawing, so
  all triangles are interior disjoint (with the possible exception of
  a triangle that represents the outer face of~$\Gmed'$).
  Triangles that share an edge
  in the projection are incident to the same crossing~-- but this
  means that at least one of the endpoints of the shared edge has a
  different z-coordinate.  Hence, all triangle contacts are
  vertex--vertex contacts.  Note that some triangles may touch each
  other at $z=1/2$ (as the two central triangles in
  Fig.~\ref{fig:1-planar}\subref{fig:triangles}), but our contact model tolerates this.
\end{proof}

\subsection{Cubic Graphs}
\label{sub:cubic}

We first solve a restricted case and then show how this helps us to
solve the general case of cubic graphs.

\begin{lemma}
  \label{lem:two-connected-cubic}
  Every 2-edge-connected cubic graph with $n$ vertices can be realized
  as a contact graph of triangles with vertices on a grid of size
  $3 \times n/2 \times n/2$.  
  It takes $O(n \log^2 n)$ time to construct such a realization.
\end{lemma}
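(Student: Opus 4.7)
The plan is to exploit Petersen's theorem: every 2-edge-connected cubic graph $G$ admits a perfect matching $M$, and its complement $F := E(G)\setminus M$ is a 2-factor, that is, a spanning disjoint union of cycles $C_1,\ldots,C_k$. Computing $M$ in a bridgeless cubic graph dominates the running time and can be done in $O(n\log^2 n)$ using known matching algorithms for this class; identifying the cycles of $F$ and carrying out the geometric construction below should take linear time.

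For the geometry I would use the three allowed z-levels $z\in\{0,1,2\}$ in a type-specific way: every matching-edge corner lies at $z=1$, while every cycle-edge corner lies at $z=0$ or $z=2$. Each triangle, representing some vertex $v$, is therefore vertical with one apex at $z=1$ (shared with $v$'s matching partner) and two base corners at a common z-level in $\{0,2\}$ (shared with $v$'s two cycle-neighbors in $F$). For each cycle $C_i$ I would arrange its $\ell_i$ triangles as a ring whose base corners all share one of the two z-levels, and pack these rings as $x$-contiguous slabs of width about $\ell_i/2$; the total extent $\sum_i \ell_i/2 = n/2$ in $x$ then matches the target, and the $y$-extent absorbs the ``radius'' of each ring within the allotted $n/2$.

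The main obstacle is reconciling the z-assignment with the matching: the construction works cleanly if every matching edge connects cycles of opposite z-levels, which amounts to requiring that the \emph{cycle graph} (one node per $C_i$, one edge per matching edge) be bipartite. This is not automatic -- most conspicuously, a matching edge may have both endpoints in the same cycle of $F$, as in $K_4$. I would address this by modifying the pair $(M,F)$ along a short alternating path in $G$ to remove intra-cycle matching edges and, where necessary, to make the cycle graph bipartite; remaining difficulties would be handled by a local modification that reflects a single triangle's base corners across $z=1$, effectively ``twisting'' one cycle to change its color assignment at that position. The technical heart is then verifying that the combined construction yields pairwise interior-disjoint triangles on the $3\times n/2 \times n/2$ grid and can be executed in linear time once $M$ is known.
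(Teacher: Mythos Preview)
Your proposal takes a genuinely different route from the paper, and the difference starts with how you read the grid dimensions. The paper uses the factor~$3$ in the $xy$-plane, not in~$z$: it draws the $n$ cycle vertices on the boundary of a $3\times n/2$ rectangle with a hub~$h$ inside, so every triangle has its base on that outer cycle and its apex on the vertical line through~$h$, at a z-level that is \emph{unique to each matching edge}. Thus the paper uses $n/2$ distinct z-levels, one per matching pair, and the only overlaps in the $xy$-projection are between a chord-based triangle and the triangles of its own cycle, which are separated in~$z$ by construction. There is no colouring or bipartiteness condition to satisfy.

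Your scheme, by contrast, pins all matching corners to the single level $z=1$ and all cycle corners to $z\in\{0,2\}$, which forces every cycle to sit entirely at one of two levels and hence requires the cycle multigraph (one node per $C_i$, one edge per matching edge) to be bipartite. This is the real gap: that bipartiteness cannot in general be obtained by modifying $(M,F)$ along alternating paths. In $K_{3,3}$ every $2$-factor is a Hamiltonian $6$-cycle, so for \emph{every} choice of perfect matching all three matching edges are intra-cycle and the cycle graph is a single vertex with three loops; the same happens in $K_4$. Your ``twist'' fix does not escape this either, at least not as stated: if you reflect one triangle's base corners across $z=1$, the corners it shares with its two cycle neighbours change level, forcing those neighbours to flip as well, and the flip propagates around the whole cycle. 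So within a cycle the base level is constant, and you are back to needing a bipartite cycle graph.

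Even setting the colouring issue aside, the placement of the $z=1$ apexes is underdetermined in your sketch: for matched triangles in different rings to share an apex you need the supporting structures of the two rings to meet at prescribed points, and you have not argued that this can be arranged simultaneously for all $n/2$ matchings while keeping the triangles interior-disjoint on an $n/2\times n/2$ footprint. The paper avoids all of this by spending the third dimension on $n/2$ levels rather than three, which decouples the matching pairs from one another.
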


\begin{proof}
  By Petersen's theorem~\cite{p-trg-AM1891}, any
  given 2-edge-connected cubic graph~$G$ has a perfect matching. Note that
  removing this matching leaves a 2-regular graph, i.e., a set
  of vertex-disjoint cycles $C_1,\dots,C_k$; see
  Fig.~\ref{fig:two-edge-connected}(a).  Such a partition can be
  computed in $O(n \log^2 n)$ time~\cite{ds-pmbcg-SOFSEM10}.
  Let $n=|V(G)|$ and $n_1=|V(C_1)|, \dots, n_k=|V(C_k)|$.
  Note that $n=n_1+\dots+n_k$.  We now construct a planar
  graph $H=(V,E)$ with $n+1$ vertices that will be the ``floorplan''
  for our drawing of~$G$.  The graph~$H$ consists of an $n$-wheel 
  with outer cycle 
  $v_{1,1},\dots,v_{1,n_1},\dots,v_{k,1},\dots,v_{k,n_k}$, $n$ spokes and a
  hub~$h$, with additional \emph{chords} $v_{1,1}v_{1,n_1}, v_{2,1}v_{2,n_2},
  \dots, v_{k,1}v_{k,n_k}$.  We call the edges $v_{1,n_1}v_{2,1}, \dots,
  v_{k,n_k}v_{1,1}$ \emph{dummy edges} (thin gray in
  Fig.~\ref{fig:two-edge-connected}(b) and~(c)) and the other edges on
  the outer face of the wheel \emph{cycle edges}.
  
  \begin{figure}[tb]
    \centering
    \includegraphics{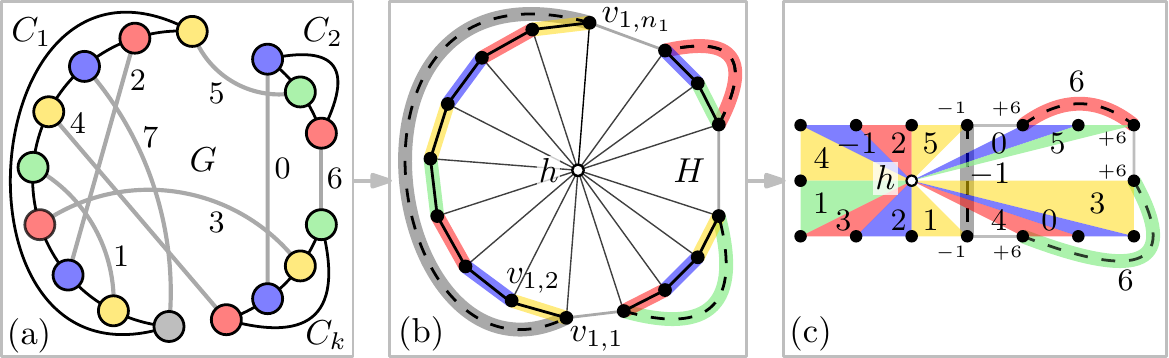}
    \caption{Representing a 2-edge-connected cubic graph~$G$ by
      touching triangles in 3D: (a)~partition of the edge set into
      disjoint cycles and a perfect matching (the numbers denote a
      permutation of the matching edges); (b)~the graph~$H$;
      (c)~3D contact representation of~$G$; the
      numbers inside the triangles indicate the z-coordinates of the
      triangle apexes (above~$h$), the small numbers denote 
      the non-zero z-coordinates of the vertices.
    }
    \label{fig:two-edge-connected}
  \end{figure}

  The chords and cycle edges form triangles with apex~$h$.  More
  precisely, for every $i \in \{1,\dots,k\}$, the chord-based triangle
  $\Delta v_{i,1}v_{i,n_i}h$ and the $n_i-1$ cycle-based triangles
  $\Delta v_{i,1}v_{i,2}h, \dots, \Delta v_{i,n_i-1}v_{i,n_i}h$
  together represent the $n_i$ vertices in the cycle~$C_i$ of~$G$.
  For each~$C_i$, we still have the freedom to choose which vertex
  of~$G$ will be mapped to the chord-based triangle of~$H$.
  This will depend on the perfect matching in~$G$.  
  The cycle edges will be drawn in the 
  xy-plane (except for those incident to a chord edge); their apexes
  will be placed at various grid points above~$h$ such that matching
  triangles touch each other.  The chord-based triangles will be drawn
  horizontally, but not in the xy-plane.

  In order to determine the height of the triangle apexes, we go
  through the edges of the perfect matching in an arbitrary order; see
  the numbers in Fig.~\ref{fig:two-edge-connected}(a).  Whenever an
  endpoint~$v$ of the current edge~$e$ is the \emph{last} vertex of a
  cycle, we represent~$v$ by a triangle with chord base.  We place the
  apexes of the two triangles that represent~$e$ at the lowest free
  grid point above~$h$; see the numbers in
  Fig.~\ref{fig:two-edge-connected}(c).  Our placement ensures that,
  in every cycle (except possibly one, to be determined later), the
  chord-based triangle is the topmost horizontal triangle;
  all cycle-based triangles are below it.  This guarantees that
  the interiors of no two triangles intersect (and the triangles of
  adjacent vertices touch).

  Now we remove the chords from~$H$.  The resulting graph is a wheel;
  we can simply draw the outer cycle using grid points on the
  boundary of a $(3 \times n/2$)-rectangle 
  and the hub on any grid point in the interior.  
  (For the smallest cubic graph, $K_4$, we would actually need a
  $(3\times3)$-rectangle, counting grid lines, in order to have a
  grid point in the interior, but it's not hard to see that $K_4$ can
  be realized on a grid of size $3 \times 2 \times 2$.) 
  If one of the $k$ cycles encloses~$h$ in the drawing (as~$C_1$ in
  Fig.~\ref{fig:two-edge-connected}(c)), we move its chord-based
  triangle from $z=z^\star>0$ to the plane $z=-1$, that is, below all
  other triangles.  Let~$i^\star$ be the index of this cycle (if it
  exists).  Note that this also moves the apex of the triangle that is
  matched to the chord-based triangle from $z=z^\star$ to~$z=-1$.  In
  order to keep the drawing compact, we move each apex with
  z-coordinate $z'>z^\star$ to $z'-1$.  Then the height of our drawing
  equals exactly the number of edges in the perfect matching, that is,
  $n/2$.

  The correctness of our representation follows from the fact that, in
  the orthogonal projection onto the xy-plane, the only pairs of
  triangles that overlap are the pairs formed by a chord-based
  triangle with each of the triangles in its cycle and, if it exists,
  the chord-based triangle of~$C_{i^\star}$ with all triangles of the
  other cycles.  Also note
  that two triangles $\Delta v_{i,j-1}v_{i,j}h$ and
  $\Delta v_{i,j}v_{i,j+1}h$ (the second indices are modulo~$n_i$)
  that represent consecutive vertices in~$C_i$ (for some
  $i \in \{1,\dots,k\}$ and $j \in \{1,\dots,n_i\}$) touch only in a
  single point, namely in the image of~$v_{i,j}$.  This is due to the
  fact that vertices of~$G$ that are adjacent on~$C_i$ are not
  adjacent in the matching, and for each matched pair its two triangle
  apexes receive the same, unique z-coordinate.

  We do not use all edges of~$H$ for our 3D contact
  representation of~$G$.  The spokes of the wheel are the projections
  of the triangle edges incident to~$h$.  The $k$ dummy edges don't
  appear in the representation (but play a role in the proof of
  Theorem~\ref{thm:cubic} ahead).
\end{proof}

In order to generalize Lemma~\ref{lem:two-connected-cubic} to any
cubic graph~$G$, we use the \emph{bridge-block tree} of~$G$.  This
tree has a vertex for each 2-edge-connected component and an edge for
each bridge of~$G$.  The bridge-block tree of a graph can be computed
in time linear in the size of the
graph~\cite{wt-mbcbc-Algorithmica92}.
The general idea of the construction is the following. First, remove all 
bridges from $G$ and, using some local replacements, transform each connected component 
of the obtained graph into a 2-edge-connected cubic graph.  Then, use
Lemma~\ref{lem:two-connected-cubic}
to construct a representation of each of these graphs.
Finally, modify the obtained representations to undo the local
replacements and use the
bridge-block tree structure to connect the constructed subgraphs, restoring the bridges of~$G$.

\begin{theorem}
  \label{thm:cubic}
  Every cubic graph with $n$ vertices can be realized as a contact
  graph of triangles with vertices on a grid of size
  $3n/2 \times 3n/2 \times n/2$.  It takes $O(n \log^2 n)$ time to construct
  such a realization.
\end{theorem}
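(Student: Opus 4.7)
The plan is to reduce to Lemma~\ref{lem:two-connected-cubic} via the bridge-block tree $T$ of $G$, whose nodes are the 2-edge-connected components (blocks) of $G$ and whose edges are the bridges. I would first compute $T$ in linear time using the algorithm in~\cite{wt-mbcbc-Algorithmica92}. Each block $B$ has some vertices of degree 2, namely the endpoints of bridges incident to $B$, while all remaining vertices have degree 3. To apply Lemma~\ref{lem:two-connected-cubic}, I would perform a local replacement that turns $B$ into a 2-edge-connected cubic graph $B'$ by pairing up its degree-2 vertices and joining each pair by a small constant-size cubic gadget that preserves 2-edge-connectivity (and, if the count is odd, attaching one extra internal cubic gadget). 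Since a cubic graph has $O(n)$ bridges in total, the number of added vertices summed over all blocks is $O(n)$, so $\sum_B |V(B')| = O(n)$.

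Next, I would apply Lemma~\ref{lem:two-connected-cubic} to each $B'$ independently, obtaining a triangle contact representation in a $3 \times |V(B')|/2 \times |V(B')|/2$ grid in time $O(|V(B')| \log^2 |V(B')|)$; summing gives overall running time $O(n \log^2 n)$. I would then undo the local replacement by deleting the gadget triangles from the representation. This leaves each original degree-2 vertex of $B$ represented by a triangle with one \emph{free} corner (no adjacent triangle touches it) positioned on the boundary of the block's bounding box --- precisely where the bridge of $G$ incident to that vertex must attach.

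Finally, I would glue the block representations together using $T$. Rooting $T$ at an arbitrary block and traversing in DFS order, I would translate and, if needed, reflect each child block's drawing so that the free corner representing its bridge endpoint coincides with the corresponding free corner in the parent, and so that its bounding box is disjoint from those of the previously placed blocks except at the single shared bridge point. The local-replacement gadgets should be designed so that the free corner can be made to lie at any chosen face of the block's bounding box, giving enough flexibility to avoid conflicts during the DFS placement.

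The main obstacle will be verifying the compact grid bound of $3n/2 \times 3n/2 \times n/2$. Each block has x-extent only 3 (from the lemma), so up to $n/2$ blocks chained in the x-direction fit within $3n/2$; the real challenge is packing the y- and z-extents of different blocks within $3n/2$ and $n/2$ respectively, without overlap and without gluing-induced triangle intersections. This will require a careful amortized layout argument along the bridge-block tree, exploiting the identity that the y- and z-extents of each block are bounded by $|V(B')|/2$ while the total $\sum_B |V(B')|$ is $O(n)$, and organizing the placement so that extents in the same dimension are summed only for blocks on disjoint branches of $T$.
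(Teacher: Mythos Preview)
Your high-level decomposition via the bridge-block tree matches the paper, but the execution diverges at a crucial point and leaves a genuine gap.

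The paper does \emph{not} apply Lemma~\ref{lem:two-connected-cubic} as a black box to each block and then glue the resulting 3D drawings. Instead, it reopens the proof of the lemma: for each block $G_i$ it \emph{suppresses} the bridge-endpoints (rather than attaching gadgets), applies Petersen's theorem, and builds the planar ``floorplan'' wheel $H_i$ as in the lemma. It then takes the disjoint union of all the $H_i$, inserts one new planar vertex $v_b$ per bridge (wired into the two adjacent floorplans according to whether each foot is a cycle edge or a matching edge), and draws this \emph{single} planar graph $H$ once with Schnyder's algorithm. The z-coordinates (apex heights) are then assigned block by block exactly as in the lemma. Because $|V(H)|\le |E(G)|=3n/2$, Schnyder immediately yields the $3n/2\times 3n/2$ footprint, and the height bound follows from counting matching and bridge edges per block. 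No gluing or packing argument is needed.

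Your plan, by contrast, relies on two unjustified steps. First, you assert that after deleting the gadget triangles the free corner of a bridge-endpoint lies on the boundary of the block's bounding box and can be placed on any chosen face. In the construction of Lemma~\ref{lem:two-connected-cubic} the three corners of each triangle are two consecutive outer-cycle vertices and one apex on the vertical line through the hub; the apex is in the interior of the box, and the outer-cycle corners sit at prescribed positions along a $3\times n_i/2$ rectangle that you do not control after the fact. Designing gadgets that force a free corner to an arbitrary bounding-box face while keeping the block 2-edge-connected and cubic is not addressed, and it is not clear it can be done without substantially increasing the vertex count. Second, even granting well-placed free corners, translating and reflecting 3D blocks so that two non-axis-aligned triangles share exactly one point, while their interiors and all other triangles stay disjoint, is delicate; your sketch gives no mechanism to prevent intersections between triangles of adjacent blocks whose supporting planes are oblique. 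You yourself flag the $3n/2\times 3n/2\times n/2$ bound as the ``main obstacle,'' and indeed a DFS packing of boxes of varying $|V(B')|/2\times|V(B')|/2$ footprints along an arbitrary tree does not obviously fit in a $3n/2\times 3n/2$ square. The paper sidesteps all three issues by pushing the block interaction down to the planar level and invoking Schnyder once.
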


\begin{proof}
  We can assume that the given graph~$G$ is connected, otherwise we
  draw each connected component separately and place the drawings
  side-by-side.  Then the bridge-block tree of~$G$ yields a partition
  of~$G$ into 2-edge-connected components $G_1,\dots,G_k$, which are
  connected to each other by bridges.

  We go through~$G_1,\dots,G_k$ and construct
  floorplan graphs~$H_1,\dots,H_k$ as follows.  If~$G_i$ is a
  single vertex, let~$H_i$ be a triangle.  For an example, see~$H_6$
  in Fig.~\ref{fig:general-cubic-graphs}.  
  If a component $G_i$ with $n_i>1$ doesn't contain any matching edge,
  that is, if all its vertices are endpoints of bridges, then
  let~$H_i$ be (an internally triangulated) $n_i$-cycle.  The vertices
  in~$G_i$ will be represented by triangles whose bases are the edges
  of the cycle and whose apexes lie outside the cycle.  Each apex~$v_b$
  corresponds to a bridge~$b$ and will later be connected to a triangle
  representing the other endpoint of the bridge.

  Otherwise, we remove each
  vertex in~$G_i$ that is incident to a bridge and connect its two
  neighbors, so that we can apply Petersen's
  theorem~\cite{p-trg-AM1891} to~$G_i$.  We
  call the new edge the \emph{foot} of the bridge.  This yields a
  collection of cycles and a perfect matching in~$G_i$.  As in the
  proof of Lemma~\ref{lem:two-connected-cubic}, $H_i$ is a wheel
  with $|V(G_i)|+1$ vertices, and we compute, for each component~$G_i$,
  the heights of all triangle apexes.  This also determines which
  vertices of~$G_i$ are represented by chord-based triangles.
  If applying Petersen's theorem to~$G_i$ gives rise to a single
  cycle, we consider the chord (which will be drawn at
  $z=-1$) to simultaneously be a dummy edge (which will be ``drawn''
  at $z=0$), so that every graph~$H_i$ has a dummy edge.
  (For examples, see~$H_3$ or~$H_5$ in
  Fig.~\ref{fig:general-cubic-graphs}.)

  Let~$H$ be the disjoint union of~$H_1,\dots,H_k$.  Now we
  reintroduce the bridges.  For every bridge~$b$, we add a new
  vertex~$v_b$ to~$H$.
  Each foot
  of~$b$ is either a cycle edge
  or a matching edge in some~$G_i$, which we treat differently;
  see Fig.~\ref{fig:general-cubic-graphs}.
  
  \begin{figure}[tb]
    \centering
    \includegraphics{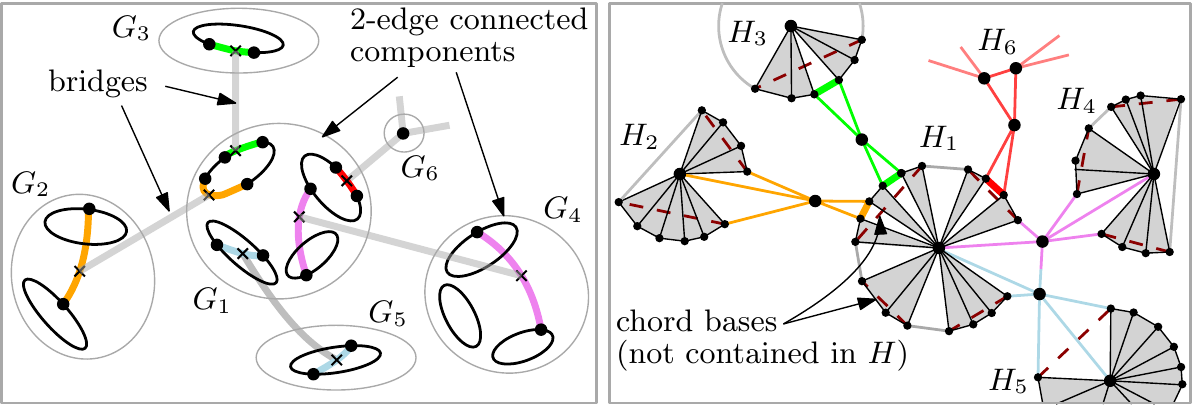}
    \caption{Constructing the floorplan~$H$ of a general cubic graph}
    \label{fig:general-cubic-graphs}
  \end{figure}

  If the foot~$uw$ of~$b$ is a cycle edge, consider the two adjacent
  triangles in~$H_i$
  that share the vertex representing the foot $uw$.
  These triangles
  share the hub~$h_i$ of~$H_i$
and a vertex~$v_{uw}$ on the outer face
  of~$H_i$.  We take the two triangles apart by duplicating~$v_{uw}$.
  We connect each copy of~$v_{uw}$ to the other copy, to~$v_b$,
  to~$h_i$, and to a different neighbor along the cycle.  The new
  edges between the two copies and between them and~$v_b$ form a
  triangle that represents one of the two endpoints of the bridge~$b$;
  see Fig.~\ref{fig:bridge-types} (right). 

  If the foot~$uw$ of~$b$ is a matching edge, we pick a dummy
  edge~$xy$ on the outer face of~$H_i$.  Recall that dummy edges are
  the edges that connect the cycles in~$H_i$ (thin gray in
  Figs.~\ref{fig:two-edge-connected}(b) and~(c)).  Due to our
  construction, $H_i$ contains at least one dummy edge.  We remove the
  dummy edge~$xy$ and connect~$x$, $h_i$, and $y$ to~$v_b$ in this order.  Note that
  several bridge feet can be placed into the space reserved by a single
  dummy edge (see the bridges that connect~$H_4$ and~$H_5$
  to~$H_1$ in Fig.~\ref{fig:general-cubic-graphs} (right)).
  
  Then we draw~$H$ in the xy-plane, using Schnyder's linear-time
  algorithm~\cite{s-epgg-SODA90}.  (In order to make~$H$ 3-connected,
  we add edges in the outer face of~$H$ that connect the components
  that are leaves of the bridge-block tree.)  Finally, as in the proof
  of Lemma~\ref{lem:two-connected-cubic}, we insert the chord edges
  (at the correct heights) and extend all cycle and chord edges into
  triangles by placing their apexes at the locations above or
  below~$h_i$ that we've computed before.
  Whenever we place two apexes that correspond to a matching edge that
  is the foot of a bridge~$b$, we use two consecutive grid points, one
  for each apex.  (If one of the apexes belongs to the chord-based
  triangle at $z=-1$, we place the other apex at $z=0$.) 
  Together with~$v_b$ (which remains on the xy-plane), the two apexes form
  a vertical triangle; see Fig.~\ref{fig:bridge-types} (left).
  The projection of the triangle to the xy-plane is an edge of~$H$;
  the vertical (closed) slab above that edge is used exclusively by
  the new triangle. 

  To bound the grid size of the drawing, we show that $|V(H)| \le
  |E(G)|$ ($=3n/2$), by establishing an injective map from $V(H)$
  to~$E(G)$: we map every cycle vertex in~$H$ to the ccw next cycle
  edge in~$H$, which corresponds to a specific cycle edge in~$G$.
  Further, we map every bridge vertex~$v_b$ to the corresponding
  bridge~$b$ in~$G$.  It remains to map the hubs.  If a
  component~$H_i$ of~$H$ does not contain any matching edge (that is,
  all vertices in~$H_i$ are incident to bridges), $H_i$ does not
  contain a hub.  Otherwise, there is at least one
  matching edge in~$H_i$ and we map the hub~$h_i$ to that edge.

  Now it is clear that the straight-line drawing of~$H$ computed by
  Schnyder's algorithm has size at most $(3n/2-1) \times (3n/2-1)$.
  In order to bound the height of the drawing, consider any
  component~$H_i$ of~$H$.  Clearly, $H_i$ contains at most $n/2$
  matching edges.  Each of these uses a grid point on the vertical
  line through~$h_i$.  Any matching edge can, however, be the foot of
  a bridge.  For each bridge triangle that we insert between the
  apexes of two matching triangles, the height of the representation
  of~$H_i$ increases by one unit.  On the other hand, the bridges form
  a matching that is independent from the matching edges.  Thus, the
  height of~$H_i$ is at most $n/2$.
\end{proof}

\begin{figure}[tb]
  \centering
  \includegraphics{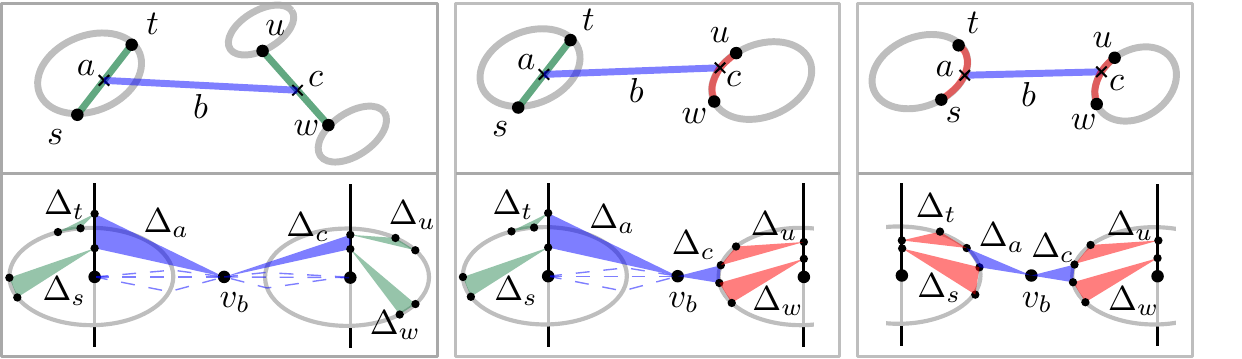}
  \caption{Representation of a bridge $b=ac$ depending on the types of its feet}
  \label{fig:bridge-types}
\end{figure}

\begin{corollary}
  \label{cor:cubic}
  Every graph with $n$ vertices and maximum degree~3 can be realized
  as a contact graph of triangles, line segments, and points whose
  vertices lie on a grid of size $3\lceil n/2 \rceil \times 3 \lceil
  n/2 \rceil \times \lceil n/2 \rceil$.  It takes $O(n \log^2 n)$ time
  to construct such a realization.
\end{corollary}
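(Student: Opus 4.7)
The plan is to reduce to Theorem~\ref{thm:cubic} by embedding $G$ as a spanning subgraph of a simple cubic graph~$G^+$ on $n^+ \le 2\lceil n/2\rceil$ vertices. Once I apply Theorem~\ref{thm:cubic} to $G^+$, I obtain a triangle contact representation on a grid of size at most $3\lceil n/2\rceil\times 3\lceil n/2\rceil\times \lceil n/2\rceil$. For every \emph{dummy edge} $uv \in E(G^+)\setminus E(G)$, I then delete the corner point shared by the triangles representing $u$ and $v$, exactly as in the proof of Corollary~\ref{cor:general}. A triangle that loses one of its three corners becomes a line segment, and one that loses two corners becomes a single point; the triangles of any dummy vertices are discarded altogether.

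The main obstacle is producing $G^+$ as a \emph{simple} cubic graph within the tight vertex budget. Let $d=\sum_v (3-\deg_G(v))=3n-2|E(G)|$ be the total degree deficit. If $n$ is odd, I first insert one \emph{dummy vertex} and connect it by three dummy edges to deficient vertices of~$G$; this uses the one available extra vertex slot (since $2\lceil n/2\rceil=n+1$) and makes the remaining total deficit even. I then match up the remaining deficient degrees by adding edges between non-adjacent pairs of deficient vertices. Since each deficient vertex has degree at most~$2$ in~$G$, there are typically enough non-adjacencies to complete this pairing without creating a multi-edge or self-loop. In the degenerate situation where a clean pairing does not fit within the budget (for instance, when $G$ has a connected component already isomorphic to a cubic graph and isolated vertices elsewhere), I would instead process the connected components of~$G$ separately: each component is handled by the extension-and-deletion scheme above, and the resulting drawings are packed side-by-side inside the global $3\lceil n/2\rceil\times 3\lceil n/2\rceil\times \lceil n/2\rceil$ grid. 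Since the grid bound of Theorem~\ref{thm:cubic} is linear in the number of vertices of the input, the per-component grid usages sum within the global budget.

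Correctness of the final representation is immediate: deleting a shared corner removes exactly one contact without introducing any new intersection among the remaining shapes, so the resulting pattern of contacts among the non-dummy shapes is precisely~$E(G)$. The running time is dominated by the call to Theorem~\ref{thm:cubic}, giving the claimed $O(n\log^2 n)$ bound; the construction of $G^+$ and the corner-deletion post-processing are both linear in~$n$.
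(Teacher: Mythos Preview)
Your high-level strategy is exactly the paper's: possibly add one dummy vertex to make the order even, add dummy edges until the graph is cubic, invoke Theorem~\ref{thm:cubic}, then strip out the dummy material. The paper's proof is two lines long and does not fuss over whether the resulting cubic graph is simple; Theorem~\ref{thm:cubic} (via Petersen's theorem, which holds for bridgeless cubic multigraphs) is tolerant of parallel edges, so no special care is needed.

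Your extra discussion about forcing $G^+$ to be \emph{simple} is where you introduce difficulties that are both unnecessary and not fully resolved. The step ``connect the dummy vertex by three dummy edges to deficient vertices of~$G$'' can fail: when $n$ is odd the total deficit can be as small as~$1$ (e.g., $K_4$ with one edge subdivided), so there need not be three distinct deficient vertices to attach to. Your fallback of processing components separately and packing them side-by-side is also shaky: each component of size $n_i$ uses a $\tfrac{3n_i}{2}\times\tfrac{3n_i}{2}\times\tfrac{n_i}{2}$ box, and it is not clear that these boxes pack into a single $\tfrac{3n}{2}\times\tfrac{3n}{2}\times\tfrac{n}{2}$ box (the product of the dimensions is super-linear in $n_i$). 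None of this is needed if you simply allow the cubic completion $G^+$ to be a multigraph, as the paper implicitly does.
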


\begin{proof}
  If $n$ is odd, add a dummy vertex to the given graph.  Then add
  dummy edges until the graph is cubic.  Apply
  Theorem~\ref{thm:cubic}.  From the resulting representation, remove
  the triangle that corresponds to the dummy vertex, if any.
  Disconnect the pairs of triangles that correspond to dummy edges.
\end{proof}

\subsection{Squares of Cycles}
\label{sub:squares}

Recall that, for an undirected graph~$G$ and an integer $k \ge 2$, the
$k$-th \emph{power} $G^k$ of~$G$ is the graph with the same vertex set
where two vertices are adjacent when their distance in~$G$ is at
most~$k$.
Note that $C_4^2=K_4$ is 3-regular (and can be represented by four unit
equilateral triangles that pairwise touch and form an octahedron with
four empty faces).  For $n \ge 5$, $C_n^2$ is 4-regular.
Recall that Corollary~\ref{cor:general} yields contact representations
using convex polygons for any graph, but in these representations the
ratio between the length of the longest edge and the length of the
shortest edge can be huge.  For squares of cycles, we can do better.

\begin{theorem}
  \label{thm:squares-of-cycle}
  For $n \ge 5$, $C_n^2$ admits a contact representation using convex
  quadrilaterals in 3D such that the ratio between the length of the
  longest edge and the length of the shortest edge over all
  quadrilateral edges in the representation is constant.  For even $n$, the
  quadrilaterals can be unit squares.
\end{theorem}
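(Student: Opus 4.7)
The plan is to give an explicit geometric construction for each parity of~$n$, first handling the even case in detail and then adapting it to odd~$n$. In both cases, each of the $n$ quadrilaterals must have exactly four corners (since $C_n^2$ is 4-regular for $n \ge 5$), and each corner must be shared with exactly one of the four neighbors $v_{i\pm 1}, v_{i\pm 2}$.

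For even $n = 2m$, I arrange $n$ unit squares $S_0, S_1, \ldots, S_{n-1}$ in a closed ``ring'' in 3D with period-2 symmetry. The even-indexed squares form one family and the odd-indexed squares form another, and the ring is organized so that each $S_i$ consumes its four corners exactly as follows: one corner each for the two same-parity neighbors $S_{i\pm 2}$, and one corner each for the two opposite-parity neighbors $S_{i\pm 1}$. Crucially, a period-2 pattern closes up consistently after an even number $n = 2m$ of steps. For the smallest case $n = 6$, the construction specializes to the six vertices of a regular octahedron: placing a unit square in the plane perpendicular to each of the six axis directions (after rescaling the octahedron) gives six unit squares whose pairwise contacts realize $C_6^2 = K_{2,2,2}$; I extend this symmetric picture to arbitrary even~$n$.

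For odd~$n$, the period-2 ring construction cannot close up consistently, so I start from the construction for $n-1$ and introduce a single local ``defect'' at one point of the ring by inserting one extra quadrilateral. Within the defect, a constant number of nearby quadrilaterals are distorted from unit squares into general convex quadrilaterals, while the remaining $n - O(1)$ quadrilaterals stay as unit squares. Since the distortion is local and only $O(1)$ quadrilaterals are affected, one can choose the defect so that every edge length lies in a fixed interval $[c_1, c_2]$ with $c_1, c_2 > 0$ independent of~$n$; this immediately gives the desired constant bound on the ratio of longest to shortest edge.

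The main obstacle is not the existence of the construction, but verifying that it is actually a valid contact representation: one must check that every edge of $C_n^2$ is realized by exactly one shared corner (neither a shared edge nor zero contacts), and that no two non-adjacent quadrilaterals share any point or partially overlap. This verification is a straightforward case analysis, once coordinates are fixed, over the possible relative positions of pairs of quadrilaterals (same parity vs.\ opposite parity; cycle-distance 1, 2, or $\ge 3$). In the odd case, one additionally needs to confirm that the defect only interacts with its immediate neighbors in the ring, so that the rest of the construction is unaffected.
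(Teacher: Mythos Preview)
Your proposal follows essentially the same route as the paper: a rotationally symmetric ring of unit squares for even~$n$ (the paper places distance-1 contacts on a regular $n$-gon in a middle plane and distance-2 contacts on regular $(n/2)$-gons in top and bottom planes, which is exactly your ``period-2'' up/down structure; for $n=6$ the corner points are the vertices of a cuboctahedron, not an octahedron), and for odd~$n$ a single local insertion obtained by splitting two contact points of the even-$(n{-}1)$ construction to make room for one new quadrilateral, modifying only $O(1)$ existing squares. One caveat: your reduction ``odd $n$ from $n-1$'' breaks at $n=5$, since $C_4^2=K_4$ is only $3$-regular and its contact representation uses triangles, not quadrilaterals; the paper also treats $C_5^2=K_5$ as a separate base case, so you should too.
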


\begin{proof}
  For even $n \ge 6$, $C_n^2$ is planar, so it is easy to find a
  contact representation with convex quadrilaterals in the plane.  In 3D,
  however, we can restrict the quadrilaterals to unit squares; see
  Fig.~\ref{fig:c2}\subref{fig:c2_67}.  Note that the vertices on the
  middle plane form a regular $n$-gon and the vertices on the top and
  bottom planes form regular $(n/2)$-gons, all centered at the z-axis.
  Additionally, each vertex of the top or bottom plane lies on the
  bisector of the (empty) triangular face incident to it.

  Finding a representation of $C^2_5=K_5$ we leave as an exercise to
  the reader.  Now we obtain a representation of~$C^2_{n+1}$ from that
  of~$C^2_n$ for even $n \ge 6$ (see Fig.~\ref{fig:c2} for an
  illustration of our construction for $n=6$).  Every vertex~$v_i$ in
  $C^2_{n+1}$ is represented by a quadrilateral~$Q_i$.  For two
  touching quadrilaterals~$Q_i$ and~$Q_j$, let~$P_{i,j}$ be their
  contact point (which represents the edge $v_iv_j$ in the
  graph~$C^2_{n+1}$).  Note that all edges of $C^2_n$ except for two,
  namely $v_1v_{n-1}$ and $v_2v_n$, are also edges in $C^2_{n+1}$.
  Moreover, the endpoints of these two non-edges of $C^2_{n+1}$ are
  all incident to $v_{n+1}$.  Therefore, we can easily extend a
  contact representation of $C^2_n$ into one of~$C^2_{n+1}$ as
  follows.  Starting with a contact representation of $C^2_n$, we
  duplicate the vertices $T=P_{1,n-1}$ and $B=P_{2,n}$ (red in
  Fig.~\ref{fig:c2}\subref{fig:c2_67}) to separate the pairs
  $(Q_1,Q_{n-1})$ and $(Q_2,Q_{n})$ of touching quadrilaterals such
  that the four new vertices form the new quadrilateral~$Q_{n+1}$.

  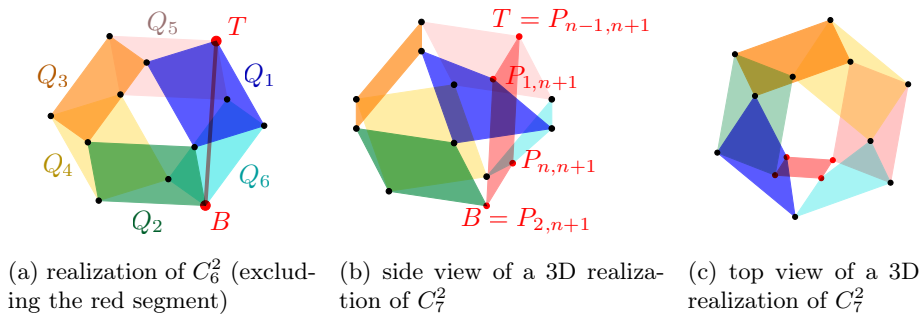
\begin{figure}[tb]
    \captionsetup[subfigure]{position=b}
    \centering
    \subcaptionbox{realization of $C^2_6$ (excluding the red
      segment)\label{fig:c2_67}}{\hspace*{-1ex}\tdplotsetmaincoords{75}{10}
\begin{tikzpicture}[tdplot_main_coords,scale = 1.6,blend group=darken,scale=0.9]
\def \s {.7};

\coordinate (P16) at (1,0,0);
\coordinate (P56) at (0.5,{sqrt(3)/2},0);
\coordinate (P45) at (-0.5,{sqrt(3)/2},0);
\coordinate (P34) at (-1,0,0);
\coordinate (P23) at (-0.5,{-sqrt(3)/2},0);
\coordinate (P12) at (0.5,{-sqrt(3)/2},0);

\coordinate  (P26) at (0.5,-{sqrt(3)/6},-.71) ;
\coordinate  (P46) at (0,{sqrt(3)/3},-.71);
\coordinate  (P24) at (-0.5,{-sqrt(3)/6},-.71);

\coordinate  (P15) at (0.5,{sqrt(3)/6},.71) ;
\coordinate  (P13) at (0,{-sqrt(3)/3},.71);
\coordinate  (P35) at (-0.5,{sqrt(3)/6},.71);

\foreach \i in {1,2,3,4,5}{
	\pgfmathparse{\i+1};
	\draw[fill,opacity=1] (P\i\pgfmathresult) circle (\s pt);
}
\draw[fill,opacity=1] (P16) circle (\s pt);

\foreach \i in {1,2,3,4}{
	\pgfmathparse{\i+2};
	\draw[fill,opacity=1] (P\i\pgfmathresult) circle (\s pt);
}

\draw[pink,fill=pink,opacity=.5] (P15) -- (P35) -- (P45) -- (P56) -- cycle;
\draw[yl,fill=yl,opacity=.35] (P34) -- (P45) -- (P46) -- (P24) -- cycle;
\draw[cyan,fill=cyan,opacity=.5] (P16) -- (P56) -- (P46) -- (P26) -- cycle;
\draw[red,line width=1.5pt,semitransparent,opacity=.5] (P15) -- (P26);
\draw[blue,fill=blue,opacity=.65] (P12) -- (P16) -- (P15) -- (P13) -- cycle;
\draw[orange,fill=orange,opacity=.6] (P35) -- (P13) -- (P23) -- (P34) -- cycle;
\draw[darkgreen,fill=darkgreen,opacity=.6] (P24) -- (P26) -- (P12) -- (P23) -- cycle;

\draw[fill,red,opacity=1] (P15) circle (1.8*\s pt);
\draw[fill,red,opacity=1] (P26) circle (1.8*\s pt);

\node[label=below:{$\textcolor{darkgreen!70!black}{Q_2}$},inner sep=0pt,xshift=-.5mm,yshift=0.5mm] at ($(P24)!.5!(P26)$) {};
\node[label=left:{$\textcolor{orange!70!black}{Q_3}$},inner sep=0pt] at ($(P35)!.5!(P34)$) {};
\node[label=above:{$\textcolor{pink!60!black}{Q_5}$},inner sep=0pt,yshift=-1mm] at ($(P35)!.5!(P15)$) {};
\node[label=right:{$\textcolor{blue!70!black}{Q_1}$},inner sep=0pt,xshift=-1mm,yshift=1mm] at ($(P15)!.5!(P16)$) {};
\node[label=right:{$\textcolor{cyan!70!black}{Q_6}$},inner sep=0pt,xshift=-1mm,yshift=-1.5mm] at ($(P16)!.5!(P26)$) {};
\node[label=left:{$\textcolor{yl!70!black}{Q_4}$},inner sep=0pt,xshift=1.5mm,yshift=-1mm] at ($(P34)!.5!(P24)$) {};

\node[label=right:{$\color{red}{T}$},inner sep=0pt,yshift=2mm,xshift=0mm] at (P15) {};
\node[label=right:{$\color{red}{B}$},inner sep=0pt,yshift=-2mm,xshift=-1mm] at (P26) {};


\end{tikzpicture}~~~}\hfill%
    \subcaptionbox{side view of a 3D realization of
      $C^2_7$\label{fig:c2_7-n}}{\hspace*{-1.5ex}\tdplotsetmaincoords{75}{30}
\begin{tikzpicture}[tdplot_main_coords,scale = 1.5,blend group=darken]
\def \s {.7};

\coordinate (L1) at (1, -.38, -.24);
\coordinate (L2) at (1, 0.38, 0.24);

\coordinate (P16) at (1,0,0);
\coordinate (P56) at (0.5,{sqrt(3)/2},0);
\coordinate (P45) at (-0.5,{sqrt(3)/2},0);
\coordinate (P34) at (-1,0,0);
\coordinate (P23) at (-0.5,{-sqrt(3)/2},0);
\coordinate (P12) at (0.5,{-sqrt(3)/2},0);

\coordinate  (P27) at (0.5,-{sqrt(3)/6},-.71);
\coordinate  (P46) at (0,{sqrt(3)/3},-.71);
\coordinate  (P24) at (-0.5,{-sqrt(3)/6},-.71);

\coordinate  (P57) at (0.5,{sqrt(3)/6},.71);
\coordinate  (P13) at (0,{-sqrt(3)/3},.71);
\coordinate  (P35) at (-0.5,{sqrt(3)/6},.71);

\coordinate (P17) at ($(P57)!.4!(P12)$);
\coordinate (P67) at ($(P27)!.4!(P56)$);

\foreach \i in {1,2,3,4,5}{
	\pgfmathparse{\i+1};
	\draw[fill,opacity=1] (P\i\pgfmathresult) circle (\s pt);
}

\foreach \i in {1,2,3,4}{
	\pgfmathparse{\i+2};
	\draw[fill,opacity=1] (P\i\pgfmathresult) circle (\s pt);
}

\draw[fill,opacity=1] (P16) circle (\s pt);
\foreach \i in {1,2,5,6}
	\draw[fill,opacity=1,red] (P\i7) circle (\s pt);

\draw[pink,fill=pink,opacity=.5] (P57) -- (P35) -- (P45) -- (P56) -- cycle;
\draw[red,fill=red,opacity=.4] (P17) -- (P57) -- (P67) -- (P27) -- cycle;
\draw[blue,fill=blue,opacity=.65] (P12) -- (P16) -- (P17) -- (P13) -- cycle;
\draw[orange,fill=orange,opacity=.65] (P35) -- (P13) -- (P23) -- (P34) -- cycle;
\draw[yl,fill=yl,opacity=.35] (P34) -- (P45) -- (P46) -- (P24) -- cycle;
\draw[cyan,fill=cyan,opacity=.35] (P16) -- (P56) -- (P46) -- (P67) -- cycle;
\draw[darkgreen,fill=darkgreen,opacity=.65] (P24) -- (P27) -- (P12) -- (P23) -- cycle;

\node[label=right:{$\color{red}{B=P_{2,n+1}}$},inner sep=0pt,xshift=-5mm,yshift=-2mm] at (P27) {};
\node[label=right:{$\color{red}{T=P_{n-1,n+1}}$},inner sep=0pt,xshift=-5mm,yshift=2mm] at (P57) {};
\node[label=right:{$\color{red}{P_{n,n+1}}$},inner sep=0pt,yshift=0mm,xshift=-.5mm] at (P67) {};
\node[label=right:{$\color{red}{P_{1,n+1}}$},inner sep=0pt,yshift=0mm,xshift=0mm] at (P17) {};


\end{tikzpicture}}\hfill%
    \subcaptionbox{top view of a 3D realization of
      $C^2_7$\label{fig:c2_7-u}}{\raisebox{0.3cm}{\tdplotsetmaincoords{10}{100}
\begin{tikzpicture}[tdplot_main_coords,scale = 1.5,blend group=darken,scale=0.9]
\def \s {.7};

\coordinate (P16) at (1,0,0);
\coordinate (P56) at (0.5,{sqrt(3)/2},0);
\coordinate (P45) at (-0.5,{sqrt(3)/2},0);
\coordinate (P34) at (-1,0,0);
\coordinate (P23) at (-0.5,{-sqrt(3)/2},0);
\coordinate (P12) at (0.5,{-sqrt(3)/2},0);

\coordinate  (P27) at (0.5,-{sqrt(3)/6},-.71);
\coordinate  (P46) at (0,{sqrt(3)/3},-.71);
\coordinate  (P24) at (-0.5,{-sqrt(3)/6},-.71);

\coordinate  (P57) at (0.5,{sqrt(3)/6},.71);
\coordinate  (P13) at (0,{-sqrt(3)/3},.71);
\coordinate  (P35) at (-0.5,{sqrt(3)/6},.71);

\coordinate (P17) at ($(P57)!.4!(P12)$);
\coordinate (P67) at ($(P27)!.4!(P56)$);

\foreach \i in {1,2,3,4,5}{
	\pgfmathparse{\i+1};
	\draw[fill,opacity=1] (P\i\pgfmathresult) circle (\s pt);
}

\foreach \i in {1,2,3,4}{
	\pgfmathparse{\i+2};
	\draw[fill,opacity=1] (P\i\pgfmathresult) circle (\s pt);
}

\draw[fill,opacity=1] (P16) circle (\s pt);
\foreach \i in {1,2,5,6}
	\draw[fill,opacity=1,red] (P\i7) circle (\s pt);

\draw[darkgreen,fill=darkgreen,opacity=.35] (P24) -- (P27) -- (P12) -- (P23) -- cycle;
\draw[yl,fill=yl,opacity=.35] (P34) -- (P45) -- (P46) -- (P24) -- cycle;
\draw[cyan,fill=cyan,opacity=.35] (P16) -- (P56) -- (P46) -- (P67) -- cycle;
\draw[red,fill=red,opacity=.45] (P17) -- (P57) -- (P67) -- (P27) -- cycle;
\draw[pink,fill=pink,opacity=.9] (P57) -- (P35) -- (P45) -- (P56) -- cycle;
\draw[blue,fill=blue,opacity=.65] (P12) -- (P16) -- (P17) -- (P13) -- cycle;
\draw[orange,fill=orange,opacity=.7] (P35) -- (P13) -- (P23) -- (P34) -- cycle;

\end{tikzpicture}}} 
    \caption{Representing squares of cycles: we build a contact representation
      of~$C^2_7$ from that of~$C^2_6$.  To this end, we split the two
      big (red) vertices to expand the segment connecting them into a
      rectangle.}
    \label{fig:c2}
  \end{figure}

  In the following, we detail how we place the new vertices.  We place
  $P_{n-1,n+1}$ and $P_{2,n+1}$ at the position of $T$ and $B$
  respectively (thus $Q_{n-1}$ and $Q_2$ do not change).  Consider the
  four points~$T$, $B$, and the centers of~$Q_1$ and~$Q_n$.  Due to
  the symmetry of our representation for $C^2_n$, these four points
  span a plane~$H$.  Starting in~$T$, we move $P_{1,n+1}$ along the
  line $H\cap Q_1$.  We stop just before we reach the center of~$Q_1$.
  Symmetrically, we define the position of $P_{n,n+1}$ along the line
  $H\cap Q_n$ near the center of~$Q_n$.  Stopping the movement before
  reaching the centers of~$Q_1$ and~$Q_n$ makes sure that these faces
  remain strictly convex.  In the resulting representation of
  $C^2_{n+1}$, every
  quadrilateral
  edge has length at
  most~$|\overline{TB}|<2$ and at least (nearly) $\sqrt{2}/2$, which
  implies that the ratio of the longest edge length and the shortest
  edge length is constant.
\end{proof}

\section{Hypergraphs}
\label{sec:hypergraphs}

We start with a negative result.
Hypergraphs that give rise to simplicial 2-complexes that are not
embeddable in 3-space also do not have a realization using touching
polygons.  Carmesin's example of the cone over the complete graph
$K_5$ is such a 2-complex\footnote{Carmesin~\cite{Carmesin19} credits
  John Pardon with the observation that the \emph{link graph} at a
  vertex~$v$, which contains a node for every edge at~$v$ and an arc
  connecting two such nodes if they share a face at~$v$, must be
  planar for the 2-complex to be embeddable.\label{ftn:carmesin}},
which arises from the
3-uniform hypergraph on six vertices whose edges are
$\{\{i,j,6\} \colon \{i,j\} \in [5]^2 \}$.  Recall that $d$-uniform
means that all hyperedges have cardinality~$d$.
Any 3-uniform hypergraph that contains these edges also cannot be drawn.
For example, $\K^d_n$, the complete $d$-uniform hypergraph on $n \ge
6$ vertices for $d = 3$ does not have a non-crossing drawing  in 3D.
Note that in complete hypergraphs many pairs of hyperedges share two
vertices.  This motivates us to consider 3-uniform
\emph{linear} hypergraphs, i.e., hypergraphs where pairs of edges
intersect in at most one vertex.
Very symmetric examples of such hypergraphs are \emph{Steiner
systems} (definition below).

\subsection{Representing Steiner Systems by Touching Polygons}
\label{app:steiner}

A Steiner system $S(t,k,n)$ is an $n$-element
set~$S$ together with a set of $k$-element subsets of~$S$ (called
\emph{blocks}) such that each $t$-element subset of~$S$ is contained
in exactly one block.
In particular, Steiner triple systems $S(2,3,n)$ are examples of
3-uniform hypergraphs on $n$ vertices; see
Table~\ref{tab:SteinerSys}~\cite{w-sts-19}.
They exist for any $n \in \{6k+1, 6k+3 \colon k \in \mathbb{N}\}$.
The corresponding 3-uniform hypergraph has $n(n-1)/6$
hyperedges and is $((n-1)/2)$-regular.

First we show that the two smallest triple systems, i.e., $S(2,3,7)$
(also called the \emph{Fano plane}) and $S(2,3,9)$, admit non-crossing
drawings in 3D.  The existence of
such representations also follows from Ossona de Mendez'
work~\cite{o-rp-JGAA02} (see introduction) since
both hypergraphs have incidence orders of dimension~4 (which can be
checked by using an integer linear program).  
While our drawings have good vertex resolution (ratio between the
smallest and the longest vertex--vertex distance) and show symmetries,
Ossona de Mendez uses coordinates $1,d+1,(d+1)^2,\dots,(d+1)^{n-1}$ in
each of $d=4$ dimensions and then projects this 4D contact
representation centrally onto a 3D hyperspace.  The resulting 3D
contact representation does not lie on a grid.

\begin{proposition}
  \label{prop:fano}
  The Fano plane $S(2,3,7)$ and the Steiner triple system $S(2,3,9)$
  admit non-crossing drawings using triangles in~3D.
\end{proposition}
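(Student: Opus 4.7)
My plan is to construct both drawings explicitly, exploiting the symmetries of the two Steiner systems, and then verify non-crossing. In both cases I will place the $n$ points of the system at carefully chosen locations in $\mathbb{R}^3$ in general position, realize every block as the triangle on its three point-images, and check that (i)~two triangles sharing a point in the system meet only at the corresponding corner, and (ii)~two triangles whose blocks are disjoint do not intersect at all.

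For the Fano plane $S(2,3,7)$, I would exploit its cyclic automorphism of order~$7$. A natural candidate is to place the seven points on a helix-like configuration with $7$-fold rotational symmetry around a vertical axis: for instance, seven points equally spaced in angle around the $z$-axis, with $z$-coordinates chosen so that no four are coplanar (e.g., points on a moment-type curve composed with a rotation). Under this symmetry, the seven blocks split into at most two orbits, so up to rotation I only need to check pairwise non-crossing of a handful of representative triangle pairs. Since any two blocks of the Fano plane share exactly one point, every pair of triangles already meets in one common corner, so the verification reduces to ruling out additional incidences (a corner of one triangle lying on the edge or interior of another, or the two interiors crossing along a segment). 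This I would do by computing a few cross products in the chosen coordinates.

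For $S(2,3,9)$, I would use the identification with the affine plane $AG(2,3)$: the $9$ points form a $3\times 3$ grid and the $12$ blocks split into $4$ parallel classes of $3$ lines each (rows, columns, and the two diagonal directions, with wrap-around). I would lift the $3\times 3$ grid into $\mathbb{R}^3$ by assigning a distinct $z$-coordinate to each point (for example using the matrix entries of a generic $3\times 3$ height function), so that the $12$ lifted blocks become triangles in $12$ distinct non-horizontal planes. As in the first case, I would exploit the rich automorphism group of $AG(2,3)$ (which is $3$-transitive on points and transitive on parallel classes) to reduce the verification to a small number of representative triangle pairs: one pair within a parallel class (disjoint triangles, must stay disjoint) and one pair from different parallel classes (sharing a single point, must meet only at that corner).

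The main obstacle in both constructions is the non-crossing verification: once we fix coordinates, we still need to show that no forbidden intersection occurs, and small perturbations of the points can easily create unwanted crossings. The delicate case for $S(2,3,9)$ is ensuring that the three triangles within a parallel class stay interior-disjoint, because their corners partition the $9$ points with no shared vertex to "lock" the triangles in place; this is where the symmetric choice of lifting will be essential so that the three triangles of a class stack in a controlled way. For the Fano plane the subtlety is different: every pair of triangles already shares a corner, and the symmetric placement must be chosen so that the non-shared corners of one triangle stay strictly on one side of the supporting plane of the other. In both cases I expect the final proof to be supported by an explicit figure from which the verification becomes a short case check.
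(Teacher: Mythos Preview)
Your overall strategy---fix explicit coordinates and then verify that no two triangles cross---is exactly what the paper does. The gap is in the symmetry reductions you intend to use to shorten that verification.

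For the Fano plane, there is no isometry of $\mathbb{R}^3$ of order~$7$ that moves seven points cyclically while keeping them non-coplanar: any finite-order-$7$ isometry of $\mathbb{R}^3$ is a rotation about an axis, and the orbit of a point off the axis is a regular heptagon in a single plane perpendicular to that axis. A helix carries only a screw motion, which has infinite order and does not map the seven-point set to itself (applying it to $p_6$ lands one pitch above $p_0$, not at $p_0$), so it cannot be used to identify pairs of triangles geometrically. Since you also ask that no four points be coplanar, the ``$7$-fold rotational symmetry'' you invoke cannot exist, and the reduction to ``a handful of representative triangle pairs'' is unjustified. The paper instead exploits the order-$3$ symmetry of the Fano plane: it places $\{2,4,6\}$ as an equilateral triangle at height~$0$, $\{3,5,7\}$ as a rotated equilateral triangle at height~$1$, and vertex~$1$ on the axis at height~$1/2$; this genuine $3$-fold rotational isometry permutes the seven triangles and cuts the verification to essentially two cases.

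For $S(2,3,9)$ the problem is a mismatch between combinatorial and geometric symmetry. Lifting the $3\times 3$ grid by a \emph{generic} height function leaves the point configuration with no non-trivial isometry at all, so although $\mathrm{Aut}(AG(2,3))$ permutes the twelve blocks combinatorially, it does not carry your twelve triangles to one another in~$\mathbb{R}^3$; you therefore cannot use it to reduce the non-crossing check to ``a small number of representative triangle pairs''. (Choosing the heights to be invariant under a point-transitive subgroup forces them to be constant, which puts all nine points in a plane.) The paper again builds a configuration with a genuine $3$-fold rotational symmetry---two concentric equilateral triangles at different heights and two further points on the axis---uses that symmetry to handle the first eight vertices, and then places the ninth vertex asymmetrically via an explicit argument about the intersection of two supporting planes.
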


\begin{proof} 
  We first describe our construction for the Fano plane, which has
  seven vertices and seven hyperedges; see Table~\ref{tab:SteinerSys}
  and Fig.~\ref{fig:fano}.  We start with a unit equilateral triangle
  on the xy-plane centered at the z-axis representing hyperedge $642$
  (with vertices in ccw-order).
  We make a copy of this triangle, lift it by one unit, and rotate it
  by an angle of $\alpha$  counterclockwise around the z-axis, where
  $0^\circ < \alpha < 120^\circ$ and $\alpha \neq 60^\circ$
  (Fig.~\ref{fig:fano} uses $\alpha=85^\circ$). The copied triangle
  is not a hyperedge but determines the position of vertices~$3$, $5$,
  and~$7$ (i.e., after the transformation, vertices~$6$, $4$, and $2$ are mapped to vertices $3$, $5$, and $7$, respectively).  We place vertex~$1$ at $(0,0,1/2)$.

\begin{table}[tb]
  \centering
  \caption{The two smallest Steiner triple and quadruple systems}
  \label{tab:SteinerSys}

  \small
  \begin{tabular}[t]{cp{3mm}c}
    \toprule
    \multicolumn{1}{c}{$S(2,3,7)$}\\
    \midrule
    1 2 3\\
    1 4 7\\
    1 5 6\\
    2 4 6 \\    
    2 5 7\\
    3 4 5\\
    3 6 7 \\
    \bottomrule
  \end{tabular}
  \hfill
  \begin{tabular}[t]{cc}
    \toprule
    \multicolumn{2}{c}{$S(2,3,9)$}\\
    \midrule
    1 2 3 & 1 5 9\\
    4 5 6 & 2 6 7\\
    7 8 9 & 3 4 8\\
    1 4 7 & 1 6 8\\
    2 5 8 & 2 4 9\\
    3 6 9 & 3 5 7\\
    \bottomrule
  \end{tabular}
  \hfill
  \begin{tabular}[t]{cc}
    \toprule
    \multicolumn{2}{c}{$S(3,4,8)$}\\
    \midrule
    1 2 4 8 & 3 5 6 7\\
    2 3 5 8 & 1 4 6 7\\
    3 4 6 8 & 1 2 5 7\\
    4 5 7 8 & 1 2 3 6\\
    1 5 6 8 & 2 3 4 7\\
    2 6 7 8 & 1 3 4 5\\
    1 3 7 8 & 2 4 5 6\\
    \bottomrule
  \end{tabular}
  \hfill
   \begin{tabular}[t]{ccc}
     \toprule
     \multicolumn{3}{c}{$S(3,4,10)$}\\
     \midrule
     1 2 4 5 & 1 2 3 7 & 1 3 5 8\\
     2 3 5 6 & 2 3 4 8 & 2 4 6 9\\
     3 4 6 7 & 3 4 5 9 & 3 5 7 0\\
     4 5 7 8 & 4 5 6 0 & 1 4 6 8\\
     5 6 8 9 & 1 5 6 7 & 2 5 7 9\\
     6 7 9 0 & 2 6 7 8 & 3 6 8 0\\
     1 7 8 0 & 3 7 8 9 & 1 4 7 9\\
     1 2 8 9 & 4 8 9 0 & 2 5 8 0\\
     2 3 9 0 & 1 5 9 0 & 1 3 6 9\\
     1 3 4 0 & 1 2 6 0 & 2 4 7 0\\
     \bottomrule
   \end{tabular}
\end{table}

\begin{figure}[tb]
      \begin{subfigure}[b]{.48\linewidth} 
        \centering
        \begin{tikzpicture}[scale=2.75]
 
	\pgfmathsetmacro \r {sin(60)};
	\pgfmathsetmacro \R {tan(30)};

	\coordinate[label=below left:{5}] (P5) at (0,0);
	\coordinate[label=below right:{7}] (P3) at (1,0) ;
	\coordinate [label=below:{2}] (P4) at ($(P5)!.5!(P3)$) ;
	\coordinate [label=above:{3}] (P1) at ($(P4)!sin(60)*2!90:(P3)$) ;
	\coordinate [label=left:{1~~}] (P7) at ($(P4)!1/3!(P1)$);
	\coordinate [label=left:{4}] (P6) at ($(P1)!.5!(P5)$) ;
	\coordinate [label=right:{6}] (P2) at ($(P1)!.5!(P3)$);

	\draw (P7) circle (\R/2);

	\draw (P1) -- (P2) --(P3) -- (P4) -- (P5) -- (P6) --(P1);
	\draw (P1) -- (P7) -- (P4);
	\draw (P5) -- (P7) -- (P2);
	\draw (P3) -- (P7) -- (P6); 

	\foreach \i in {1,...,7}
		\draw[fill] (P\i) circle (.7pt);
		
	\node at (P7) {};

\end{tikzpicture}
      \end{subfigure}  
      \hfill
      \begin{subfigure}[b]{.48\linewidth}
        \centering
        \begin{tikzpicture}[scale=3,blend group=normal,rotate=60]
\def \s {.8};
\def \lww {.8};

\definecolor{ci}{RGB}{46,189,78}
\definecolor{co}{RGB}{11,83,69}

\coordinate (P1) at (0,0);
\coordinate (P2) at ({-sin(30)},{-tan(30)/2});
\coordinate (P6) at ($(P1)!1!120:(P2)$);
\coordinate (P4) at ($(P1)!1!-120:(P2)$);
\coordinate (P7) at ($(P1)!1!85:(P2)$);
\coordinate (P3) at ($(P1)!1!85:(P6)$);
\coordinate (P5) at ($(P1)!1!85:(P4)$);
  
\foreach \i in {4,6}
  \draw[fill] (P\i) circle (\s pt);
  
\path[fill=blue!60,opacity=.5,name path=b0] (P4) -- (P2) -- (P6) -- (P4);
\path[fill=ci!80,opacity=.4,name path=g3,semitransparent] (P1) -- (P2) -- (P3) -- (P1);    
\path[fill=ci!80,opacity=.4,name path=g2,semitransparent] (P1) -- (P4) -- (P7) -- (P1);    
\path[fill=ci!80,opacity=.4,name path=g1,semitransparent] (P1) -- (P5) -- (P6) -- (P1);     
\path[draw,blue,fill,opacity=.5,name path=b2] (P4) -- (P5) -- (P3) -- (P4);
\path[draw,blue,fill,opacity=.5,name path=b3] (P2) -- (P5) -- (P7) -- (P2);  
\path[draw,blue,fill,opacity=.5,name path=b1] (P6) -- (P7) -- (P3) -- (P6);   
\draw[line width=\lww pt,blue] (P4) -- (P5) -- (P3) -- (P4);
\draw[line width=\lww pt,blue] (P2) -- (P5) -- (P7) -- (P2);  
\draw[line width=\lww pt,blue] (P6) -- (P7) -- (P3) -- (P6);   

\node[label=left:{$1$},inner sep=1pt] {};
\node[label=below:{$2$},inner sep=1pt] at (P2) {};
\node[label=right:{$6$},inner sep=1pt] at (P6) {};
\node[label=left:{$4$},inner sep=1pt] at (P4) {};
\node[label=below:{$7$},inner sep=1pt] at (P7) {};
\node[label=above:{$3$},inner sep=.8pt] at (P3) {};
\node[label=left:{$5$},inner sep=1pt] at (P5) {};

{
blend group=multiply;
\foreach \k in {1,2,3}{

	\path [name intersections={of={b\k} and b0}];

	\foreach \i in {2,3}{
		\coordinate (C\k\i) at (intersection-\i);
		\ifnum \k=1
			\draw[line width=\lww pt,blue,dotted] (P6) -- (C\k\i);
		\fi
		\ifnum \k=2
			\draw[line width=\lww pt,blue,dotted] (P4) -- (C\k\i);
		\fi
		\ifnum \k=3
			\draw[line width=\lww pt,blue,dotted] (P2) -- (C\k\i);
		\fi
	}

	\path [name intersections={of={b\k} and {g\k}}];
	\foreach \i in {2,3}{
		\coordinate (D\k\i) at (intersection-\i);
		\ifnum \k=1
			\draw[line width=\lww pt,co,dotted] (P6) -- (D\k\i);
		\fi
		\ifnum \k=2
			\draw[line width=\lww pt,co,dotted] (P4) -- (D\k\i);
		\fi
		\ifnum \k=3
			\draw[line width=\lww pt,co,dotted] (P2) -- (D\k\i);
		\fi			
	}
}

	\path [name intersections={of={g1} and b0}];
	\draw[line width=\lww pt,blue,dotted] (intersection-1) -- (intersection-2);
	\draw[line width=\lww pt,blue] (intersection-1) -- (C22);	
	\draw[line width=\lww pt,blue] (intersection-2) -- (C32);			

	\path [name intersections={of={g2} and b0}];
	\draw[line width=\lww pt,blue,dotted] (intersection-3) -- (intersection-2);
	\draw[line width=\lww pt,blue] (intersection-3) -- (C33);	
	\draw[line width=\lww pt,blue] (intersection-2) -- (C12);			

	\path [name intersections={of={g3} and b0}];
	\draw[line width=\lww pt,blue,dotted] (intersection-3) -- (intersection-2);
	\draw[line width=\lww pt,blue] (intersection-3) -- (C13);	
	\draw[line width=\lww pt,blue] (intersection-2) -- (C23);			

	\foreach \i in {3,5,7}
	\draw[line width=\lww pt,co] (P1) -- (P\i);		
	
	\path [name intersections={of={g1} and g2}];		
	\draw[line width=\lww pt,co,dotted] (intersection-3) -- (intersection-2);
	\draw[line width=\lww pt,co,dotted] (intersection-1) -- (intersection-4);
	\draw[line width=\lww pt,co] (intersection-4) -- (D13);
	\draw[line width=\lww pt,co] (intersection-2) -- (D12);
	\draw[line width=\lww pt,co] (intersection-3) -- (P5);	

	\path [name intersections={of={g2} and g3}];		
	\draw[line width=\lww pt,co,dotted] (intersection-1) -- (intersection-2);
	\draw[line width=\lww pt,co,dotted] (intersection-3) -- (intersection-4);
	\draw[line width=\lww pt,co] (intersection-2) -- (D22);
	\draw[line width=\lww pt,co] (intersection-3) -- (D23);	
	\draw[line width=\lww pt,co] (intersection-1) -- (P3);	
	\draw[line width=\lww pt,co] (intersection-4) -- (P7);	

	\path [name intersections={of={g3} and g1}];		
	\draw[line width=\lww pt,co,dotted] (intersection-1) -- (intersection-2);
	\draw[line width=\lww pt,co,dotted] (intersection-3) -- (intersection-4);
	\draw[line width=\lww pt,co] (intersection-2) -- (D32);
	\draw[line width=\lww pt,co] (intersection-4) -- (D33);	
	\draw[line width=\lww pt,co] (intersection-1) -- (P5);	
	\draw[line width=\lww pt,co] (intersection-3) -- (P3);	
}

\foreach \i in {1,2,3,5,7}
  \draw[fill] (P\i) circle (\s pt);

\end{tikzpicture}
      \end{subfigure}

      \caption{The Fano plane and a drawing using touching triangles
        in~3D}
      \label{fig:fano}
\end{figure}
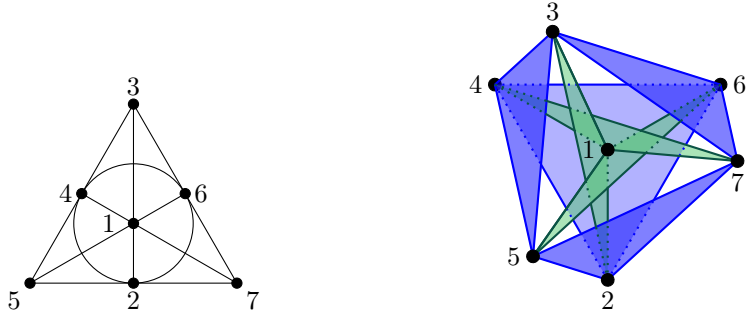

  The three (green) triangles sharing vertex~$1$ are interior-disjoint for any $\alpha$ and are non-degenerate for $\alpha \neq 60^\circ$.
  For $0^\circ < \alpha < 120^\circ$, the four (blue) triangles that are
  not incident to~$1$ are interior-disjoint and intersect no other triangles.

  Now we turn to $S(2,3,9)$; see Table~\ref{tab:SteinerSys} and 
Fig.~\ref{fig:st9-c}. We start with a unit equilateral triangle
on the xy-plane centered at the z-axis representing hyperedge $852$
(with vertices in ccw-order). We make a copy of this triangle, lift it up
by one unit, rotate it by an angle $\beta$ counterclockwise around the z-axis, and
scale it from its center by the factor $1/5$. This gives us triangle $369$.
We place vertices $1$ and $4$ at $(0,0,3/4)$ and $(0,0,1/4)$, respectively.
Figure~\ref{fig:st9-p} illustrates the triangles induced by these eight vertices.

It is easy to see that for any $\beta \le 60^\circ$, the (blue)
triangles incident to vertex~$4$ are interior-disjoint. Suppose
$\beta=60^\circ$. Then, in the projection on the xy-plane (through the
z-axis), the (green) triangles incident to vertex~$1$ map to three
segments all intersecting at the same point. Thus, in order for the
green triangles to be interior-disjoint, we need $\beta < 60^\circ$;
and in fact the smaller the scale factor is (from one), the smaller
$\beta$ needs to be.
Note that $\beta$ cannot be too small, as otherwise the blue and green triangles would intersect. More precisely, $\beta$ should be large enough so that for any two triangles $4uv$ and $1uw$, where $u \in \{2,5,8\}$ and $v,w \in \{3,6,9\}$, in the projection through the directed line $u4$, the projection of $v$ is to the right of the projection of $w$ and the projection of $w$ is to the right of the projection of $1$. In our construction, we use $\beta=45^\circ$ (and the scale factor $1/5$), which satisfies all the required conditions for having the blue and green triangles interior-disjoint.
So far, we have determined the position of eight vertices (i.e., all but $7$) such that the (eight) triangles induced by them are all pairwise non-intersecting.

Vertex $7$ (not placed yet) forms four triangles with segments $26$,
$35$, $89$, and $14$.  Note that the first three of these segments
are on the convex hull of the vertices put so far (see Fig.~\ref{fig:st9-p}\subref{fig:st9-ps}).
Let $\ell$ denote the intersection line of the planes defined by $358$ and $269$. The projection of $\ell$ and the projection of segment $89$ on the xy-plane intersect (see Fig.~\ref{fig:st9-p}\subref{fig:st9-pp}).  
Let $H$ be the plane containing $89$ and parallel to the z-axis, and
let $P$ be the intersection point of $H$ and $\ell$. If vertex $7$ is
above $P$, then the set of eight triangles not incident to $7$,
together with triangles $267$, $357$, and $789$ are all pairwise
non-intersecting (recall that segments $26$, $35$, and $67$ are on the
convex hull of vertices $\{1,\dots,6,8,9\}$). In order to make sure
that triangle $147$ does not cause any intersections, we fix the
position of $7$ by lifting $P$ slightly so that it still remains below
the plane defined by $123$. In our construction, $7$ is obtained by
lifting $P$ by $1/10$ unit. Our drawing fits in a rectangular cuboid
of size $1 \times 1 \times 1.5$. The minimum distance between any two
vertices is $1/5$ in this drawing. Figure~\ref{fig:st9-c} illustrates
a (complete) representation of $S(2,3,9)$.
\end{proof}

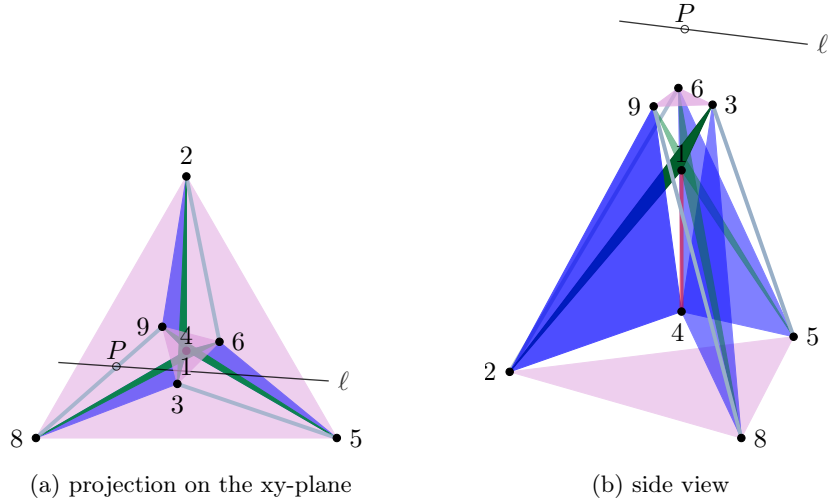
\begin{figure}[tb]
	\centering  
	\begin{subfigure}[b]{.55\linewidth}
		\centering
		\tdplotsetmaincoords{0}{0}
\begin{tikzpicture}[tdplot_main_coords,scale=4,blend group=normal,tdplot_main_coords]
\def \s {.35};

\coordinate (P1) at (0, 0, 0.75);
\coordinate (P2) at  (0, 0.58, 0);
\coordinate (P3) at (-0.03, -0.11, 1);
\coordinate (P4) at (0, 0, 0.25);
\coordinate (P5) at (0.5, -0.29, 0);
\coordinate (P6) at (0.11, 0.03, 1);
\coordinate (P7) at (-0.23, -0.05, 1.44);
\coordinate (P8) at (-0.5, -0.29, 0);
\coordinate (P9) at (-0.08, 0.08, 1);

\coordinate (l1) at (-0.42568, -0.03877, 1.41077);
\coordinate (l2) at (0.47318, -0.10018, 1.06408);

\coordinate (P) at (-0.23259, -0.05196, 1.33629);

\draw[plum,fill=plum,opacity=.5,semitransparent] (P2) -- (P5) -- (P8) -- (P2);

\draw[blue,fill=blue,opacity=.5] (P4) -- (P5) -- (P6) -- (P4);
\draw[blue,fill=blue,opacity=.5] (P4) -- (P3) -- (P8) -- (P4);  
\draw[blue,fill=blue,opacity=.5] (P4) -- (P2) -- (P9) -- (P4);  

\draw[darkgreen,fill=darkgreen,opacity=.9] (P1) -- (P6) -- (P8) -- (P1);
\draw[darkgreen,fill=darkgreen,opacity=.9] (P1) -- (P2) -- (P3) -- (P1);  
\draw[darkgreen,fill=darkgreen,opacity=.9] (P1) -- (P5) -- (P9) -- (P1);

\draw[lgray,opacity=.8,line width=1.5pt] (P2) -- (P6);
\draw[lgray,opacity=.8,line width=1.5pt] (P3) -- (P5);
\draw[lgray,opacity=.8,line width=1.5pt] (P8) -- (P9);
\draw[red,opacity=.8,line width=1.5pt] (P1) -- (P4);

\draw[opacity=.8,line width=.5pt] (l1) -- (l2) node[right] {$\ell$};

\draw[fill] (P1) circle (\s pt);

\draw[plum,fill=plum,opacity=.7] (P3) -- (P6) -- (P9) -- (P3);

\foreach \i in {2,3,5,6,8,9}
  \draw[fill] (P\i) circle (\s pt);

\node[label=below:{$1$},inner sep=1pt,yshift=.7mm] at (P1) {};
\node[label=above:{$2$},inner sep=1pt] at (P2) {};
\node[label=below:{$3$},inner sep=.8pt] at (P3) {};
\node[label=above:{$4$},inner sep=1pt,yshift=-.7mm] at (P4) {};
\node[label=right:{$5$},inner sep=1pt] at (P5) {};
\node[label=right:{$6$},inner sep=1pt] at (P6) {};
\node[label=left:{$8$},inner sep=1pt] at (P8) {};
\node[label=left:{$9$},inner sep=1pt] at (P9) {};

\node[circle,draw,opacity=.8,minimum size=1mm,inner sep=1pt] at (P) {};
\node[label=above:{$P$},yshift=-1.5mm] at (P) {};
\end{tikzpicture}
		\caption{projection on the xy-plane}
		\label{fig:st9-pp}  
	\end{subfigure}  
	\hfill
	\begin{subfigure}[b]{.42\linewidth}
	    \centering
            \tdplotsetmaincoords{70}{-80}
\begin{tikzpicture}[tdplot_main_coords,scale=4,
blend group=normal,tdplot_main_coords]
\def \s {.35};

\coordinate (P1) at (0, 0, 0.75);
\coordinate (P2) at  (0, 0.58, 0);
\coordinate (P3) at (-0.03, -0.11, 1);
\coordinate (P4) at (0, 0, 0.25);
\coordinate (P5) at (0.5, -0.29, 0);
\coordinate (P6) at (0.11, 0.03, 1);
\coordinate (P7) at (-0.23, -0.05, 1.44);
\coordinate (P8) at (-0.5, -0.29, 0);
\coordinate (P9) at (-0.08, 0.08, 1);

\coordinate (l1) at (-1.13623, 0.00977, 1.68482);
\coordinate (l2) at (1.46373, -0.16786, 0.68203);

\coordinate (P) at (-0.23259, -0.05196, 1.33629);

\draw[lgray,opacity=.8,line width=1.5pt] (P2) -- (P6);
  
\draw[plum,fill=plum,opacity=.7,semitransparent] (P2) -- (P5) -- (P8) -- (P2);

\draw[blue,fill=blue,opacity=.35,semitransparent] (P4) -- (P5) -- (P6) -- (P4);

\draw[darkgreen!90!black,fill=darkgreen!100!black,opacity=.55,semitransparent] (P1) -- (P9) -- (P5) -- (P1); 

\draw[blue,fill=blue,opacity=.5] (P4) -- (P3) -- (P8) -- (P4);   

\draw[darkgreen!85!black,fill=darkgreen!88!black,opacity=.5] (P1) -- (P6) -- (P8) -- (P1);

\draw[darkgreen!60!black,fill=darkgreen!70!black,opacity=1] (P1) -- (P2) -- (P3) -- (P1);

\draw[blue,fill=blue,opacity=.7] (P4) -- (P2) -- (P9) -- (P4);
 
\draw[plum,fill=plum,opacity=.7] (P3) -- (P6) -- (P9) -- (P3);

\draw[red,opacity=.5,line width=1.5pt] (P1) -- (P4);

\draw[lgray,opacity=1,line width=1.5pt] (P3) -- (P5);
\draw[lgray,opacity=1,line width=1.5pt] (P8) -- (P9);

\draw[opacity=.8,line width=.5pt] (l1) -- (l2) node[right] {$\ell$};

\foreach \i in {1,2,3,4,5,6,8,9}
  \draw[fill] (P\i) circle (\s pt);

\node[label=above:{$1$},inner sep=1pt,yshift=-.5mm] at (P1) {};
\node[label=left:{$2$},inner sep=1pt] at (P2) {};
\node[label=right:{$3$},inner sep=.8pt] at (P3) {};
\node[label=below:{$4$},inner sep=1pt,yshift=0mm,xshift=-.5mm] at (P4) {};
\node[label=right:{$5$},inner sep=1pt] at (P5) {};
\node[label=right:{$6$},inner sep=1pt] at (P6) {};
\node[label=right:{$8$},inner sep=1pt] at (P8) {};
\node[label=left:{$9$},inner sep=1pt] at (P9) {};

\node[circle,draw,opacity=.8,minimum size=1mm,inner sep=1pt] at (P) {};
\node[label=above:{$P$},yshift=-1.5mm] at (P) {};

\end{tikzpicture}
	    \caption{side view}
	    \label{fig:st9-ps}
        \end{subfigure}
          	\caption{Partial drawing of the Steiner triple system $S(2,3,9)$ showing triangles not incident to $7$, together with the opposite segments to $7$ for the remaining triangles.}
	\label{fig:st9-p}
\end{figure}

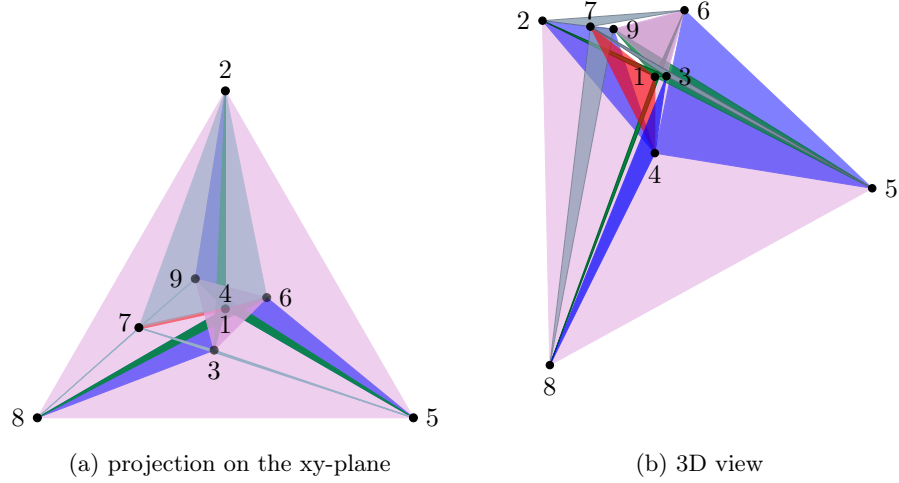
\begin{figure}[tb]
	\centering  
	\begin{subfigure}[b]{.55\linewidth}
		\centering
		\tdplotsetmaincoords{0}{0}
\begin{tikzpicture}[tdplot_main_coords,scale=5,blend group=normal,tdplot_main_coords]
\def \s {.3};

\coordinate (P1) at (0, 0, 0.75);
\coordinate (P2) at  (0, 0.58, 0);
\coordinate (P3) at (-0.03, -0.11, 1);
\coordinate (P4) at (0, 0, 0.25);
\coordinate (P5) at (0.5, -0.29, 0);
\coordinate (P6) at (0.11, 0.03, 1);
\coordinate (P7) at (-0.23, -0.05, 1.44);
\coordinate (P8) at (-0.5, -0.29, 0);
\coordinate (P9) at (-0.08, 0.08, 1);

\draw[plum,fill=plum,opacity=.5,semitransparent] (P2) -- (P5) -- (P8) -- (P2);

\draw[blue,fill=blue,opacity=.5] (P4) -- (P5) -- (P6) -- (P4);
\draw[blue,fill=blue,opacity=.5] (P4) -- (P3) -- (P8) -- (P4);  
\draw[blue,fill=blue,opacity=.5] (P4) -- (P2) -- (P9) -- (P4);  

\draw[darkgreen,fill=darkgreen,opacity=.9] (P1) -- (P6) -- (P8) -- (P1);
\draw[darkgreen,fill=darkgreen,opacity=.9] (P1) -- (P2) -- (P3) -- (P1);  
\draw[darkgreen,fill=darkgreen,opacity=.9] (P1) -- (P5) -- (P9) -- (P1);

\draw[red,fill=red,opacity=.5,line width=1.5pt] (P4) -- (P1) -- (P7) -- cycle;

\draw[plum,fill=plum,opacity=.85] (P3) -- (P6) -- (P9) -- (P3);

\filldraw[lgray,opacity=1,line width=.5pt] (P7) -- (P3) -- (P5) -- (P7);
\filldraw[lgray,opacity=1,line width=.3pt] (P7) -- (P8) -- (P9) -- (P7);
\filldraw[lgray,opacity=.65,line width=.2pt] (P7) -- (P2) -- (P6) -- (P7);

\draw[fill,opacity=.3] (P1) circle (\s pt);

\foreach \i in {3,6,9}
  \draw[fill,opacity=.6] (P\i) circle (\s pt);

\foreach \i in {2,5,8,7}
  \draw[fill] (P\i) circle (\s pt);

\node[label=below:{$1$},inner sep=1pt,yshift=.7mm] at (P1) {};
\node[label=above:{$2$},inner sep=1pt] at (P2) {};
\node[label=below:{$3$},inner sep=.8pt] at (P3) {};
\node[label=above:{$4$},inner sep=1pt,yshift=-.7mm] at (P4) {};
\node[label=right:{$5$},inner sep=1pt] at (P5) {};
\node[label=right:{$6$},inner sep=1pt] at (P6) {};
\node[label=left:{$8$},inner sep=1pt] at (P8) {};
\node[label=left:{$9$},inner sep=1pt] at (P9) {};
\node[label=above left:{$7$},inner sep=0pt,xshift=0.5mm,yshift=-2mm] at (P7) {};

\end{tikzpicture}
		\caption{projection on the xy-plane}
		\label{fig:st9-cp}  
	\end{subfigure}  
	\hfill
	\begin{subfigure}[b]{.42\linewidth}
	    \centering
            \tdplotsetmaincoords{24}{-31}
\begin{tikzpicture}[tdplot_main_coords,scale=5,
blend group=normal,tdplot_main_coords]
\def \s {.28};

\coordinate (P1) at (0, 0, 0.75);
\coordinate (P2) at  (0, 0.58, 0);
\coordinate (P3) at (-0.03, -0.11, 1);
\coordinate (P4) at (0, 0, 0.25);
\coordinate (P5) at (0.5, -0.29, 0);
\coordinate (P6) at (0.11, 0.03, 1);
\coordinate (P7) at (-0.23, -0.05, 1.44);
\coordinate (P8) at (-0.5, -0.29, 0);
\coordinate (P9) at (-0.08, 0.08, 1);

\coordinate (P) at (-0.23259, -0.05196, 1.33629);

\draw[plum,fill=plum,opacity=.7,semitransparent] (P2) -- (P5) -- (P8) -- (P2);

\draw[blue,fill=blue,opacity=.35,semitransparent] (P4) -- (P2) -- (P9) -- (P4);
\draw[blue,fill=blue,opacity=.35,semitransparent] (P4) -- (P5) -- (P6) -- (P4);

\draw[draw=darkgreen!80!black,fill=darkgreen,opacity=.8] (P1) -- (P9) -- (P5) -- (P1); 
\draw[draw=darkgreen!70!black,fill=darkgreen,opacity=1] (P1) -- (P6) -- (P8) -- (P1);
\draw[draw=darkgreen!60!black,fill=darkgreen,opacity=1] (P1) -- (P2) -- (P3) -- (P1);

\draw[plum,fill=plum!90!black,opacity=.75] (P3) -- (P6) -- (P9) -- (P3);

\filldraw[draw=lgray!60!black,fill=lgray!90!black,opacity=.8,line width=.3pt] (P7) -- (P2) -- (P6) -- (P7);
\filldraw[draw=lgray!60!black,fill=lgray!90!black,opacity=.7,line width=.3pt] (P7) -- (P8) -- (P9) -- (P7);
\draw[red,fill=red,opacity=.6] (P4) -- (P1) -- (P7) -- cycle;

\draw[blue,fill=blue,opacity=.7] (P4) -- (P3) -- (P8) -- (P4);

\filldraw[draw=lgray!60!black,fill=lgray!85!black,opacity=1,line width=.3pt] (P7) -- (P3) -- (P5) -- (P7);

\foreach \i in {1,2,3,4,5,6,8,9,7}
  \draw[fill] (P\i) circle (\s pt);

\node[label=left:{$1$},inner sep=1pt,xshift=.5mm] at (P1) {};
\node[label=left:{$2$},inner sep=1pt] at (P2) {};
\node[label=right:{$3$},inner sep=.8pt] at (P3) {};
\node[label=below:{$4$},inner sep=1pt,yshift=0mm] at (P4) {};
\node[label=right:{$5$},inner sep=1pt] at (P5) {};
\node[label=right:{$6$},inner sep=1pt] at (P6) {};
\node[label=below:{$8$},inner sep=1pt] at (P8) {};
\node[label=right:{$9$},inner sep=1pt] at (P9) {};

\node[label=above:{$7$},inner sep=0pt,yshift=-.5mm] at (P7) {};

\end{tikzpicture}
	    \caption{3D view}
	    \label{fig:st9-cs}
        \end{subfigure}
   	\caption{3D contact representation of the Steiner triple
          system $S(2,3,9)$.}
	\label{fig:st9-c}
\end{figure}

Now we turn to a special class of 4-uniform hypergraphs; Steiner
quadruple systems $S(3,4,n)$~\cite{w-sqs-19}.  They exist for any
vertex number in $\{6k+2, 6k+4 \colon k \in \mathbb{N}\}$.
For $n = 8, 10, 14, \dots$, the corresponding 4-uniform hypergraph
has $m=\binom{n}{3}/4$ hyperedges and vertex degree $4m/n=(n-1)(n-2)/6$.
In the following, we study the realizability of Steiner quadruple
systems.

\begin{observation}
  \label{obs:4inPlane}
  In a non-crossing drawing of a Steiner quadruple system using
  quadrilaterals in~3D, every plane contains at most four vertices.
\end{observation}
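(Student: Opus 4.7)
I would argue by contradiction. Suppose some plane $\Pi$ contains five corner points $p_1,\dots,p_5$, representing five vertices $v_1,\dots,v_5$ of the Steiner quadruple system. Since a non-degenerate quadrilateral has no three collinear corners, and every three vertices lie together in some block (whose quadrilateral would otherwise be degenerate), at most three of the $p_i$ can be collinear. The five points are therefore not all collinear, so by the Sylvester--Gallai theorem there exist two of them, say $p_a$ and $p_b$, such that the line $\ell$ through them contains none of the remaining three corners.

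For each of the three remaining vertices $v_c \in \{v_1,\dots,v_5\}\setminus\{v_a,v_b\}$, the triple $\{v_a,v_b,v_c\}$ is contained in a unique block $B_c$ of the system. Because $p_a,p_b,p_c$ lie in $\Pi$ and are non-collinear, they uniquely determine the plane of the quadrilateral representing $B_c$, so that quadrilateral lies entirely in $\Pi$. The three blocks obtained this way are pairwise distinct (they contain distinct triples), and any two blocks of a Steiner quadruple system share at most two elements (otherwise a common triple would force them to be equal); hence any two of our three blocks share exactly $\{v_a,v_b\}$. Thus we obtain three planar quadrilaterals in $\Pi$ that pairwise share exactly the corners $p_a$ and $p_b$ and are otherwise disjoint.

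The main obstacle is the final planar step: showing that three such quadrilaterals cannot coexist. For convex quadrilaterals I would argue as follows. If some quadrilateral $Q$ had $p_a p_b$ as a diagonal, then the open segment $p_a p_b$ would lie in the interior of $Q$, and any second quadrilateral sharing both $p_a$ and $p_b$ would have interior points arbitrarily close to this segment, contradicting non-intersection. Hence $p_a p_b$ is an edge of each of the three quadrilaterals, placing each one entirely on one side of $\ell$. But two quadrilaterals sharing the edge $p_a p_b$ on the same side of $\ell$ overlap along a strip adjacent to that edge, so at most one quadrilateral lies on each side of $\ell$, for a total of at most two---contradicting the existence of three. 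For non-convex simple quadrilaterals the same conclusion should follow from a more careful topological argument: each boundary consists of two Jordan arcs from $p_a$ to $p_b$, and an analysis of how three such pairs of arcs can nest in $\Pi$ without crossing yields the same bound of two.
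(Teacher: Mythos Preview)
Your argument has a genuine gap at the step where you assert that the three blocks $B_c,B_d,B_e$ are pairwise distinct. You justify this by saying they ``contain distinct triples,'' but distinct triples can lie in the \emph{same} block: if $\{v_a,v_b,v_c,v_d\}$ happens to be a block of the Steiner system, then both $\{v_a,v_b,v_c\}$ and $\{v_a,v_b,v_d\}$ sit inside it, so $B_c=B_d$. In that case you are left with only two quadrilaterals in~$\Pi$ sharing the corners $p_a,p_b$, and two such quadrilaterals can certainly coexist (place one on each side of~$\ell$, say). Your Sylvester--Gallai choice guarantees nothing about which side of~$\ell$ the points $p_c,p_d,p_e$ lie on, so the argument stalls here.

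The paper avoids this obstacle by choosing $a,b$ differently: instead of an ordinary Sylvester--Gallai line, it takes $ab$ to be an edge of the convex hull of \emph{all} vertices lying in~$\Pi$. This forces every vertex in~$\Pi$---including the unknown fourth corners of the relevant quadrilaterals---to lie on one closed side of the line through $a$ and $b$. Now only two off-segment points $u,v$ are needed: if the blocks through $abu$ and $abv$ are distinct, the two convex quadrilaterals both hug the segment $ab$ from the same side and must overlap. If instead $\{a,b,u,v\}$ is itself a block (your missing case), then the set of points off segment $ab$ has size exactly two, so a fifth point must lie strictly between $a$ and $b$ on that segment, hence on the perimeter of the quadrilateral $abuv$ without being one of its corners---contradicting the contact rules. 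Your non-convex sketch is also left as an exercise; the paper's one-sided setup makes that final planar step considerably shorter.
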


\begin{proof}
  Suppose that there is a drawing~$R$ and a plane~$\Pi$ that
  contains at least five vertices.
  Let $ab$ be a maximum length edge of the convex hull of the points
  in the plane $\Pi$.
  No four, say $wxyz$ in that order, can be collinear, otherwise the
  quadrilateral containing $wyz$ is either $wxyz$, which is degenerate
  (a line segment), or it contains $x$ on its perimeter but $x$ is
  not a corner, a contradiction.
  Thus the set $S$ of vertices on $\Pi$ that are
  not on the edge $ab$ has size at least two.
  If there exist $u,v \in S$ such that $abu$ and $abv$
  form\footnote{In a Steiner quadruple system, every triple of
    vertices appears in a unique quadruple.} two distinct
  quadrilaterals with $ab$ then these quadrilaterals intersect
  in the plane (they are both on the same side of $ab$), a contradiction.
  If no such pair exists then $S$ contains exactly two points and they
  form one quadrilateral with $ab$, which must contain the other
  vertex in $\Pi$ (on the edge $ab$) that is not a corner, a contradiction.
\end{proof}

Observation~\ref{obs:4inPlane} is the starting point for the following result.

\begin{proposition}
  \label{thm:noS348asQuads}
  The Steiner quadruple system $S(3,4,8)$ does not admit a
  non-crossing drawing using (convex or non-convex) quadrilaterals
  in~3D.
\end{proposition}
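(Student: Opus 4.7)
The plan is to convert a hypothetical non-crossing drawing of $S(3,4,8)$ into a straight-line realization of the Fano plane in the real affine plane, which is classically impossible.

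First I would check that the seven blocks of $S(3,4,8)$ containing vertex $8$ are, after deleting $8$, the triples
\[\{1,2,4\},\{2,3,5\},\{3,4,6\},\{4,5,7\},\{1,5,6\},\{2,6,7\},\{1,3,7\},\]
and that these are the lines of an $S(2,3,7)$, i.e.\ the Fano plane (they are, e.g., the translates of $\{1,2,4\}$ modulo $7$). Next I would argue that in any non-crossing drawing no three vertices can be collinear in $\mathbb{R}^3$: if $a,b,c$ were collinear, then the unique block $\{a,b,c,d\}$ containing this triple would, in each of the three possible cyclic orders around its quadrilateral, either be degenerate (three consecutive corners on a line) or place one of its corners in the interior of a non-adjacent edge, in either case contradicting the simplicity of the quadrilateral.

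Then I would choose a plane $H\subset\mathbb{R}^3$ that avoids the point representing vertex $8$ and is not parallel to any of the seven planes $\Pi_{abc}$ spanned by the blocks $\{8,a,b,c\}$; this is a generic condition, so such an $H$ exists. Projecting from $8$ onto $H$, set $\pi(v):=\overline{8\,v}\cap H$ for each $v\in\{1,\dots,7\}$. Since no three vertices are collinear, the seven images $\pi(1),\dots,\pi(7)$ are distinct points of $H$. For every Fano line $\{a,b,c\}$, the block $\{8,a,b,c\}$ is coplanar, so $\pi(a),\pi(b),\pi(c)$ lie on the straight line $\Pi_{abc}\cap H$. Because distinct Fano lines span distinct planes through $8$, they give distinct lines in $H$. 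We thus obtain seven distinct points and seven distinct straight lines in the affine plane $H$ realizing exactly the incidences of the Fano plane.

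This contradicts the classical non-realizability of the Fano plane over $\mathbb{R}$, since the Fano axiom holds in every desarguesian projective plane over a field of characteristic different from $2$. The main technical delicacy I foresee lies in the choice of $H$: we need to guarantee that none of the seven projected lines degenerates, that none of them escapes to infinity, and that distinct Fano lines give distinct projected lines. These are finitely many codimension-$1$ conditions on $H$ and can be met by a generic choice, or avoided altogether by projecting into $\mathbb{RP}^2$, where the Fano plane is equally non-realizable.
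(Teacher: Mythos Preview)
Your approach via the Fano plane is elegant and genuinely different from the paper's proof.  The paper fixes the block $1248$ in the $xy$-plane and, using the pairing structure of $S(3,4,8)$, tracks the supporting planes of six further blocks to force the triple $\{3,6,7\}$ into the same plane; it then quotes Observation~\ref{obs:4inPlane} (at most four vertices per plane) for the contradiction.  Your route---project from vertex~$8$ and invoke the non-realizability of $PG(2,2)$ over~$\mathbb{R}$---is more conceptual and would generalise nicely if it went through.

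There is, however, a real gap.  You assert that ``distinct Fano lines span distinct planes through~$8$'' and later say this ``can be met by a generic choice'' of~$H$.  But whether two blocks $\{8,a,b,c\}$ and $\{8,a,b',c'\}$ span the same plane in~$\mathbb{R}^3$ is a property of the eight given points, not of~$H$; if they are coplanar, their projections coincide for \emph{every}~$H$.  Your no-three-collinear argument does not prevent five or more of the eight points from being coplanar, and in the extreme case where all eight are coplanar the seven projected points are distinct yet all lie on a single line of~$H$.  That degenerate configuration satisfies every Fano collinearity, so the classical non-realizability theorem (which needs seven distinct points \emph{and} seven distinct lines, equivalently not all points collinear) does not apply.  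One can check that any ``extra'' collinearity in a putative Fano realization forces \emph{all} seven points onto one line, so the only case you are missing is precisely ``all eight points coplanar''; but you still have to rule that out.  The cleanest fix is to prove the paper's Observation~\ref{obs:4inPlane} first (its proof is short and uses exactly your no-three-collinear step); once you know at most four vertices lie in any plane, the seven planes through~$8$ are automatically distinct and your projection argument finishes the job.
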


\begin{proof}
  The Steiner quadruple system $S(3,4,8)$ has eight vertices and $14$
  hyperedges and is unique; see Table~\ref{tab:SteinerSys}.

  Assume that $S(3,4,8)$ has a contact representation by
  quadrilaterals. Without loss of generality, assume that
  quadrilateral $1248$ lies on the xy-plane.
  We show that the supporting plane of the triple $367$ is also the
  xy-plane, which, by Observation~\ref{obs:4inPlane}, is a
  contradiction.

  The line through $18$ and the line through $24$ either intersect in
  a point $v$ on the xy-plane or are parallel.
  The supporting planes of $1378$ and $2347$ both contain the line
  through $37$ and the point $v$ or, if $v$ doesn't exist, the line
  $37$ is parallel to $18$ and $24$.
  Similarly, the lines $14$ and $28$ intersect in a point $w$ on the
  xy-plane or are parallel.
  The supporting planes of $1467$ and $2678$ both contain the line
  through $67$ and the point $w$ or, if $w$ doesn't exist, the line
  $67$ is parallel to $14$ and $28$.
  Again, a similar statement holds for the intersection $u$ on the
  xy-plane of the lines $12$ and $48$.
  The supporting planes of $1236$ and $3468$ both contain the line
  $36$ and the point $u$ or, if $u$ doesn't exist, the line $36$ is
  parallel to $12$ and $48$.  These conditions imply that the
  supporting plane of $367$ is parallel to the xy-plane 
  (unless $3$, $6$, and $7$ are all collocated which is not possible as otherwise quadrilateral $3567$ is a segment).
  Since at least one of $u$, $v$, and $w$ exists and is in the
  xy-plane, $367$ lies in the xy-plane,
  contradicting Observation~\ref{obs:4inPlane}.
\end{proof} 

The main observation used in proving Proposition~\ref{thm:noS348asQuads} is that if we partition any quadruple $abcd$ into two pairs in any way, there exists a fixed pair, say $ef$, such that the union of $ef$ and each of the two partitions form a quadruple in $S(3,4,8)$. We note that the same property holds for $S(3,4,10)$. However, since this case contains more vertices, the ``fixed pairs'' obtained from different ways of ``partitioning'' of a quadruple would not have common vertices, and hence this property alone is not enough 
to show that $S(3,4,10)$ cannot be realized.  We show this for a
restricted case using the following auxiliary lemma.

\begin{lemma}
  \label{lem:quadruple}
  Let $H$ be a hypergraph whose edge set contains the subset $F =
  \{abcd, abuv, cduv, acwx, bdwx, adyz, bcyz\}$. Then $H$ does not admit a
  contact representation by quadrilaterals in which the edges in $F$
  are all convex or all non-convex. 
\end{lemma}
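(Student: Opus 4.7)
The plan is to argue by contradiction, treating the all-convex case in detail; the all-non-convex case will follow by a parallel analysis since my main tool, a concurrency statement, depends only on supporting planes, not on convexity. Suppose then that we have a non-crossing drawing in which every polygon in $F$ is convex, and let $\pi_0$ be the supporting plane of the central quadrilateral $abcd$.

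The first step will be a concurrency observation, applied to each of the three pairs $\{abuv,cduv\}$, $\{acwx,bdwx\}$, $\{adyz,bcyz\}$. Focussing on the first pair, write $\pi_1,\pi_2$ for the planes of $abuv, cduv$. Since $\pi_0\cap\pi_1\supseteq\mathrm{line}(a,b)$, $\pi_0\cap\pi_2\supseteq\mathrm{line}(c,d)$, and $\pi_1\cap\pi_2\supseteq\mathrm{line}(u,v)$, either all three planes coincide (a degenerate coplanar case I dispose of by a direct 2D argument about packing seven convex polygons with the prescribed adjacencies into a plane) or they are pairwise distinct, in which case their three pairwise intersection lines are concurrent. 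The common point must equal $p_1=\mathrm{line}(a,b)\cap\mathrm{line}(c,d)\in\pi_0$, so $p_1$ lies on line $uv$. Applying the same to the other two pairs produces the three diagonal points of the complete quadrangle $\{a,b,c,d\}$: $p_1$ on $uv$, $p_2=\mathrm{line}(ac)\cap\mathrm{line}(bd)$ on $wx$, and $p_3=\mathrm{line}(ad)\cap\mathrm{line}(bc)$ on $yz$.

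After relabeling the three pairs $(uv,wx,yz)$ if necessary, I may assume the convex cyclic order of $abcd$ is $a,b,c,d$, so that exactly $p_2$ is interior to $abcd$ (the crossing of its diagonals), while $p_1,p_3$ are exterior. This lets me pin down the cyclic orders of the other six polygons. The key case is $acwx$: both $abcd$ and $acwx$ contain segment $ac$ (connecting two of their corners), and by the concurrency the segment $wx$ crosses segment $ac$ at the interior point $p_2$. Two chords of a convex polygon can cross in its interior only if both are diagonals, so both $ac$ and $wx$ must be diagonals of $acwx$, forcing its cyclic order to be $a,w,c,x$. An identical argument gives cyclic order $b,w,d,x$ for $bdwx$. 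For the remaining four polygons, the corresponding diagonal point lies \emph{outside} $abcd$, so the symmetric argument instead forces the shared pair to form opposite edges: $abuv$ has cyclic order $a,b,u,v$, and likewise for $cduv,adyz,bcyz$. As an immediate consequence, $w$ and $x$ lie on opposite sides of $\pi_0$ (since $p_2$ is interior to segment $wx$), whereas $u,v$ lie on a common side of $\pi_0$ and $y,z$ lie on a common side.

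The main obstacle will be combining these rigid structural constraints into a concrete contradiction. My plan is to focus on the portion of the configuration on the side of $\pi_0$ containing $w$: there I see three full convex quadrilaterals (after reflecting half-spaces so that $u,v$ and $y,z$ are placed on the $w$ side) together with two triangular halves of $acwx$ and $bdwx$, all pairwise sharing prescribed corners and edges on $\pi_0$. Orthogonally projecting this upper-side configuration onto $\pi_0$ yields five planar shapes with prescribed incidences at $a,b,c,d$, and I would check that the forced cyclic orders around these four shared corners cannot be simultaneously realized without some pair of projected shapes overlapping in a 2D region, which lifts to a 2D overlap of two of the original polygons and contradicts the non-crossing condition. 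The all-non-convex case is handled by the same framework, with ``diagonal'' replaced by the chord cut off by the reflex vertex of each quadrilateral, and the analogous incompatibility emerges from the same cyclic-order analysis.
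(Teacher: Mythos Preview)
Your framework is far more elaborate than necessary and, as written, does not reach a contradiction in either case.

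\medskip
\textbf{Convex case.} You set up concurrency for all three pairs and then try to pin down the cyclic order of every polygon in $F$, ending with a projection onto $\pi_0$ and the sentence ``I would check that the forced cyclic orders \dots cannot be simultaneously realized.'' That check is the entire content of the argument, and you have not carried it out. More importantly, you already have the contradiction in hand and walk past it. You establish that $p_2=\mathrm{line}(ac)\cap\mathrm{line}(bd)$ lies in the open interior of the convex quadrilateral $abcd$, and then (correctly) that $ac$ and $wx$ are both diagonals of $acwx$, intersecting at $p_2$. But then $p_2$ lies in the open interior of $acwx$ as well, so the open polygons $abcd$ and $acwx$ share an interior point---a direct violation of the contact condition. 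This is essentially the paper's proof: in the convex quadrilateral $abcd$ some pair, say $ab$ and $cd$, are the diagonals; they cross at a point that the convex polygons $abuv$ and $cduv$ are then forced to share (hence ``one must be above $e$, the other below,'' yet $u,v$ belong to both). One pair from $F$ suffices; there is no need to invoke all three diagonal points or any projection argument.

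\medskip
\textbf{Non-convex case.} Your claim that this ``will follow by a parallel analysis since my main tool, a concurrency statement, depends only on supporting planes'' is not justified. Concurrency survives, but everything you do afterwards---deducing that $p_2$ lies on \emph{segment} $wx$, that crossing chords of a polygon must be diagonals, the cyclic-order deductions---relies on convexity. The paper's non-convex argument is genuinely different: it fixes the reflex vertex, identifies the intersection point $x$ of the supporting lines of the two diagonals (which now need not lie inside the polygon), shows $u,v$ lie on the line through $x$ determined by the two other supporting planes, and then obtains a contradiction by showing that in either placement of $u,v$ relative to $x$ one of $abuv,cduv$ has its four corners in convex position, impossible for a non-degenerate non-convex quadrilateral. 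Your one-line dismissal does not supply any of this.
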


\begin{proof}
  Suppose that $H$ has a representation where the hyperedges in $F$
  are all convex. Let $e=abcd$. (Note that we identify~$e$ with the
  quadrilateral that represents it.)  No matter which segments form
  the diagonals of $e$
  ($ab$ and~$cd$, $ac$ and~$bd$, or~$ad$ and~$bc$), there is a pair
  ($uv$, $wx$, or $yz$) that forms two hyperedges with the two
  diagonals.
  We assume that the diagonals are~$ab$ and~$cd$ forming hyperedges
  $abuv$ and $cduv$.  Due to the convexity
  of~the quadrilateral~$e$, $ab$ and~$cd$ intersect.  Hence, since
  $abuv$ and $cduv$ are convex, one of
  the hyperedges~$abuv$ and~$cduv$ must be drawn above~$e$, and 
  the other below~$e$.
  This yields the desired contradiction since~$u$ and~$v$
  are contained in both of these hyperedges.

  Suppose that $H$ has a representation where the hyperedges in $F$
  are all non-convex.  We may
  assume that the diagonals of $e$ are again~$ab$ and~$cd$, that~$ab$ is not
  contained in~$e$, and that $c$ lies on the convex hull of~$e$
  whereas~$d$ does not.  Let~$x$ be the intersection point of the
  supporting lines of~$ab$ and~$cd$.  Note that~$x$ lies on~$ab$.
  This is due to the fact that~$cd$ is incident to~$c$ and lies inside
  the angle $\angle acb$.  The supporting planes of the
  quadrilaterals~$abuv$ and~$cduv$ intersect in a line~$\ell$ that
  intersects the supporting plane of~$e$ in~$x$.  Clearly, $u$ and~$v$
  must lie on~$\ell$ since they are part of both quadrilaterals~$abuv$
  and~$cduv$.

  We consider two cases.  In the first case, $u$ and~$v$ lie on
  different halflines of~$\ell$ with respect to~$x$.  Then
  vertices~$a$, $b$, $u$, and~$v$ are in convex position, forming a
  convex quadrilateral with diagonals~$uv$ and~$ab$.  Note that it is
  not possible to connect points in convex position with straight-line
  segments to form a non-degenerate non-convex polygon.  This
  contradicts the fact that all quadrilaterals in $F$ must be non-convex.  In
  the second case, $u$ and~$v$ lie on the same halfline of~$\ell$ with
  respect to~$x$.  But then vertices~$c$, $d$, $u$, and~$v$ are in
  convex position since also~$c$ and~$d$ lie on the same halfline with
  respect to~$x$.  This again yields the desired contradiction.
\end{proof}

\begin{proposition} \label{thm:sqs10}
  The Steiner quadruple system $S(3,4,10)$ does
  not admit a non-crossing drawing in 3D, where all quadrilaterals representing the hyperedges are
  convex or all quadrilaterals are non-convex.
\end{proposition}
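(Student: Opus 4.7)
The plan is to reduce the statement directly to Lemma~\ref{lem:quadruple} by exhibiting, inside the edge set of $S(3,4,10)$, a subset $F$ of seven hyperedges of exactly the form required by that lemma. Once such a witness $F$ is found, any contact representation that is uniformly convex (resp.\ uniformly non-convex) on all quadrilaterals is in particular uniformly convex (resp.\ non-convex) on $F$, and Lemma~\ref{lem:quadruple} rules this out.

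Concretely, I would take $abcd=1245$ as the ``central'' quadruple and consult the $S(3,4,10)$ table to determine, for each of the three ways of partitioning $\{a,b,c,d\}$ into two pairs, the unique pair that completes both halves to a hyperedge. Since every triple of points lies in a unique block, for a pair $\{p,q\}\subseteq\{a,b,c,d\}$ and its complement $\{r,s\}\subseteq\{a,b,c,d\}$, the ``completion'' of $\{p,q\}$ in a block other than $abcd$ must be some pair $\{u,v\}$ disjoint from $\{a,b,c,d\}$; the key is that the analogous completion pair for $\{r,s\}$ turns out to be the \emph{same} pair $\{u,v\}$. For $abcd=1245$, checking the table gives $uv=60$ (so $1260,4560\in S(3,4,10)$), $wx=79$ (so $1479,2579\in S(3,4,10)$), and $yz=38$ (so $1358,2348\in S(3,4,10)$). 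Together with $1245$ itself, these are seven distinct blocks forming the subset $F$ demanded by Lemma~\ref{lem:quadruple}.

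Having verified membership of all seven quadruples, the conclusion is immediate: in any non-crossing drawing of $S(3,4,10)$ where every quadrilateral is convex (respectively, every quadrilateral is non-convex), the restriction to $F$ would be a contact representation of the sub-hypergraph $F$ in which all seven quadrilaterals are simultaneously convex (respectively, simultaneously non-convex), which Lemma~\ref{lem:quadruple} forbids.

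The only real obstacle is the combinatorial search for the witness $F$: a priori it is not clear that a single quadruple $abcd\in S(3,4,10)$ admits three completion pairs, one per two-pair partition, with each completion pair serving \emph{both} halves of its partition. In the smaller system $S(3,4,8)$ such coincidences are forced for every quadruple (this is what drives Proposition~\ref{thm:noS348asQuads}), but in $S(3,4,10)$ they occur only for some choices of $abcd$, and need to be exhibited by hand (or verified by a short enumeration over the $30$ blocks of the table). Once the witness $1245,1260,4560,1479,2579,1358,2348$ is in hand, the rest is a one-line appeal to Lemma~\ref{lem:quadruple}.
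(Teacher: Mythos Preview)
Your proof is correct and essentially identical to the paper's: both exhibit the same seven-block witness $\{1245,1260,4560,1479,2579,1358,2348\}$ inside $S(3,4,10)$ (the paper uses the labeling $a,b,c,d=1,4,2,5$ and writes the last two blocks as $1538,2438$) and then appeal directly to Lemma~\ref{lem:quadruple}. One minor aside: contrary to your closing remark, the paper notes that the completion-pair coincidence actually holds for \emph{every} quadruple of $S(3,4,10)$, not only for some---but this does not affect the validity of your argument.
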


\begin{proof}
  The Steiner quadruple system $S(3,4,10)$ has ten vertices and $30$
  hyperedges and is unique; see Table~\ref{tab:SteinerSys}.
  Note that $S(3,4,10)$ satisfies the assumptions of  Lemma~\ref{lem:quadruple}
  for $a=1, b=4,c=2,d=5,u=7,v=9,w=6,x=0,y=3$, and $z=8$, i.e.,
  it contains the set of edges $F = \{1245,1260,4560,1479,2579,1538, 2438 \}$.
  The claim follows from Lemma~\ref{lem:quadruple}.
\end{proof}

\begin{theorem}
  \label{thm:no-sqs}
  No Steiner quadruple system admits a non-crossing drawing using
  convex quadrilaterals in~3D.
  If the system contains at least $20$ vertices, it does not admit a
  non-crossing drawing using any quadrilaterals in~3D.
\end{theorem}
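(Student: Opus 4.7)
The plan is to reduce both parts of the theorem to Lemma~\ref{lem:quadruple} by locating the seven-block configuration $F = \{abcd, abuv, cduv, acwx, bdwx, adyz, bcyz\}$ inside every Steiner quadruple system. First, I would establish the combinatorial claim that in every $S(3,4,n)$, each block $B = abcd$ extends to a full copy of $F$. Fix the pair partition $\{ab,cd\}$ of $B$. Since every triple lies in a unique block, the blocks through $\{a,b\}$ other than $B$ define a perfect matching $M_{ab}$ on $V\setminus\{a,b,c,d\}$: a pair $\{x,y\}$ belongs to $M_{ab}$ exactly when $abxy$ is a block. The analogous matching $M_{cd}$ is obtained from blocks through $\{c,d\}$. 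A common edge $\{u,v\}\in M_{ab}\cap M_{cd}$ then witnesses that both $abuv$ and $cduv$ are blocks, and doing the same for the remaining two pair partitions of $B$ produces the pairs $\{w,x\}$ and $\{y,z\}$, completing $F$.

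With $F$ guaranteed inside every $S(3,4,n)$, part~(a) is immediate: a drawing using only convex quadrilaterals would make the seven blocks of $F$ simultaneously convex, which Lemma~\ref{lem:quadruple} forbids. For part~(b), I would apply a Ramsey-style argument to the 2-coloring of the block set by convexity type. For $n\ge 20$ there are $\binom{n}{3}/4\ge 285$ blocks and every block participates in many $F$-configurations; the goal is to show that every 2-coloring of the blocks must contain a monochromatic copy of $F$, at which point Lemma~\ref{lem:quadruple} again delivers the contradiction. A supporting ingredient here is Observation~\ref{obs:4inPlane}, which forbids five coplanar vertices and hence prevents two distinct blocks from sharing a supporting plane, adding extra rigidity to the geometric constraints behind each $F$-copy.

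The principal obstacle for part~(a) is proving that $M_{ab}\cap M_{cd}\ne\emptyset$ for every block and every pair partition, since two arbitrary perfect matchings on a common ground set need not share an edge. I would approach this by passing to the derived Steiner triple system $S_u$ at a well-chosen point $u\notin B$, reducing the question to whether the unique triples through $\{a,b\}$ and $\{c,d\}$ in $S_u$ share a third element; for the Fano plane this is the ``any two lines meet'' property of projective planes, and for larger derived systems I would either verify it by a direct counting argument over all admissible $u$ or rely on known Steiner-trade results. The principal obstacle for part~(b) is quantifying the abundance and overlap structure of $F$-copies in $S(3,4,n)$ precisely enough to extract a monochromatic $F$ from any 2-coloring, and to pin down that the threshold $n\ge 20$ suffices rather than a larger constant.
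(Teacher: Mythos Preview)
Your plan diverges sharply from the paper's argument, and the gaps you flag are real obstacles rather than routine details.

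The paper does not use Lemma~\ref{lem:quadruple} at all for Theorem~\ref{thm:no-sqs}. Instead it exploits the \emph{link} at a vertex: intersect every quadrilateral incident to a fixed vertex $v$ with a small sphere around $v$; the resulting arcs form a planar graph on at most $n-1$ vertices. For convex quadrilaterals, each one incident to $v$ can be split along the diagonal through $v$ into two triangles, so the link has $2\cdot (n-1)(n-2)/6=(n-1)(n-2)/3$ edges, which exceeds $3(n-1)-6$ once $n>8$; the remaining case $n=8$ is exactly Proposition~\ref{thm:noS348asQuads}. For arbitrary quadrilaterals the diagonal through $v$ may lie outside the polygon, so one is only guaranteed $(n-1)(n-2)/6$ edges; this beats the planar bound for $n\ge 18$, and the smallest admissible $n\ge 18$ is $20$. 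The thresholds $8$ and $20$ thus fall out of a simple edge count versus Euler's formula, not from any Ramsey or configuration-finding argument.

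Your route has two genuine problems. First, the claim that $M_{ab}\cap M_{cd}\neq\emptyset$ for every block and every pair partition is not known to hold in all Steiner quadruple systems, and your reduction to derived triple systems does not help: in an $S(2,3,n-1)$ two lines need not meet unless the system is a projective plane, so ``the lines through $ab$ and $cd$ share a point'' fails generically, and you would still need a global counting argument over all $u$ that you have not supplied. Second, the Ramsey-style plan for part~(b) is not an argument but a hope: you would need that every $2$-colouring of the blocks contains a monochromatic $F$, and there is no mechanism in your outline that singles out $n\ge 20$ rather than some much larger (or no) threshold. Observation~\ref{obs:4inPlane} adds nothing here, since it only rules out coplanar quadruples and does not interact with the convex/non-convex dichotomy. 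The link-graph approach bypasses all of this.
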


\begin{proof}
  Day and Edelsbrunner~\cite[Lemma 2.3]{de-ctchp-DCG94} used an
  approach similar to that of Carmesin (mentioned in
  footnote~\ref{ftn:carmesin}) to show that the number of triangles
  spanned by $n$ points in 3D is less than $n^2$ if no two triangles
  have a non-trivial intersection.  (A trivial intersection is a
  common point or edge.)  We need to redo their proof taking lower-order terms
  into account.  If a Steiner quadruple system $S(3,4,n)$ can be drawn
  using quadrilaterals in 3D, the intersection of these quadrilaterals
  with a small sphere around a vertex is a planar graph.  Recall that
  any $S(3,4,n)$ has $n$ vertices and $m=\binom{n}{3}/4$ quadruples.
  Let $v$ be any vertex.  Then~$v$ is incident to $4m/n=(n-1)(n-2)/6$
  quadrilaterals.
Suppose that there is a representation consisting of only convex quadrilaterals. Break each convex quadrilateral incident to $v$ into two triangles such that both triangles are incident to $v$.
The intersection of these triangles with a small sphere around $v$ yields a graph on $n-1$ vertices (that is, on all vertices but~$v$) with $(n-1)(n-2)/3$ edges. For $n>8$, this graph cannot be planar. This, together with Proposition~\ref{thm:noS348asQuads}, yields the first part of our claim.

The same approach proves the second part as well.
Suppose that there is a representation without any restrictions.
For a non-convex quadrilateral incident to $v$, it may or may not be possible to break it into two triangles such that both are incident to $v$. Here, we can only break the quadrilaterals (convex or non-convex) for which this splitting is possible.
After this step, the polygons incident to $v$ are either triangles or quadrilaterals. The intersection of these polygons with a small sphere around $v$ yields a graph that has at most $n-1$ vertices and at least $(n-1)(n-2)/6$ edges. Such a graph cannot be planar for $n \ge 18$. Since the first Steiner quadruple system with $n \ge 18$ vertices has $20$ vertices, the proof is complete.
\end{proof}

\subsection{Conclusion and Open Problems}
\label{app:discussion}

We conclude the paper by pointing out some possible directions for
further research.

\paragraph{Representing Graphs.}

Our general construction in Section~\ref{sub:general} implies that
every $n$-vertex graph of maximum degree~$\Delta$ can be represented
in~3D by monotone polygonal curves with at most $\Delta-2$ bends.  Is
there a natural class of graphs such that every graph~$G$ in that
class can be represented by such chains with fewer than $\Delta(G)-2$
bends?

\begin{open}
  Does some non-trivial class of graphs admit an intersection
  representation in~3D using polygonal chains with less than
  $\Delta(G)-2$ bends for any graph~$G$ in that class?
\end{open}

Furthermore, it is interesting to consider representations of specific graph with highly regular polygons.
In particular, we suggest the following extension of the result in
Theorem~\ref{thm:squares-of-cycle}.

\begin{open}
If $k \geq 5$ is odd, can $C_k^2$ be represented by touching unit squares?
\end{open}

\paragraph{Steiner Triple Systems.}  We have constructed non-crossing
drawings of the two smallest such systems, $S(2,3,7)$ and $S(2,3,9)$,
using triangles; see Proposition~\ref{prop:fano}.  What about larger
Steiner triple systems?

\begin{open}
  Does any Steiner triple system $S(2,3,n)$ with $n \ge 13$ admit a
  non-crossing drawing using triangles in 3D?
\end{open}

\paragraph{Steiner Quadruple Systems.}  We have shown that no Steiner
quadruple system admits a crossing-free drawing using convex
quadrilaterals and that Steiner systems with exactly 8 or with at
least 20 vertices do not admit crossing-free drawings using arbitrary
quadrilaterals; see Theorem~\ref{thm:no-sqs} and
Proposition~\ref{thm:noS348asQuads}.

\begin{open}
  Does any Steiner quadruple system with $n \in \{10,14,16\}$ vertices
  admit a non-crossing drawing using ``mixed'' quadrilaterals, that
  is, some convex, some not?
\end{open}

\paragraph{Projective Planes.}

Note that in Steiner quadruple systems many pairs of edges intersect
in two vertices.  In a projective plane, every pair of edges (called
\emph{lines}) intersects in exactly one vertex (\emph{point}).  So
maybe this is easier?  Recall that any projective plane fulfills the
following axioms:
\begin{enumerate}[(P1)]
\item Given any two distinct points, there is exactly one line
  incident to both of them. \label{enum:point-line}
\item Given any two distinct lines, there is exactly one point
  incident to both of them. \label{enum:line-point}
\item There are four points such that no line is incident to more than
  two of them. (Non-degeneracy axiom) \label{enum:non-degeneracy}
\end{enumerate}
Every projective plane has the same number of lines as it has points.
The projective plane of order $N$, $PG(N)$, has $N^2+N+1$ lines and
points and there are $N+1$ points on each line, and $N+1$ lines go
through each point.  Equivalently, we can see $PG(N)$ as the Steiner system
$S(2,N+1,N^2+N+1)$.

\begin{figure}
  \hfill
  {\small
  \begin{tabular}[c]{cccc}
    \toprule
    \multicolumn{4}{c}{$PG(3)$}\\
    \midrule
    A & B & C & D\\
    A & 1 & 2 & 3\\
    A & 4 & 5 & 6\\
    A & 7 & 8 & 9\\
    B & 1 & 4 & 7\\
    B & 2 & 5 & 8\\
    B & 3 & 6 & 9\\
    C & 1 & 5 & 9\\
    C & 2 & 6 & 7\\
    C & 3 & 4 & 8\\
    D & 1 & 6 & 8\\
    D & 2 & 4 & 9\\
    D & 3 & 5 & 7\\
    \bottomrule
  \end{tabular}
  }
  \hfill
  \begin{minipage}[c]{.5\linewidth}
    \includegraphics{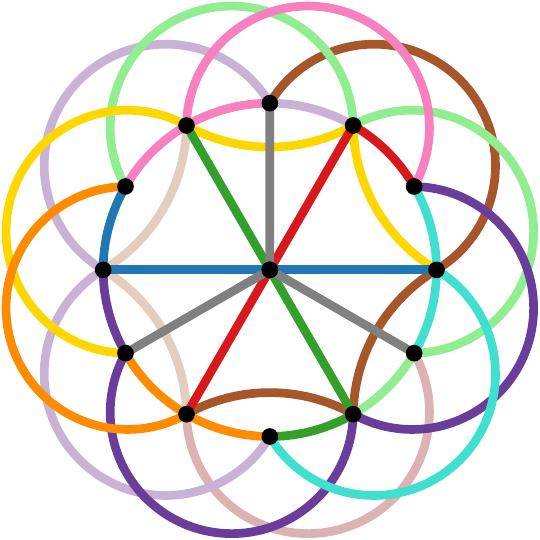}
  \end{minipage}
  \hfill\null

  \caption{The second smallest discrete projective plane $PG(3)$,
    which is a 4-regular 4-uniform hypergraph with 13 vertices and 13
    hyperedges.  The drawing was inspired by
    \url{https://puzzlewocky.com/games/the-math-of-spot-it/}.}
  \label{fig:projective}
\end{figure}

Note, however, that any contact representation of $PG(3)$ by convex
quadrilaterals contains a contact representation of $S(2,3,9)$ by
triangles: just drop one of the 13 quadruples of $PG(3)$ and remove
its four vertices from all quadruples.  This yields twelve triples
with the property that any pair of vertices is contained in a unique
triple~(P\ref{enum:point-line}).

\begin{observation}
  \label{obs:PG3}
  Suppose that there is a realization of $PG(3)$ with convex
  quadrilaterals, then no two quadrilaterals are coplanar in such a
  realization.
\end{observation}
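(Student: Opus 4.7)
The plan is to argue by contradiction: assume $Q_1$ and $Q_2$, the quadrilaterals representing two distinct lines $L_1,L_2$ of $PG(3)$, share a common supporting plane~$\Pi$. By axiom~(P\ref{enum:line-point}), $L_1$ and $L_2$ meet in a unique vertex~$p$, so write $L_1=\{p,a,b,c\}$ and $L_2=\{p,d,e,f\}$; all seven of these corner points then lie in~$\Pi$. The strategy is to use these seven coplanar points to manufacture a crossing-free straight-line drawing of~$K_{3,3}$ in~$\Pi$, which contradicts the non-planarity of~$K_{3,3}$.

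For each of the nine pairs $(u,v)\in\{a,b,c\}\times\{d,e,f\}$, axiom~(P\ref{enum:point-line}) gives a unique line $L_{uv}$ of $PG(3)$, and the corresponding convex quadrilateral $Q_{uv}$ has $u$ and $v$ among its four corners. By convexity the straight segment $uv$ lies entirely inside $Q_{uv}$, while both endpoints already lie in~$\Pi$. Together with the six points $\{a,b,c\}\cup\{d,e,f\}$, these nine segments form a straight-line drawing of $K_{3,3}$ in $\Pi$. Adjacent segments meet only at their shared endpoint, so it remains only to verify that no two of the nine segments cross in their interiors.

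The key step, and where I expect the main obstacle, is ruling out such crossings. Suppose segments $uv$ and $u'v'$ meet at a point $x$ interior to both. Then $x\in Q_{uv}\cap Q_{u'v'}$, and since $L_{uv}\ne L_{u'v'}$, axiom~(P\ref{enum:line-point}) says the two lines share exactly one vertex~$w$ of $PG(3)$. The contact-representation model then forces $Q_{uv}\cap Q_{u'v'}$ to consist of only the single corner point representing~$w$, so $x=w$. Because $x$ is interior to the segment $uv$ we have $w\notin\{u,v\}$, so $w$ is one of the two remaining corners of $Q_{uv}$; but $w$ also lies on the line through $u$ and $v$, producing three collinear corners of $Q_{uv}$ and contradicting the hypothesis that $Q_{uv}$ is a (non-degenerate) convex quadrilateral. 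Care is needed here to exclude degenerate convex quadrilaterals with three collinear corners, and this is precisely where the hypothesis of ``convex quadrilaterals'' does the work. With crossings ruled out, the nine segments together with $\{a,b,c,d,e,f\}$ form a crossing-free straight-line embedding of $K_{3,3}$ in~$\Pi$, the desired contradiction, so no two quadrilaterals in a realization of $PG(3)$ can be coplanar.
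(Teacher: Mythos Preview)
Your proposal is correct and takes essentially the same approach as the paper: both derive a planar straight-line drawing of $K_{3,3}$ in~$\Pi$ on the six non-shared corners of the two coplanar quadrilaterals, using convexity to route each edge $uv$ inside the unique quadrilateral $Q_{uv}$. You are in fact more explicit than the paper in arguing why this drawing is crossing-free; the paper simply asserts it.
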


\begin{proof}
  For the sake of contradiction, suppose that two quadrilaterals,
  $q_1$ and~$q_2$, lie in the same plane~$\Pi$ in a realization~$R$ of
  $PG(3)$ with convex quadrilaterals.  Every two quadrilaterals share
  exactly one vertex~(P\ref{enum:line-point}), hence, we can write
  $q_1$ and $q_2$ as $q_1=u_1u_2u_3w$ and $q_2=v_1v_2v_3w$.  Since
  every pair of vertices appears in exactly one
  quadrilateral~(P\ref{enum:point-line}), each pair $u_iv_j$ with $i,j
  \in \{1,2,3\}$ is contained in a different quadrilateral.  Since the
  quadrilaterals in~$R$ are convex, each line segment
  $\overline{u_iv_j}$ is contained in the unique quadrilateral
  containing vertices~$u_i$ and~$v_j$.  Since~$u_i$ and~$v_j$ lie
  in~$\Pi$, $\overline{u_iv_j}$ also lies in~$\Pi$.  As a result,
  $\Pi$ contains a planar (straight-line) drawing of~$K_{3,3}$ (with
  vertex set $\{u_1,u_2,u_3\} \cup \{v_1,v_2,v_3\}$), which yields the
  desired contradiction.
\end{proof}

Or is it perhaps more natural to represent $PG(3)$ by touching
tetrahedra?  We used an integer linear program to compute the poset
dimension of the 
vertex--hyperedge inclusion order of $PG(3)$, which turned out to
be~5.  Hence the method of Ossona de Mendez~\cite{o-rp-JGAA02} yields
a contact representation of $PG(3)$ by touching tetrahedra~-- but only
in~4D.

\section*{Acknowledgments.}

We are grateful to the organizers of the workshop Homonolo 2017, where
the project originates.  We thank G\"{u}nter Rote for advice regarding
strictly convex drawings of polygons on the grid, and we thank Torsten
Ueckerdt for bringing Ossona de Mendez' work~\cite{o-rp-JGAA02} to our
attention.  We are indebted to
Arnaud de Mesmay and Eric Sedgwick for pointing us to the lemma of
Dey and Edelsbrunner~\cite{de-ctchp-DCG94}, which yielded
Theorem~\ref{thm:no-sqs}.  We also thank Alexander Ravsky who pointed
us to the fact that the method of Ossona de Mendez needs~4D for
representing~$K_{13}$ as a 2-uniform hypergraph (with touching line
segments).  Last but not least we thank Oksana Firman for implementing
the above-mentioned integer linear program.

\bibliographystyle{abbrvurl}
\bibliography{abbrv,hyper,contact}
\end{document}